\newtheorem{assump}{Assumption}
\newtheorem{thm}{Theorem}
\newtheorem{lem}{Lemma}
\newtheorem{prop}{Proposition}
\DeclareMathOperator*{\argmin}{arg\,min}
\DeclareMathOperator{\tr}{tr}
\begin{document}
\setcounter{page}{0}
\title{\bf Individualized Group Learning}
\author{Chencheng Cai, Rong Chen and Min-ge Xie\thanks{
Chencheng Cai is a Ph.D student at Department of Statistics, Rutgers University, Piscataway, NJ 08854. E-mail: chencheng.cai@rutgers.edu.
Rong Chen is Professor at Department of
    Statistics, Rutgers University, Piscataway, NJ 08854. E-mail:
    rongchen@stat.rutgers.edu.
Minge Xie
    is Professor at Department of Statistics, Rutgers
    University, Piscataway, NJ 08854. E-mail:
    mxie@stat.rutgers.edu.
Rong Chen is the
    corresponding author. Chen's research is supported
in part by National Science Foundation
grants DMS-1503409, DMS-1737857 and IIS-1741390. Xie's research is supported in part by National Science Foundation grants DMS-1513483, DMS-1737857 and DMS-1812048.
}
\\ \vspace{0.2cm} {Rutgers University}}
\date{}
\maketitle

\newcommand{\addabstractandkeywords}{
\begin{abstract}
Many massive data are assembled through collections of information of a large number of individuals in a population. The analysis of such data, especially in the aspect of individualized inferences and solutions, has the potential to create significant value for practical applications. 
Traditionally, inference for an individual in the data set is either solely relying on the information of the individual or from summarizing the information about the whole population. However, with the availability of big data, we have the opportunity, as well as a unique challenge, to make a more effective individualized inference that takes into consideration of both the population information and the individual discrepancy. To deal with the possible heterogeneity within the population while providing effective and credible inferences for individuals in a data set, this article develops a new approach called the individualized group learning (iGroup). The iGroup approach uses local nonparametric techniques to generate an individualized group by pooling other entities in the population which share similar characteristics with the target individual, even when individual estimates are biased due to limited number of observations.  Three general cases of iGroup are discussed, and their asymptotic performances are investigated. Both theoretical results and empirical simulations reveal that, by applying iGroup, the performance of statistical inference on the individual level are ensured and can be substantially improved from inference based on either solely individual information or entire population information. 
The method has a broad range of applications. Two examples in financial statistics and maritime anomaly detection are presented. 
\end{abstract}

\noindent {\bf Keywords:} Similarity Measures, Clustering, Fusion Learning, Individualized Inference, Kernel Smoothing, Nonparametric, Bayesian Inference}
\addabstractandkeywords

%%%%%%%% Begin Anonymous Page
%\newpage
%\setcounter{page}{0}
%\author{}
%\makeatletter
%\def\@thanks{}
%\makeatother
%\maketitle
%\addabstractandkeywords
%%%%%%%% END Anonymous Page

\newpage
\section{Introduction}
With the massive data readily available in the digital and information era, advanced statistical learning methodologies for analysis of big data are in high demand. Traditional statistical methods are often used to discover the general rule of the population. 
However, in many applications we are also interested in an individual entity for personalized solutions or products. For instance, in precision medicine, each patient has his/her own traits. Therefore, it is crucial and beneficial to make individualized treatments and prescribe personalized medicine \citep{liu2016there,qian2011performance,zhao2012estimating,yang2012classification,collins2015new,wang2007statistics}. In business, the so-called 'Market of One' strategy that makes a customer feel that he or she is exclusive or preferred by the firm, becomes popular for companies to design personalized products. 
Indeed, individualized learning and inference matters in many applications.

\par
Since no two patients or two customers are exactly the same, heterogeneity often exists in a population. It poses a challenge to combine the data from different individuals, especially for making improved inferences in individualized learning. 
%One of the key challenge for combining data from different individuals is the existence of potential heterogeneity in population as no two patients are exactly the same in the personalized medical case. 
A class of conventional methods is to cluster/group individual entities into subgroups and, assuming homogeneity within each subgroup, then use the data in the same subgroup for statistical analysis 
\citep{jain1999data,xu2005survey,agrawal1998automatic, binder1978bayesian,ng1994cient, gan2007data, liao2005clustering, jain2010data}.
%\citep[][among others]{su2009subgroup, loh2016identification,lipkovich2017tutorial,wang2007statistics,jain1999data,xu2005survey,aggarwal2002redefining,agrawal1998automatic,alpert1994multi,ichino1994generalized,milligan1988study,binder1978bayesian,ng1994cient,gan2007data,liao2005clustering,koperski1995discovery,hartigan1985statistical,jain2010data,ernst2005clustering}. 
The clustering and grouping in the conventional methods are typically performed in \textit{a priori}. Such approaches have several disadvantages. Firstly, the constitution of subgroups often depends on a predetermined total number of subgroups, which is a parameter that is either difficult or not reliable to choose in practice. Secondly, since analytic outcomes and inference (e.g. estimated parameters and testing) are the same for all individuals in the same subgroup, such a procedure potentially diminishes hidden local structures. More importantly, in many cases, there may not be clearly-cut and well-divided subgroups in the population. In these situations, the conventional subgroup analysis may impose an artificial grouping structure to the population, which can potentially lead to large biases and invalid inference for many individuals. Another class of conventional methods is to assume mixture models, including classical hierarchical models and Bayesian nonparametric models \citep{duda1973pattern, lindsay1995mixture,figueiredo2000unsupervised, ferguson1973bayesian, antoniak1974mixtures, lo1984class,teh2005sharing}. Similar to the clustering method, the mixture models assume that the population contains several homogeneous subpopulations, but unlike clustering, there is no clear boundary between the subpopulations. 
However, inference on each individual is not the focus of such a procedure. It is often done as an afterthought, by estimating the mixture likelihood.
%In such a model, the parameters of the subpopulations are assumed to be latent and follow some distributions indexed by unknown hyper-parameters. Estimation of these hyper-parameters often involves the whole dataset and more computational effort. 
Furthermore, a mixture model may not be able to explain the population heterogeneity when the assumed latent structure is invalid. In addition, when given an observation, it is usually difficult to tell which subpopulation it belongs to.

\par
In this article, we propose a new method called individualized group learning, abbreviated as {\it iGroup}. Instead of grouping at the population level, the iGroup approach focuses on each individual and forms an individualized group for the target individual, by locating individuals that share similar characteristics of the target. It sidesteps aforementioned difficulties by forming an iGroup specifically for the target individual while ignoring other entities that have little in common with the target. Figure \ref{fig:group} demonstrates the difference between group identifications in a two-dimensional feature space. The left panel shows the result from a k-means clustering method with three groups. Each point is assigned with one cluster label. Data points having the same label are assumed to follow an identical statistical model, even though a large amount of heterogeneity may still exist among the individuals in the same group. The right panel demonstrates the individualized groups for two selected points (bold). Instead of assuming disjoint cluster regions, the individualized group, whose boundary is shown as a solid line, is specific and unique for each individual. Therefore, the laws for two individuals are generally different as their identified individualized groups are different. iGroup corresponds essentially to a local nonparametric approach.
\begin{figure}[!hptb]
\centering
\includegraphics[width=\textwidth]{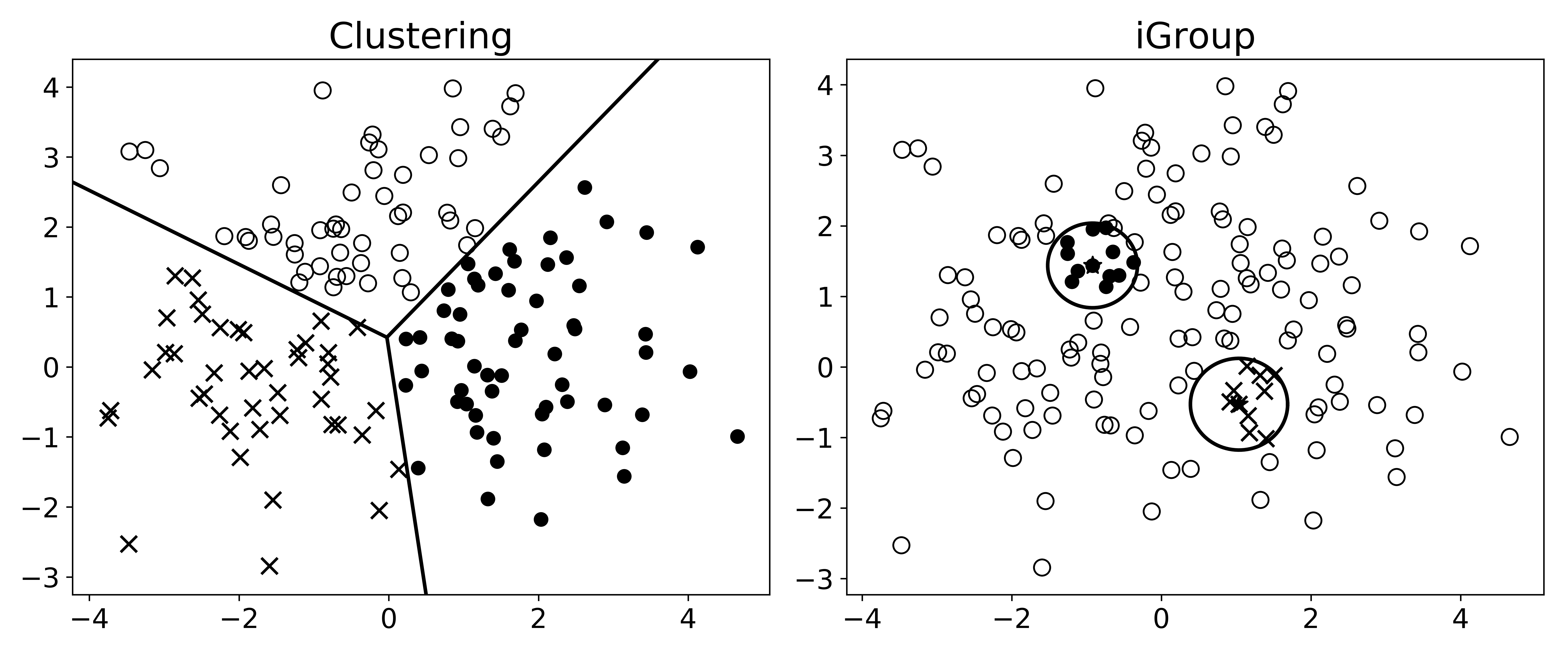}
\caption{(Left) Convention clustering method divides the population into several predetermined number of groups. (Right) iGroup method find the individualized group for any given target individual.}\label{fig:group}
\end{figure}

\par
In this paper, two sets of information are utilized in our proposed framework to define similarity and to form groups. 
One is individual level estimator $\hat\theta_k$, which is a direct estimation of $\theta_k$, the parameter of interest, for each individual % $k=1,2,\dots, n$
$k \in \{0, 1, \ldots, K\}$ in a parametric model with observation $\bm x_k$, without any grouping. The other is exogenous information $\bm z_k$, which is observed outside of the parametric model but can reveal similarity between the parameters. Both $\hat\theta_k$ and $\bm z_k$ can provide useful information in identifying groups so that closeness in the space of $(\hat\theta_k, \bm z_k)$ implies closeness in the space of $\theta_k$. Depending on the feasibility and availability of the two information sets, iGroup can be constructed based on three different information sets: $\{\hat\theta_k\}$, $\{\bm z_k\}$, $\{\hat\theta_k, \bm z_k\}$. They will be discussed in detail in later sections. 

% \par
% Nearest neighbor methods \citep{Altman1992nonparametric, Hall2008kth} also consider conducting inference with respect to a set of similar individuals. The iGroup method differs from the nearest neighbor method in the following ways. Firstly, while a size-k neighborhood in k-nearest neighbor methods is usually chosen based on the similarity of covariates, iGroup method measures similarity based on both the exogenous variable $\bm z$ and the individual level point estimate $\hat\theta$. Secondly, instead of treating every individual in the neighborhood equally as in nearest neighbor methods, iGroup method allows for assigning different weights for different individuals based on their similarity to the target one, and we strongly suggest using such a weight function as described in Section \ref{sec: agg-weight}. 

\par
To ease our notation, from now on, let us say our goal is to provide an estimation on $\theta_0$ for the individual 0. The estimator is constructed with a specified loss function $L$, the observations $(\bm x_0, \bm z_0)$ on individual 0 and all other available observations $\mathcal D_x=\{\bm x_k\}_{k=1}^K$ and $\mathcal D_z=\{\bm z_k\}_{k=1}^K$.
% Specifically, $\Theta_0$ minimizes the expected loss given the observation $(\bm x_0, \bm z_0)$ and other individuals' available data $\mathcal D_x$ and $\mathcal D_z$.
% $$
% \Theta_0(\mathcal D_x, \mathcal D_z;L) = \argmin_{\theta}\  \mathbb E[L(\theta, \theta_0)\mid \bm x_0, \bm z_0, \mathcal D_x, \mathcal D_z].
% $$
% $\Theta_0(\mathcal D_x, \mathcal D_z;L)$ corresponds to the Bayes estimator when $\theta_0$ is assumed to follow a 'prior' distribution induced from the population-wise observations $\mathcal D_x$ and $\mathcal D_z$. In particular, under squared loss $L(\theta, \theta_0) = (\theta- \theta_0)^2$, $\Theta_0(\mathcal D_x, \mathcal D_z;L)$ corresponds to the 'posterior' mean. 
% To be specific, our goal is to provide a Bayesian point estimation for $\theta_0$ for the individual 0, denoted by $\Theta(\theta_0| {\cal D}_A)$,  given available data information ${\cal D}_A$ that may include the observations $\bm x_0$ and $\bm z_0$ plus perhaps information from other relevant individuals.
% The individualized group with relevant individuals is used to form an aggregated estimator that can approximate the corresponding Bayes estimator asymptotically. 
By focusing on individualized local structures, the proposed iGroup learning is robust and effective for handling heterogeneity arising from diverse sources in big data, and it is ideally suited for specific objective-oriented applications in individualized inference. Additionally, in terms of computation, by ignoring a large number of irrelevant entities and zooming directly to the relevant individuals, the iGroup learning is parallel in nature and can scale up better for big data. In this paper, we investigate the validity and theoretical property of iGroup learning and provide simulation studies and applications to demonstrate the grouping effectiveness of the proposed methodology. 

\par
There are also other methods that borrow strength from others to strengthen inference results for the target individual. 
A related classical approach is the k-nearest neighbor methods  (k-NN) \citep{Altman1992nonparametric, Hall2008kth}. The main different between the k-NN and the iGroup methods is the covariates used for identifying similarity and near neighbors. The k-NN method identifies neighborhoods usually based on covariates often without measurement errors, for example, the regressors in a regression problem. In iGroup, the covariates used for grouping, such as $\bm z$ and $\hat\theta$, are both assumed to have measurement errors. Especially, the individual level point estimator $\hat\theta$ has never been used to measure similarity in nearest neighbor algorithms. Additionally, while the k-NN method treats every instance in the neighborhoods as equally important, the iGroup method allows different weight assignments for different individuals, which brings more flexibility. We recommend to use new weight functions in Section \ref{sec: agg-weight} to incorporate the similarity between neighbor individuals and the target one. Theoretically, when number of individuals $K$ approaches infinity, the radius of neighborhood identified by the k-NN method shrinks to zero as a result of bias-variance tradeoff. However, in iGroup approach, the radius of the target neighborhood does not necessarily shrink to zero, because the measurement error in $\hat\theta$ always exists due to finite sample size $n_k=O(1)$. 

\par
The most recent related development is perhaps the individualized fusion learning ($i$-Fusion) approach proposed in \cite{Shen2018ifusion}. The $i$Fusion approach is developed under the asymptotic settings that $n_k\rightarrow \infty$, $n_k / \sum_{i=1}^K n_k =O(1)$ and $K$ is large but finite, where $n_k = |\bm x_k|$ is the effective sample size for individual $k$. The requirement that $n_k\rightarrow \infty$ ensures the individual studies are not biased. Furthermore, the target neighbor, referred to as {\it clique} in the  
$i$Fusion approach, is defined through the parameter space using only $\{\hat\theta_k\}$'s. \cite{Shen2018ifusion} has demonstrated that the $i$Fusion approach is effective with good theoretical properties (including consistency, oracle efficiency and asymptotic normality) under their assumed setting. 
The iGroup approach in this article, however, focuses on a different setting where each individual has only a limited number of observations with $n_k=O(1)$ and infinite numbers of individuals are available as $K\rightarrow \infty$, under which $i$-Fusion is not applicable. A key development of the proposed iGroup method is that we need to make the efforts to overcome the biases from individual estimates. Furthermore, in addition to borrow information through  $\{\hat\theta_k\}$, we also investigate how we can effectively borrow strength from other individuals when the information sets $\{\bm z_k\}$ and  $\{\hat\theta_k, \bm z_k\}$ are available.
\par

The rest of the article is arranged as below.  In Section \ref{sec:general-framework}, we introduce the general framework of iGroup learning. Section \ref{sec:theoretical-results} focuses on three different information sets with asymptotic analysis and theoretical results. Section \ref{sec:simulation} provides three simulated studies and Section \ref{sec:examples} provides two real data applications. Section \ref{sec:conclusion} concludes.

\section{General Framework}\label{sec:general-framework}
\subsection{Problem setup}\label{sec:framework}
Assume for each individual $k\in \{0, 1, 2,\dots, K\}$, we observe $(\bm x_k, \bm z_k)$, where observations ${\bm x}_k$ and ${\bm z}_k$ differ in their utilities. Specifically,  ${\bm x}_k$ is the observed data that is directly related to the parameter of interest $\theta_k$ at the individual level, with a known distribution $\bm x_k\sim p(\cdot | \theta_k)$. The exogenous variable${\bm z}_k$ serves as a proxy that reveals the similarity among $\theta$'s in the population level. Specifically, we assume that $\bm z_k$ is related to an unknown parameter $\bm\eta_k$ through an unknown distribution $q(\cdot; \bm \eta_k)$, and the parameter $\theta$ is an unknown continuous function of $\bm\eta$, i.e. $\theta = g(\bm\eta)$, where the function $g(\cdot)$ is not necessarily an one-to-one mapping. The continuity of $g(\cdot)$ guarantees that closeness in $\bm\eta$ implies closeness in $\theta$. The hierarchical structure and the relationship among the variables are demonstrated in Figure \ref{fig: diagram},
% \begin{figure}[!htpb]
% \centering
% $$\begin{aligned}[c]
%     \theta_k &\sim \pi(\cdot),\\
%     \bm x_k |\theta_k &\sim p(\cdot;\theta_k),
% \end{aligned}
% \hspace{60pt}
% \begin{aligned}[c]
%     \theta_k &= g(\bm\eta_k),\\
%     \bm z_k|\bm\eta_k &\sim q(\cdot;\bm\eta_k).
% \end{aligned}
% \hspace{60pt}
% \begin{tikzpicture}[baseline= (a).base]
% \node[scale=1.2] (a) at (0,0){
% \begin{tikzcd}
% \theta_k \arrow{d} & \bm\eta_k \arrow{d} \arrow{l}\\
% \bm x_k & \bm z_k 
% \end{tikzcd}};
% \end{tikzpicture}
% $$
% \caption{Hierarchical structure and parameter diagram.}\label{fig: diagram}
% \end{figure}
\begin{figure}[!htpb]
\centering
\begin{tabular}{c@{\hspace{60pt}}c@{\hspace{60pt}}c}
$\begin{aligned}
    \theta_k &\sim \pi(\cdot),\\
    \bm x_k |\theta_k &\sim p(\cdot;\theta_k),
\end{aligned}$ & $\begin{aligned}[c]
    \theta_k &= g(\bm\eta_k),\\
    \bm z_k|\bm\eta_k &\sim q(\cdot;\bm\eta_k).
\end{aligned}$ &
$$\begin{tikzpicture}[baseline= (a).base]
\node[scale=1.2] (a) at (0,0){
\begin{tikzcd}
\theta_k \arrow{d} & \bm\eta_k \arrow{d} \arrow{l}\\
\bm x_k & \bm z_k 
\end{tikzcd}};
\end{tikzpicture}$$ \\
$\bm x$ model & $\bm z$ model & diagram
\end{tabular}
\caption{Hierarchical structure and parameter diagram.}\label{fig: diagram}
\end{figure}
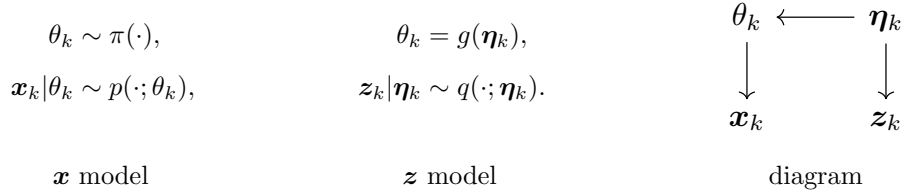
where $\pi(\cdot)$ is an unknown (prior) population distribution of $\theta$, which may be heterogeneous in nature. 
Although $\pi(\cdot)$ is unknown and unspecified, it appears in theoretical calculations throughout  
 the theoretical analysis in this paper. 
Without further clarification, all unconditioned expectations $\mathbb E[\cdot]$ are assumed to take over all random variables including $\theta_k$, which follows the unknown prior $\pi(\cdot)$. Posterior expectations on $\theta$ conditioned on certain observed information are explicitly noted with $\pi$ in the subscript such as $\mathbb E_\pi [\theta_0\mid \hat\theta_0]$.
The distribution $p(\cdot;\theta_k)$ is known except the parameter $\theta_k$, but both the function $g(\cdot)$ and the distribution $q(\cdot;\cdot)$ are unknown.
The role of the exogenous variable ${\bm z}_k$ will be discussed further in later sections. In some cases ${\bm z}_k $ may not be available. 
\par
One example of the above setup is that $\bm x_k$ is the daily stock price returns of company $k$, which follows a $Normal(0, \theta_k^2)$ distribution and $\bm z_k$ is the company's characteristics (e.g. sectors, capital sizes, financial exposure, etc), which is related to stock volatility $\theta_k$. Another example is that $\bm x_k$ is a binary indicator whether individual $k$ has a certain disease and $\bm z_k$ is the individual's health indices such as weight, height, blood pressure, etc., where the underlying $\theta_k=P(\bm x_k=1)$ is the probability of infection.
% \begin{figure}[!hptb]
% \centering
% \begin{tikzcd}
% \bm x_k & \bm z_k \\
% \theta_k \arrow{u} & \eta_k \arrow{u} \arrow{l}
% \end{tikzcd}
% \caption{Parameter diagram.}\label{fig: diagram}
% \end{figure}
\par
Denote by $C_0 (\epsilon) = \{k|\tilde d(\theta_k, \theta_0)<\epsilon, k=0,\dots, K\}$ an $\epsilon$-neighborhood (or a {\it clique}) of individual 0, where $\tilde d(\cdot, \cdot)$ is a distance/similarity measure and $\epsilon$ is the threshold value. Thus, the clique $C_0 (\epsilon)$ is a set of indexes of individuals that are similar to individual 0. 
In our model development, we impose two regularity assumptions as below. 
% These two  assumptions can often be easily satisfied. \\
\begin{assump}[Dense Assumption]\label{assump:dense}
% Denote the neighborhood or a clique of the individual $i$, which is a set of the index of individuals that are similar to the $i$-th individual, as $C_i (\epsilon) = \{k|d(\bm z_k, \bm z_i)<\epsilon, k=0,\dots, K\}$, where $d(\cdot, \cdot)$ is a distance/similarity measure and $\epsilon$ is the threshold value. 
There exists a constant $d\geqslant 1$ such that for all $i=1,\dots, K$, $|\mathcal C_0(\epsilon)|\asymp K\epsilon^d$ in probability when $K\rightarrow \infty, \epsilon\rightarrow 0$.
\end{assump}
\begin{assump}[Smooth Parameter Assumption]\label{assump:smooth}
There exists a positive constant $\kappa$, such that for all $\theta,\theta'\in\Omega_\theta$
$$\sup_{\bm x} |p(\bm x;\theta)-p(\bm x;\theta')|\leqslant\kappa \|\theta-\theta'\|,$$
where $\|\cdot\|$ is a metric on $\Omega_\theta$.
\end{assump}

The dense assumption suggests that individual 0 of interest is not isolated from other individuals, i.e. for arbitrarily small $\epsilon$, there are a sufficiently large number of other individuals in its neighborhood as $K \to \infty$. The smooth parameter assumption guarantees that whenever $\theta$ and $\theta'$ are close, the distributions of $\bm x$ and $\bm x'$ induced from $\theta$ and $\theta'$,  respectively, are close to each other.
%, so does reduction of $\bm x$, e.g. the sufficient statistics for $\theta$. 
Under these two assumptions, it is beneficial to aggregate information from the neighborhood to estimate $\theta$ since one can always find sufficient number of similar individuals in the neighborhood of individual $\theta$. A key consideration in this aggregation is the familiar bias-variance trade-off --- aggregation over a larger group increases the sample size thus reduces estimation variance, but it also brings bias.

\subsection{Aggregated estimation in iGroup}\label{sec: agg-weight}
There are two common methods to aggregate information by creating `pooled' estimators for $\theta_0$. 
The first approach constructs a weighted estimator $\hat\theta_0^{(c)}(\bm x_0, \bm z_0, \mathcal D_x, \mathcal D_z)$ for the target individual 0, directly using the point estimators $\hat \theta_k$ of other individuals based on $\bm x_k$.
%, which do not necessarily lie in the neighborhood of the individual $i$ as 0 weight can be assigned. 
The second approach aggregates objective functions $M_k(\theta) = M_k(\theta, \bm x_k)$ of other individuals, where the point estimator $\tilde\theta_0^{(c)}$ is obtained by optimizing an aggregated objective function. Specifically, these two methods can be formulated as 
\begin{align}
&(\text{Aggregating estimators}) & \hat\theta_0^{(c)} &=\dfrac{\sum_{k=0}^K\hat\theta_kw(k;0)}{\sum_{k=0}^Kw(k;0)},\label{eq: aggtheta}\\
&(\text{Aggregating objective functions}) & \tilde\theta_0^{(c)} &=\argmin_\theta \sum_{k=0}^KM_k(\theta)w(k;0),\label{eq: aggm}
\end{align}
where $w(k;0)$ is the weight assigned to individual $k$ when constructing iGroup estimator for individual 0. 

The weight $w(k;0)$ is crucial for the aggregated estimators as it controls how much information is borrowed from other individuals. We propose to incorporate both individual level estimator $\hat\theta_k$ and exogenous observation $\bm z_k$ into the weight function as both can provide useful information of $\theta_0$. Specifically, let
\begin{equation}\label{eq: weight0}
    w(k; 0) = w(\hat\theta_k, \bm z_k;\hat\theta_0, \bm z_0) = w_1(\bm z_k, \bm z_0)w_2(\hat\theta_k, \hat\theta_0|\bm z_0, \bm z_k).
\end{equation}
The weight is decomposed into two parts. The first part $w_1(\bm z_k, \bm z_0)$ measures the similarity between $\bm z_k$ and $\bm z_0$, and can be a kernel function
\begin{equation}\label{eq: weight1}
    w_1(\bm z_k, \bm z_0) =\mathcal K_1\left(\dfrac{\|\bm z_k-\bm z\|}{b_1}\right),
\end{equation}
When $\mathcal K_1$ has a finite support, the weight function has a hard grouping structure --- individuals lying far enough from individual 0 are not considered at all. Otherwise, it has a soft grouping structure such that dissimilar individuals are assigned with non-zero but tiny weights.
\par 
The second part $w_2(\hat\theta_k, \hat\theta_0|\bm z_0)$ measures the similarity between $\hat\theta$'s. But unlike $w_1$, using a distance measure such as $\mathcal K_2(\| \hat\theta_k - \hat\theta_0\|/b_2)$ is not a good practice, since it ignores the error in $\hat\theta_0$ and  $\hat\theta_k$ and $\hat\theta_0$ may be biased. Note that when $K\rightarrow \infty$ and $b_2\rightarrow 0$, the kernel concentrates on a smaller and smaller area adjacent to $\hat\theta_0$. In this area, aggregating individual $\hat\theta_k$ will not improve the estimation of $\theta_0$. An example of one-dimension case is shown in Figure \ref{fig: wrong_kernel}. Vertical bars mark the locations of $\hat\theta_k$. When $\hat\theta_0$ is away from its target value $\theta_0$, a small bandwidth $b_2$ tends to give large weights to individuals in a local region around $\hat\theta_0$. Aggregating these individual $\hat\theta_k$ in such a local region will not correct the bias $\hat\theta_0-\theta_0$.

\begin{figure}
    \centering
    \includegraphics[width=0.95\textwidth]{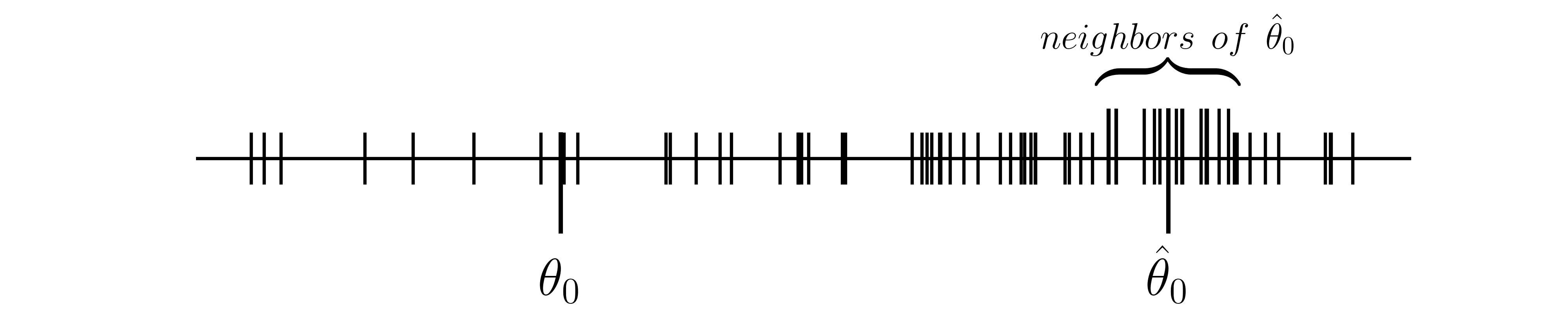}
    \caption{A one-dimension % population
    %with
    example 
    in which
    $\hat\theta_0$ is
    % far 
    away from $\theta_0$. %iGroup 
    If one naively  select individuals according  to  $\hat\theta_0$ and $\hat\theta_k$ directly, individuals  adjacent to $\hat\theta_0$, but not those close to $\theta_0$, are often selected.}
    \label{fig: wrong_kernel}
\end{figure}

\par
We propose the following weight function that considers the distribution $p(\hat\theta|\theta)$ instead of the point estimator $\hat\theta$. Specifically, let 
\begin{equation}
w_2(\hat\theta_k,\hat\theta_0| \bm z_0, \bm z_k) = \dfrac{\int p(\hat\theta_k|\theta)p(\hat\theta_0|\theta)p(\theta|\bm z_0)d\theta}{p(\hat\theta_k|\bm z_k)p(\hat\theta_0|\bm z_0)}. \label{eq: weight}
\end{equation}
Notice that, the posterior distribution of $\theta_0$, given $(\hat\theta_0, \bm z_0)$, is 
$$p(\theta_0|\hat\theta_0, \bm z_0) = p(\theta_0, \hat\theta_0|\bm z_0)/p(\hat\theta_0|\bm z_0) \allowbreak = \allowbreak p(\hat\theta_0|\theta_0)p(\theta_0|\bm z_0)/p(\hat\theta_0|\bm z_0).$$ 
If $\theta_k \equiv \theta_0$ (hence $\hat\theta_k$ provides useful information about $\theta_0$), then the predictive distribution of $\hat\theta_k$, given 
$(\hat\theta_0, \bm z_0)$,  is
$$p(\hat\theta_k | \hat\theta_0, \bm z_0) = \int p(\hat\theta_k|\theta)p(\theta|\hat\theta_0, \bm z_0)d\theta = \dfrac{\int p(\hat\theta_k|\theta)p(\hat\theta_0|\theta)p(\theta|\bm z_0)d\theta}{p(\hat\theta_0|\bm z_0)}.$$
% It measures the likelihood that $\hat\theta_k$ shares the same parameter $\theta_k$ with individual 0.
Thus, the weight function $w_2(\hat\theta_k, \hat\theta_0|\bm z_0, \bm z_k)$ in (\ref{eq: weight}) is the Radon-Nikodym derivative between the predictive distribution $p(\hat\theta_k|\hat\theta_0, \bm z_0)$ and the sampling distribution $p(\hat\theta_k | \bm z_k)$.
% $$w_2(\hat\theta_k,\hat\theta_0| \bm z_0)=\dfrac{p(\hat\theta_k|\hat\theta_0, \bm z_0)}{p(\hat\theta_k | \bm z_k)},$$
As a result, for any measurable function $h(\cdot)$, we have
$$\mathbb E_{p(\hat\theta_k|\bm z_k)} [h(\hat\theta_k)w_2(\hat\theta_k,\hat\theta_0|\bm z_0, \bm z_k)] = \mathbb E_{p(\hat\theta_k|\hat\theta_0, \bm z_0)}[h(\hat\theta_k)].$$
That is, the weighted expectation of $h(\hat\theta_k)$ under the sampling distribution $p(\hat\theta_k |\bm z_k)$ equals to its expectation under the predictive distribution $p(\hat\theta_k|\hat\theta_0, \bm z_0)$ if $\theta_k =\theta_0$. 
This property brings invariance under different sampling distributions. More importantly, it shows that the weighted averages, such as (\ref{eq: aggtheta}) and (\ref{eq: aggm}), estimates the expectations under the predictive distribution. This gives the iGroup estimators promising asymptotic properties as we will discuss later in Section \ref{sec:theoretical-results}.
\par
% There is no bandwidth parameter in the weight formula in (\ref{eq: weight}). 
The shape (thin or flat) of the weight $w_2(\cdot)$ as a function of $\hat\theta_k$ does not change with the number of individuals $K$. However, the shape is influenced by the variation (accuracy) of $\hat\theta$. The larger the variance of $\hat\theta$ is, the flatter the weight function tends to be. If $\hat\theta_k$ is estimated without any measurement error, the weight $w_2(\hat\theta_k, \hat\theta_0|\bm z_0, \bm z_k)$ is proportional to the indicator function $I_{\{\hat\theta_k= \hat\theta_0\}}$. It reduces to the case in which the individual estimator $\hat\theta_0$ or the individual objective function $M_0(\theta)$ is used without grouping.

\subsection{Evaluating the weight functions}\label{sec:evaluate-weight}
% Given the form of weight function (\ref{eq: weight}), it remains to calculate  the weight function directly. The first thing is to consider 

The weight function
 $w_1(\bm z_k, \bm z_0)$ in (\ref{eq: weight1}) can be directly evaluated. Similar to a bandwidth selection problem for kernel smoothing, one can choose the bandwidth $b_1$ for $w_1(\bm z_k, \bm z_0)$ in (\ref{eq: weight1}) by either using the plug-in method \citep{chiu1991bandwidth} or through cross-validation procedure. The plug-in bandwidth is proportional to $K^{-\frac{1}{d+4}}$ (see Section \ref{sec:theoretical-results}). Also, the leave-one-out cross validation process gives an empirical optimal bandwidth, as discussed in Section \ref{sec:bandwidth-selection}. 
\par
The evaluation of the weight function $w_2(\hat\theta_k,\hat\theta_0\mid \bm z_0, \bm z_k)$ in (\ref{eq: weight}) is more complicated, since the conditional probability $p(\hat\theta|\bm z)$ and the integral $\int p(\hat\theta_0|\theta)p(\hat\theta_k|\theta)p(\theta|\bm z_0)d\theta$ are unknown as the relationship between $\theta$ and $\bm z$ is not explicit. We propose an approximation method to evaluate $w_2(\hat\theta_k,\hat\theta_0\mid \bm z_0, \bm z_k)$ below. 
\par
Denote the estimator of $\theta_k$ and the observed exogenous variable $\bm z_k$ as the tuple
$(\hat\theta_k, \bm z_k), k=0,\dots, K.$ 
To calculate the weight in (\ref{eq: weight}), we treat them as $K+1$ samples from the joint distribution of $(\hat\theta, \bm z)$. We use the kernel method to estimate the conditional probability $p(\hat\theta\mid \bm z)$ nonparametrically by
$$\hat p(\hat\theta|\bm z) = \dfrac{\displaystyle\sum_{j=0}^K \mathcal K_1\left(\dfrac{\|\bm z-\bm z_j\|}{b_1}\right)\mathcal K_2\left(\dfrac{\|\hat\theta-\hat\theta_j\|}{b_2}\right)}{\displaystyle\sum_{j=0}^K \mathcal K_1\left(\dfrac{\|\bm z-\bm z_j\|}{b_1}\right)},$$
where $\mathcal K_1, \mathcal K_2$ are two kernel functions with $b_1$, $b_2$ as the corresponding bandwidths. To estimate the integral in (\ref{eq: weight}), we use the interpretation discussed above that it is the conditional distribution $p(\hat\theta_k\mid \hat\theta_0, \bm z_0)$ given $\theta_k=\theta_0$. Hence we need samples from the joint distribution of $(\hat\theta,\hat\theta',\bm z)$ observed from the same individual with parameter $\theta$. However, this is infeasible because in our problem setting, no two individual share the same true parameter $\theta$ and for each individual only one $\hat\theta$ is observed. To generate samples from such a distribution, we consider a bootstrap method. 
Denote $\hat\theta^{(1)}_k$ and $\hat\theta^{(2)}_k$ as the two bootstrap estimators for $\theta_k$, obtained by re-sampling $\bm x_k$ with replacement (not applicable when $\bm x_k$ has few observations). Then $(\hat\theta^{(1)}_k, \hat\theta^{(2)}_k, \bm z_k), k=0,\dots,K$ is an approximate sample of $(\hat\theta, \hat\theta', \bm z)$, guaranteeing $\hat\theta^{(1)}_k, \hat\theta^{(2)}_k, \bm z_k$ are generated from the same individual $k$. Therefore the integral can be estimated by 
$$\int p(\hat\theta_0|\theta)p(\hat\theta_k|\theta)p(\theta|\bm z_0)d\theta\approx \dfrac{\displaystyle\sum_{j=0}^K\mathcal K_1\left(\dfrac{\|\bm z_0 - \bm z_j\|}{b_1}\right)\mathcal K_2\left(\dfrac{\|\hat\theta_0 - \hat\theta^{(1)}_j\|}{b_2}\right)\mathcal K_3\left(\dfrac{\|\hat\theta_k - \hat\theta^{(2)}_j\|}{b_3}\right)}{\displaystyle\sum_{j=0}^K\mathcal K_1\left(\dfrac{\|\bm z_0 - \bm z_j\|}{b_1}\right)},$$
where $\mathcal K_1, \mathcal K_2, \mathcal K_3$ are three kernel functions with $b_1,b_2,b_3$ as the corresponding bandwidths. The bandwidths can be selected by either minimizing asymptotic mean integrated squared error (AMISE) or a rule-of-thumb bandwidth estimator. This estimation of the integral is an approximation that requires $K$ to be sufficiently large. %Moreover, $\hat\theta^{(1)}$ and $\hat\theta^{(2)}$ are generally not independent conditioning on $\theta$, but they are approximately conditionally independent when the number of observations in $\bm X$ is sufficiently large. 
\par
%After estimating both the numerator and the denominator, the only thing left 

\section{Theoretical Results}\label{sec:theoretical-results}
%The general framework of iGroup considers both $\hat\theta$ and $\bm z$ for the similarity measure. However, for some practical applications, $\hat\theta$ or $\bm z$ may not be feasible. The situations that $\hat \theta$ is feasible or $\hat\theta$ is too noisy to be considered are discussed in Case 1 and Case 2. One example for infeasible $\hat\theta$ happens when each $\bm x_k$ has only one observation, resulting in the failure to construct the estimate of variance $\hat\sigma^2$ and the estimate of the 5 percentage quantile $\hat Q_{0.05}$. Another case that $\hat \theta$ is ignored happens when $\hat \theta$ is very noisy, contributing little about the accuracy but requiring intensive computation, because our weights on $\hat \theta$ is only an approximation based on bootstrap. If no exogenous data resources could be found, $\bm z$ cannot be constructed (Case 3). For the above cases, iGroup need to be constructed solely on $\hat\theta$ or $\bm z$, whichever is available. Also, the complete general framework with both $\hat\theta$ and $\bm z$ feasible is discussed in Case 4.  

In this section, we consider several model settings for which we apply the proposed iGroup method and discuss their corresponding theoretical properties, especially in terms of their asymptotic performance. In particular, we first define a target estimator $\Theta_0$ that minimizes the Bayes risk, 
%based on risk decomposition, 
and then investigate the asymptotic performance of iGroup estimators in (\ref{eq: aggtheta}) and (\ref{eq: aggm}) in approximating the target estimator $\Theta_0$. We also quantify the bias and variance of iGroup estimators as well as the target estimator $\Theta_0$ in term of estimating $\theta_0$. Throughout this paper, we consider the asymptotic framework that the number of individuals $K$ goes to infinity, while the number of observations for each individual $n$ is fixed and finite.
\subsection{Risk decomposition and the target estimator}
We are interested in making inference about individual 0, with given data information $\mathcal D_x, \mathcal D_z$ that may include the observations $\bm x_0$ and $\bm z_0$ plus information from other relevant individuals. Let $\delta_0(\mathcal D_x, \mathcal D_z)$ be a point estimator for $\theta_0$, which is constructed with information sets $\mathcal D_x$ and $\mathcal D_z$. The iGroup estimator $\hat\theta_0^{(c)}$ in (\ref{eq: aggtheta}) is such an estimator. Similarly, $\delta_0(\mathcal D_x)$ and $\delta_0(\mathcal D_z)$ are point estimators constructed solely based on either $\mathcal D_x$ or $\mathcal D_z$. Under squared loss, the overall risk of $\delta_0$ in estimating $\theta_0$  can be decomposed into two nonnegative parts: the expected squared error of $\delta_0$ in estimating the corresponding posterior mean and the overall risk of the posterior mean itself, as shown in Proposition \ref{thm: risk-decomposition-simple-l2}.
\begin{prop}\label{thm: risk-decomposition-simple-l2}
Suppose $\theta_0$ has a prior distribution $\pi(\cdot)$. Under squared loss, we have the following overall risk decomposition.
\begin{align*}
    \mathbb E[(\delta_0(\mathcal D_x, \mathcal D_z) -\theta_0)^2] &= \mathbb E[(\delta_0(\mathcal D_x, \mathcal D_z) -\mathbb E_\pi[\theta_0\mid \bm x_0, \bm z_0])^2] + \mathbb E[(\mathbb E_\pi[\theta_0\mid \bm x_0, \bm z_0] - \theta_0)^2],\\
    \mathbb E[(\delta_0(\mathcal D_x) -\theta_0)^2] &= \mathbb E[(\delta_0(\mathcal D_x) -\mathbb E_\pi[\theta_0\mid \bm x_0])^2] + \mathbb E[(\mathbb E_\pi[\theta_0\mid \bm x_0] - \theta_0)^2],\\
    \mathbb E[(\delta_0( \mathcal D_z) -\theta_0)^2] &= \mathbb E[(\delta_0(\mathcal D_z) -\mathbb E_\pi[\theta_0\mid\bm z_0])^2] + \mathbb E[(\mathbb E_\pi[\theta_0\mid\bm z_0] - \theta_0)^2],
\end{align*}
where $\mathbb E_\pi[\theta_0\mid \bm x_0, \bm z_0]$, $\mathbb E_\pi[\theta_0\mid \bm x_0]$ and $\mathbb E_\pi[\theta_0\mid\bm z_0]$ are the posterior means under prior $\pi(\cdot)$ and observations $(\bm x_0, \bm z_0)$, $\bm x_0$ and $\bm z_0$ correspondingly.
\end{prop}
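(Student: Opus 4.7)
The plan is to derive each of the three identities via the same Pythagorean (or ``Bayes-risk'') decomposition: add and subtract the appropriate conditional posterior mean, expand the square, and show that the cross term has expectation zero. For the first identity, set $m_0 = \mathbb{E}_\pi[\theta_0 \mid \bm x_0, \bm z_0]$ and write
\[
(\delta_0(\mathcal D_x, \mathcal D_z) - \theta_0)^2 = (\delta_0(\mathcal D_x, \mathcal D_z) - m_0)^2 + 2\bigl(\delta_0(\mathcal D_x, \mathcal D_z) - m_0\bigr)(m_0 - \theta_0) + (m_0 - \theta_0)^2,
\]
so that, after taking expectations, the identity reduces to showing that the cross term vanishes.

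The only substantive step is to argue that $\mathbb{E}[\theta_0 \mid \mathcal D_x, \mathcal D_z] = m_0$, i.e., that the full data set adds nothing to the posterior of $\theta_0$ beyond what $(\bm x_0, \bm z_0)$ already contributes. This follows directly from the hierarchical model in Figure \ref{fig: diagram}: since the $(\theta_k, \bm \eta_k)$ are drawn independently across $k$ and each $(\bm x_k, \bm z_k)$ depends only on $(\theta_k, \bm\eta_k)$, the tuple $(\theta_0, \bm x_0, \bm z_0)$ is independent of $\{(\bm x_k, \bm z_k)\}_{k=1}^{K}$. Since $\delta_0(\mathcal D_x, \mathcal D_z)$ and $m_0$ are both measurable with respect to $\sigma(\mathcal D_x, \mathcal D_z)$, the tower property then yields
\[
\mathbb{E}\bigl[(\delta_0 - m_0)(m_0 - \theta_0)\bigr] = \mathbb{E}\bigl[(\delta_0 - m_0)\,\mathbb{E}[\,m_0 - \theta_0 \mid \mathcal D_x, \mathcal D_z\,]\bigr] = 0,
\]
which is the required orthogonality.

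The second and third identities follow by the identical argument, with the conditioning set shrunk accordingly. For $\delta_0(\mathcal D_x)$ one needs $\mathbb{E}[\theta_0 \mid \mathcal D_x] = \mathbb{E}_\pi[\theta_0 \mid \bm x_0]$, which again holds because $\theta_0$ is independent of $\{\bm x_k\}_{k=1}^{K}$ given $\bm x_0$; for $\delta_0(\mathcal D_z)$ one uses the analogous independence of $\theta_0$ from $\{\bm z_k\}_{k=1}^{K}$ given $\bm z_0$. I do not anticipate a serious technical obstacle here: once the conditional-independence structure of the hierarchical model is made explicit, the proposition is a direct consequence of the orthogonality between the posterior-mean residual and any estimator measurable with respect to the conditioning information, and the only thing that changes across the three statements is the information set relative to which the posterior mean is computed.
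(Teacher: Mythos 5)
Your proposal is correct and follows essentially the same route as the paper: decompose $\delta_0-\theta_0$ by adding and subtracting the posterior mean, and kill the cross term using the fact that, by the across-individual independence of the hierarchical model, $\theta_0$ given $(\bm x_0,\bm z_0)$ carries no further dependence on the other individuals' data, so the posterior-mean residual is orthogonal to anything measurable with respect to the data. The only cosmetic difference is that you condition on the full $\sigma(\mathcal D_x,\mathcal D_z)$ and invoke $\mathbb E[\theta_0\mid\mathcal D_x,\mathcal D_z]=\mathbb E_\pi[\theta_0\mid\bm x_0,\bm z_0]$, whereas the paper conditions on $(\bm x_0,\bm z_0)$ and uses conditional independence of the two factors; these are equivalent formulations of the same argument.
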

The proof is given in Appendix. 
\par 
Proposition \ref{thm: risk-decomposition-simple-l2} reveals that the overall risk is minimized by setting $\delta_0$ to the corresponding posterior mean under the prior $\pi(\cdot)$, which is the population-level (unknown) distribution for $\theta_0$. Throughout this paper, we call the estimator 
that minimizes the overall risk 
the {\it target estimator}.
%and denote it with $\Theta_0$.
More specifically, under squared loss and different information sets, we denote the target estimators with 
\begin{equation} \label{eq:Thetal2}\Theta_0(\bm x_0;\ell_2) = \mathbb E_\pi[\theta_0\mid\bm x_0],\quad \Theta_0(\bm z_0;\ell_2) = \mathbb E_\pi[\theta_0\mid\bm z_0] \text{  and  }\ \Theta_0(\bm x_0, \bm z_0;\ell_2) = \mathbb E_\pi[\theta_0\mid\bm x_0, \bm z_0].\end{equation}
Here, $\ell_2$ refers to the squared loss. 
For the ease of presentation, we also use a simple notation $\Theta_0$ to represent one of the Bayes estimators in (\ref{eq:Thetal2}) when its meaning is apparent.
\par
Similarly, for a general loss function $L(\hat\theta, \theta)$, we define the target estimator as 
the Bayes estimator
that minimizes the expected loss, given the available observation on individual 0 and the prior $\pi(\cdot)$ such that
\begin{align}\label{eq:ThetaL}
    \Theta_0(\bm x_0;L)&=\argmin_{\delta} \mathbb E_\pi[L(\delta, \theta_0)\mid \bm x_0],\nonumber \\
    \Theta_0(\bm z_0;L)&=\argmin_{\delta} \mathbb E_\pi[L(\delta, \theta_0)\mid \bm z_0],\\
    \Theta_0(\bm x_0, \bm z_0;L)&=\argmin_{\delta} \mathbb E_\pi[L(\delta, \theta_0)\mid \bm x_0, \bm z_0]. \nonumber
\end{align}
A similar risk decomposition is demonstrated in Proposition \ref{thm: risk-decomposition-simple-L} below. Again, for the ease of notation, we simply use $\Theta_0$ to represent one of the Bayes estimators in (\ref{eq:ThetaL}) when its meaning is apparent.
\begin{prop}\label{thm: risk-decomposition-simple-L}
Suppose $\theta_0$ has a prior distribution $\pi(\cdot)$ and $L(\hat\theta, \theta)$ is a loss function, which is second-order partially differentiable with respect to $\hat\theta$ such that $L'(\hat\theta, \theta) = \partial L/\partial \hat\theta$ and $L''(\hat\theta, \theta) = \partial^2 L/\partial \hat\theta^2$. Then for estimator $\delta_0$ constructed based on information set $\mathcal D_x$, $\mathcal D_z$ or $(\mathcal D_x, \mathcal D_z)$, we have
\begin{align*}
    \mathbb E[L(\delta_0,\theta_0)] &=\dfrac{1}{2}\mathbb E[L''(\Theta_0, \theta_0)(\delta_0 - \Theta_0)^2] + \mathbb E[L(\delta_0, \theta_0)] + o(\mathbb E[(\delta_0 - \Theta_0)^2]),
\end{align*}
where $\Theta_0$ is the corresponding Bayes estimator based on the same information set as $\delta_0$.
\end{prop}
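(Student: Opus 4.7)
The plan is a straightforward second-order Taylor expansion of $L(\delta_0, \theta_0)$ in its first argument around $\Theta_0$, followed by a first-order-optimality argument to kill the linear term. Under the stated differentiability, write
\begin{equation*}
L(\delta_0,\theta_0) = L(\Theta_0,\theta_0) + L'(\Theta_0,\theta_0)(\delta_0-\Theta_0) + \tfrac{1}{2}L''(\Theta_0,\theta_0)(\delta_0-\Theta_0)^2 + R,
\end{equation*}
where $R = o((\delta_0-\Theta_0)^2)$ pointwise. Taking the unconditional expectation of both sides produces exactly the three terms in the claimed identity, so the proof reduces to (i) showing the expected linear term vanishes and (ii) showing $\mathbb E[R] = o(\mathbb E[(\delta_0-\Theta_0)^2])$.

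For step (i), I would condition on whichever information set defines $\Theta_0$ (write it as $\mathcal I_0$, e.g.\ $\mathcal I_0 = (\bm x_0,\bm z_0)$ in the most general case). The hierarchical model in Figure \ref{fig: diagram} makes all individuals $k \geq 1$ i.i.d.\ copies of individual $0$, so the auxiliary data $\mathcal D_x\setminus\{\bm x_0\}$ and $\mathcal D_z\setminus\{\bm z_0\}$ are independent of $\theta_0$ conditional on $\mathcal I_0$. Consequently $\delta_0 \perp \theta_0 \mid \mathcal I_0$, and $\Theta_0$ is $\mathcal I_0$-measurable, so
\begin{equation*}
\mathbb E\!\left[L'(\Theta_0,\theta_0)(\delta_0-\Theta_0)\,\big|\,\mathcal I_0\right] \;=\; \mathbb E[\delta_0-\Theta_0 \mid \mathcal I_0]\cdot \mathbb E_\pi\!\left[L'(\Theta_0,\theta_0)\,\big|\,\mathcal I_0\right].
\end{equation*}
The definition $\Theta_0 = \argmin_\delta \mathbb E_\pi[L(\delta,\theta_0)\mid \mathcal I_0]$ together with an interchange of differentiation and integration gives $\mathbb E_\pi[L'(\Theta_0,\theta_0)\mid \mathcal I_0] = 0$, so the linear term drops out after applying the tower property. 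The same argument works verbatim for $\mathcal I_0 = \bm x_0$ or $\mathcal I_0 = \bm z_0$.

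The main obstacle is step (ii): turning the pointwise $o((\delta_0-\Theta_0)^2)$ into an $L^1$ statement $o(\mathbb E[(\delta_0-\Theta_0)^2])$. Taylor's theorem gives $R = \tfrac12[L''(\xi,\theta_0)-L''(\Theta_0,\theta_0)](\delta_0-\Theta_0)^2$ for some $\xi$ between $\delta_0$ and $\Theta_0$, so the cleanest route is to invoke continuity of $L''$ in its first argument together with an integrability/dominated-convergence condition (for instance, $L''$ bounded on compact sets and $\delta_0-\Theta_0 \to 0$ in a strong enough sense for the aggregation scheme under consideration). One then writes $\mathbb E[R] = \tfrac12\,\mathbb E\!\left[\{L''(\xi,\theta_0)-L''(\Theta_0,\theta_0)\}(\delta_0-\Theta_0)^2\right]$ and bounds it by a vanishing multiple of $\mathbb E[(\delta_0-\Theta_0)^2]$. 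Combining the vanishing linear term, the retained quadratic, and this controlled remainder yields the stated decomposition in all three information-set versions.
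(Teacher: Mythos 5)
Your proposal is correct and follows essentially the same route as the paper's proof: a second-order Taylor expansion of $L$ in its first argument at $\Theta_0$, with the linear term killed by conditioning on the relevant information set, using the conditional independence of $\delta_0-\Theta_0$ and $\theta_0$ given that set, and the first-order optimality condition $\mathbb E_\pi[L'(\Theta_0,\theta_0)\mid \mathcal I_0]=0$. Your step (ii) on upgrading the pointwise remainder to $o(\mathbb E[(\delta_0-\Theta_0)^2])$ is in fact more careful than the paper, which simply takes expectations of the pointwise $o((\delta_0-\Theta_0)^2)$ without comment; you also correctly read the second right-hand-side term as $\mathbb E[L(\Theta_0,\theta_0)]$, which the printed statement mistypes as $\mathbb E[L(\delta_0,\theta_0)]$.
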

The proof is given in Appendix. 

The target estimator $\Theta_0$ as a function of $\bm x_0$ and $\bm z_0$ is not directly available, %feasible, 
because neither the population distribution $\pi(\theta_0)$ nor the likelihood function $p(\bm z_0\mid \theta_0)$ is explicitly known or assumed. The iGroup estimator $\hat\theta_0^{(c)}$ in (\ref{eq: aggtheta}) constructed based on observed finite sample $\mathcal D_x, \mathcal D_z$ is desired to approach the target estimator $\Theta_0$ when more and more similar individuals contribute to the estimator $\hat\theta_0^{(c)}$. See \cite{diaconis1986consistency} for discussions of target point estimators and target parameters in Bayesian literature.

\subsection[]{Case 1: With exogenous variable $\bm z$ only}\label{sec:exogenous-variable}
% In cases when the individual level estimator $\hat\theta_k$ is infeasible or too noisy to be considered, iGroup need to
In the cases when the individual level estimator $\hat\theta_k$ is not reliable to construct the individual groups, iGroup may  
be constructed with the exogenous variable $\bm z$ only. 
In this case, the corresponding target estimator is defined as: 
\begin{equation}
\Theta_0(\bm z_0;\ell_2) = \mathbb E_\pi[\theta_0\mid \bm z_0],
\label{eq: target-z}
\end{equation}
where $p(\theta_0\mid\bm z_0) \propto p(\bm z_0\mid \theta_0)\pi(\theta_0)$. Although $\bm x_0$ is not used for grouping and thus does not appear in (\ref{eq: target-z}), the data $\mathcal D_x$ is used in iGroup estimators in (\ref{eq: aggtheta}) and (\ref{eq: aggm}). 

% $\Theta_0(\mathcal D_z) = \mathbb E[\theta_0\mid \bm z_0]$ under squared loss and is $\Theta_0(\mathcal D_z; L)$ under a general loss function $L$.

%[Question: Should we use $E(\theta_0 | \bm z_0)$ to replace $\theta_0$ in the theorems of this subsection? (I think some places are indeed $\theta_0$, so I guess we should make the distinction)]

% \subsubsection{Deterministic Exogenous Variable}\label{sec:deterministic-exogenous-variable}
% When $\hat \theta$ is not considered and 
Recall that the relationship between $\theta_k$ and $\bm \eta_k$ is given by a deterministic relationship
\begin{equation}\label{eq:relation}\theta_k = g(\bm\eta_k),\quad \text{for }k=0,1,\dots, K, 
\end{equation}
where $g(\cdot)$ is an unknown continuous function. Furthermore, $\bm z_k$ is a noisy observation of $\bm \eta_k$. Since $\bm\eta$ is a conceptual parameter, we may simply assume that
$$\bm z_k = \bm \eta_k +\epsilon_k,\quad \text{for }k=0,\dots, K,$$
where the error satisfies $\mathbb E(\epsilon_k)=0$, $\text{Var}(\epsilon_k)=\sigma_z^2\bm\Sigma_z$ with $\|\bm\Sigma_z\|=1$. 
% Consider the case that $\bm\eta_k$ is by a function of $\bm z_k$ without error. Specifically, 
% In this case, $\Theta_0 = \mathbb E_\pi[\theta_0\mid \bm z_0]= g^*(\bm z_0)\equiv \theta_0$.
\par
Suppose $\hat \theta_k$ is an unbiased estimator of $\theta_k$. Then, the combined estimator 
\begin{equation}\label{eq:estimator}
\hat \theta_0^{(c)} = \dfrac{\sum_{k=0}^K\mathcal K\left(\dfrac{\|\bm z_k-\bm z_0\|}{b}\right)\hat \theta_k}{\sum_{k=0}^K\mathcal K\left(\dfrac{\|\bm z_k-\bm z_0\|}{b}\right)}
\end{equation}
has all the properties of a conventional kernel smoothing estimator if $\mathcal K$ is a standard kernel function. The boundary and asymptotic conditions/assumptions on the weight function $\mathcal K$ and the bandwidth $b$ are summarized in Assumption \ref{thm:condition}.
\begin{assump}[Boundary and asymptotic conditions]\label{thm:condition}
The kernel function $\mathcal K(\cdot)$ satisfies
$$
\mathcal K \geqslant 0,\quad
\int |\mathcal K(u)|du<\infty,\quad
\lim_{|u|\rightarrow\infty} u\mathcal K(u)\rightarrow 0.
%K\rightarrow \infty, b\rightarrow 0, b^dK\rightarrow \infty.
$$
And, in addition, when $K\rightarrow \infty$, $b$ satisfies $b\rightarrow 0,\quad b^dK\rightarrow \infty.$

\end{assump}
\begin{thm}\label{thm:hdkernel}
Under the conditions in Assumption \ref{assump:dense} - \ref{thm:condition},
we have 
$$\hat \theta_0^{(c)} \longrightarrow \Theta_0(\bm z_0;\ell_2)\quad \text{in probability.}$$
The optimal choice of the bandwidth is $\hat b \asymp K^{-1/(d+4)}$ such that the optimal MSE is $\mathbb E[(\hat \theta_0^{(c)}-\Theta_0)^2]\asymp K^{-4/(d+4)}$.
\end{thm}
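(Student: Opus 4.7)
The plan is to recognize that the estimator in (\ref{eq:estimator}) is, structurally, a Nadaraya--Watson kernel regression estimator with response $\hat\theta_k$ and covariate $\bm z_k$, and then to lean on the classical N-W consistency and bias/variance analysis. The crucial conceptual observation is that, because $\hat\theta_k$ is unbiased for $\theta_k$, the tower property applied to the hierarchy $\theta_k \to \bm\eta_k \to \bm z_k$ (and $\theta_k \to \bm x_k \to \hat\theta_k$) gives
\[
m(\bm z) \;:=\; \mathbb E[\hat\theta_k \mid \bm z_k = \bm z] \;=\; \mathbb E_\pi[\theta_k \mid \bm z_k = \bm z],
\]
so that $m(\bm z_0) = \Theta_0(\bm z_0;\ell_2)$. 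Proving Theorem \ref{thm:hdkernel} thus reduces to showing that $\hat\theta_0^{(c)} \to m(\bm z_0)$ in probability and that this convergence achieves the stated rate.

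The next step is a standard numerator--denominator decomposition. Write
\[
\hat\theta_0^{(c)} \;=\; \frac{\hat N}{\hat D}, \quad
\hat N \;=\; \frac{1}{K b^d}\sum_{k=0}^{K} \mathcal K\!\left(\tfrac{\|\bm z_k-\bm z_0\|}{b}\right)\hat\theta_k, \quad
\hat D \;=\; \frac{1}{K b^d}\sum_{k=0}^{K} \mathcal K\!\left(\tfrac{\|\bm z_k-\bm z_0\|}{b}\right).
\]
First, I would translate the density-type Assumption \ref{assump:dense}, stated through the distance in $\theta$, into an analogous statement about the local density of $\bm z$ near $\bm z_0$: since $\theta = g(\bm\eta)$ with $g$ continuous and $\bm z = \bm\eta + \epsilon$, $\bm z$ inherits a well-defined marginal density $f_{\bm z}(\bm z_0)$ that is bounded and positive, with the intrinsic dimension $d$. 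This yields, under Assumption \ref{thm:condition},
\[
\mathbb E[\hat D] \to f_{\bm z}(\bm z_0)\!\int\!\mathcal K, \qquad \mathbb E[\hat N] \to m(\bm z_0)\, f_{\bm z}(\bm z_0)\!\int\!\mathcal K.
\]
For the variances, I would bound $\mathrm{Var}(\hat D), \mathrm{Var}(\hat N) = O((K b^d)^{-1})$, invoking Assumption \ref{assump:smooth} (which, combined with bounded variance of $\hat\theta_k$ implied by the regularity of $p(\cdot;\theta)$) controls the second moment of $\mathcal K\hat\theta_k$. Chebyshev plus Slutsky then give $\hat\theta_0^{(c)} \to m(\bm z_0) = \Theta_0(\bm z_0;\ell_2)$ in probability.

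For the rate, I would perform the usual Taylor expansion of $m$ around $\bm z_0$ and of $f_{\bm z}$ around $\bm z_0$ inside the expectation of $\hat N/\hat D$: assuming $m$ is twice continuously differentiable (which I would state as an implicit regularity ensured by continuity of $g$ and smoothness of the hierarchy), the leading bias term is $O(b^2)$, independent of $K$. The variance term is $O((K b^d)^{-1})$, so
\[
\mathbb E\bigl[(\hat\theta_0^{(c)} - \Theta_0)^2\bigr] \;\asymp\; b^4 + \frac{1}{K b^d}.
\]
Minimizing over $b$ produces the first-order condition $b^3 \asymp (K b^{d+1})^{-1}$, i.e.\ $b \asymp K^{-1/(d+4)}$, at which value both terms are of order $K^{-4/(d+4)}$.

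The main obstacle, in my view, is not the N-W calculus itself, but bridging Assumption \ref{assump:dense}, which is phrased through $\tilde d(\theta_k,\theta_0)$, to the density-in-$\bm z$ statement actually needed for the kernel smoothing argument; this requires that the composition of $g^{-1}$ (or a local regular inverse) with the noise distribution of $\epsilon$ preserves the intrinsic dimension $d$ and gives a positive density at $\bm z_0$. A secondary, more routine obstacle is justifying the $O(b^2)$ bias rate, which technically needs smoothness of $m(\cdot) = \mathbb E_\pi[\theta \mid \bm z = \cdot]$; this is not spelled out in the stated assumptions and would need either an additional smoothness hypothesis on $\pi$ and $q(\cdot;\bm\eta)$ or a weaker $o(1)$ bias that only yields consistency without the sharp rate.
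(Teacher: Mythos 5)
Your proposal is correct and follows essentially the same route as the paper: the paper gives no separate proof of Theorem 1, stating only that it ``follows immediately from consistency theorem on a standard multivariate kernel smoothing estimator,'' with bias $O(b^2)$ and variance $O((Kb^d)^{-1})$, and your write-up is precisely that standard Nadaraya--Watson argument spelled out, including the key identification $\mathbb E[\hat\theta_k\mid \bm z_k]=\mathbb E_\pi[\theta_k\mid\bm z_k]$ via unbiasedness and the tower property. The two technical caveats you flag (translating the dense assumption into a positive density for $\bm z$ at $\bm z_0$, and the smoothness of $m(\cdot)$ needed for the sharp $O(b^2)$ bias) are genuine looseness in the paper's stated hypotheses rather than gaps in your argument.
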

% Under deterministic relationship between $\bm\eta$ and $\bm z$, $\mathbb E(\theta_0\mid \bm z_0)=\theta_0$. 

Theorem \ref{thm:hdkernel} follows immediately from consistency theorem on a standard multivariate kernel smoothing estimator \citep{wassermann2006all}.
When the number of individuals $K$ goes to infinity, the bias of $\hat\theta_0^{(c)}$ with bandwidth $b$ is of order $b^2$ and the variance is of order $(b^dK)^{-1}$, where $d$ is the dimension of $\bm z$ as defined in Assumption \ref{assump:dense}. In such case, the asymptotic optimal choice of bandwidth that minimizes the mean squared error, $b^4+(b^dK)^{-1}$, is of order $K^{-1/(d+4)}$, same as a $d$-dimensional kernel smoothing problem.
\par
Another way of combining individuals is aggregating the objective functions as shown in (\ref{eq: aggm}). A combined estimator with respect to kernel $\mathcal K(\cdot)$ is defined by
$$\tilde \theta^{(c)}_0=\argmin_\theta\sum_{k=0}^K\mathcal K\left(\dfrac{\|\bm z_k-\bm z_0\|}{b}\right)M_k(\theta).$$
% To show this estimator is consistent, we assume a sufficient condition on the kernel function $\mathcal K$ and the objective function $M_k(\theta)$ as in Assumption \ref{assump: aggtheta}. 
The estimator is consistent and has a similar asymptotic performance to a $d$-dimensional kernel smoothing estimator as stated in Theorem \ref{thm:mest}.
This approach is useful especially when $\hat \theta_k$ is not available, such as in the cases that the number of observations for each individual is less than the number of parameters. 
% \begin{assump}Kernel function $\mathcal K$ and objective function $M_k(\theta)$ satisfy:\label{assump: aggtheta}
% \begin{itemize}
%     \item $\mathcal K$ is non-negative, 
%     \item $M_k(\theta)$ is a convex function of $\theta$ for all $\bm x_k$.
% \end{itemize}
% \end{assump}
\begin{thm}\label{thm:mest}
Suppose the conditions in Assumption \ref{thm:condition} hold and in addition,
\begin{enumerate}
\item $M_k(\theta)$ is convex and second order partial differentiable with respect to $\theta$,
\item for any given $\theta$, $\mathbb E_{\bm x|\bm z}[\dfrac{\partial M_{\bm x}(\theta)}{\partial\theta}]$ as a function of $\bm z$ is continuous,
\item $\mathbb E_{\bm x|\bm z_0}[M_{\bm x}(\theta)]$ has a unique minimum at $\theta=\Theta_0(\bm z_0;\ell_2).$
\end{enumerate}
Then 
$$\tilde \theta^{(c)}_0\longrightarrow \Theta_0\quad in\  probability.$$
The optimal choice of bandwidth $b$ is $\hat b\asymp K^{-1/(d+4)}$ and the optimized mean squared error is $\mathbb E[(\hat \theta_0^{(c)}-\Theta_0)^2]\asymp K^{-4/(d+4)}.$
\end{thm}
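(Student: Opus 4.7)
The plan is to treat $Q_K(\theta) := (K b^d)^{-1} \sum_{k=0}^K \mathcal{K}(\|\bm z_k - \bm z_0\|/b)\, M_k(\theta)$ as a Nadaraya--Watson style kernel smoother in which $M_k(\theta)$ plays the role of the response for each fixed $\theta$. The proof then proceeds in three steps: (i) pointwise convergence of $Q_K(\theta)$, (ii) upgrade from pointwise to argmin convergence using convexity, and (iii) a bias--variance analysis of the first-order condition to obtain the rate.

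First, for each fixed $\theta$ I would apply the same kernel-regression calculation that underlies Theorem~\ref{thm:hdkernel}, now with $M_k(\theta)$ as response. Writing $m(\theta;\bm z) := \mathbb{E}_{\bm x | \bm z}[M_{\bm x}(\theta)]$, Condition~2 ensures $m(\theta;\bm z)$ is continuous in $\bm z$ at $\bm z_0$ (after integrating the continuity of the gradient, or assuming it jointly). The standard bias--variance decomposition for kernel smoothing under Assumption~\ref{thm:condition} then yields
\begin{equation*}
Q_K(\theta) \xrightarrow{p} p(\bm z_0)\, m(\theta; \bm z_0) \int \mathcal{K}(u)\, du.
\end{equation*}
Since each $M_k(\theta)$ is convex in $\theta$ (Condition~1) and the kernel weights are nonnegative, $Q_K(\cdot)$ is a sequence of (random) convex functions. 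Invoking the classical convex-function convergence theorem (pointwise convergence of convex functions on an open convex set implies uniform convergence on compact subsets), together with the uniqueness of the minimizer in Condition~3, I conclude
\begin{equation*}
\tilde\theta_0^{(c)} = \argmin_\theta Q_K(\theta) \xrightarrow{p} \argmin_\theta m(\theta; \bm z_0) = \Theta_0(\bm z_0; \ell_2).
\end{equation*}

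For the rate, I would use the first-order condition $\partial Q_K(\tilde\theta_0^{(c)})/\partial\theta = 0$ and Taylor-expand around $\Theta_0$ to write
\begin{equation*}
\tilde\theta_0^{(c)} - \Theta_0 \;=\; -\,\frac{\partial Q_K(\Theta_0)/\partial\theta}{\partial^2 Q_K(\Theta_0^*)/\partial\theta^2},
\end{equation*}
where $\Theta_0^*$ lies between $\Theta_0$ and $\tilde\theta_0^{(c)}$. By Condition~1 and the consistency already established, the denominator converges in probability to the positive constant $p(\bm z_0)\,\mathbb{E}[\partial^2 M_{\bm x}(\Theta_0)/\partial\theta^2 \mid \bm z_0] \int \mathcal{K}(u)\,du$. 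The numerator is itself a $d$-dimensional kernel smoother of the score $\partial M_k(\Theta_0)/\partial\theta$ at $\bm z_0$, whose squared bias is $O(b^4)$ and whose variance is $O((K b^d)^{-1})$. Hence $\mathbb{E}[(\tilde\theta_0^{(c)} - \Theta_0)^2] = O(b^4 + (K b^d)^{-1})$, which is minimized at $\hat b \asymp K^{-1/(d+4)}$ giving the claimed MSE rate $K^{-4/(d+4)}$.

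The main obstacle is step~(ii): pointwise convergence of a random sequence of functions does not by itself control the argmin. Convexity (Condition~1) is the decisive assumption, since it licenses the convex-function convergence theorem and avoids having to prove stochastic equicontinuity of $Q_K$ directly. The rate argument is a routine bias--variance calculation once the Taylor expansion of the score is set up, and differs from Theorem~\ref{thm:hdkernel} only in that the ``response'' is replaced by the score, which leaves the rate identical.
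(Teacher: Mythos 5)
Your proof is essentially correct, but the consistency step takes a genuinely different route from the paper's. The paper never works with the aggregated objective $Q_K$ directly: it defines the weighted score $\Psi_K(\theta)$ (the kernel-weighted average of $\psi_k(\theta)=\partial M_k(\theta)/\partial\theta$), observes that convexity makes each $\psi_k$, hence $\Psi_K$ and its limit $\Psi$, non-decreasing and continuous, and then squeezes the root: with probability tending to one $\Psi_K(\Theta_0-\epsilon)<0<\Psi_K(\Theta_0+\epsilon)$, which by continuity forces a zero of $\Psi_K$ (i.e.\ $\tilde\theta_0^{(c)}$) into $(\Theta_0-\epsilon,\Theta_0+\epsilon)$. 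You instead establish pointwise convergence of the convex objective and invoke the convexity lemma (pointwise convergence of convex functions implies uniform convergence on compacta, hence convergence of argmins when the limit has a unique minimizer). Both are standard M-estimation devices; the paper's monotone-score argument is elementary but intrinsically one-dimensional, whereas your convexity-lemma route would survive a multivariate $\theta$ unchanged. Your rate argument — Taylor expansion of the first-order condition, numerator a kernel smoother of the score with bias $O(b^2)$ and variance $O((Kb^d)^{-1})$, denominator converging to a constant, Slutsky — is exactly the paper's.

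One point deserves care. Your pointwise-convergence step needs $m(\theta;\bm z)=\mathbb E_{\bm x|\bm z}[M_{\bm x}(\theta)]$ to be continuous in $\bm z$ at $\bm z_0$, but Condition~2 only asserts continuity in $\bm z$ of the conditional expectation of the \emph{score}. You flag this ("after integrating the continuity of the gradient, or assuming it jointly"), but integrating $\mathbb E_{\bm x|\bm z}[\psi_{\bm x}(t)]$ over $t$ only controls differences $m(\theta;\bm z)-m(\theta_{\mathrm{ref}};\bm z)$ and still leaves the continuity of $m(\theta_{\mathrm{ref}};\cdot)$ at one reference point unaccounted for. The paper avoids this entirely by applying the kernel-smoothing consistency to $\Psi_K$, for which Condition~2 is precisely the hypothesis needed. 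You could repair your version either by adding the continuity of $m(\theta;\cdot)$ as a (mild) extra assumption or, more economically, by running your convexity argument on the integrated score $\theta\mapsto\int_{\theta_{\mathrm{ref}}}^{\theta}\Psi_K(t)\,dt$, which has the same argmin and whose limit is pinned down by Condition~2 alone.
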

The proof is given in Appendix. 

%Move to conclusion part.
% We assume a quite strong sufficient condition on the objective function $M_k(\theta)$ such that the argmin of the aggregated objective function will converge to the true value. Instead of assuming second-order differentiable and convexity, other sufficient conditions can also guarantee the convergence of argmin \citep{van2000asymptotic}. But most of them depends on the explicit formula of kernel $\mathcal K$ and the objective function $M_k(\theta)$.
\par
% In conclusion, under some regulation conditions, 
The above theorems suggest that the individualized combined estimator by aggregating either individual estimators $\hat\theta_k$ or objective functions $M_k(\theta)$ would result in an improvement in mean squared error and it shares a similar asymptotic performance as a $d$-dimensional kernel smoothing estimator.

% \subsubsection[]{Noisy Exogenous Variable}\label{sec:noisy-exogenous-variable}
% Suppose the relation in Equation (\ref{eq:relation}) holds for $\theta$ and $\bm \eta$ such that $\theta = g(\bm \eta)$,
% but the relationship between $\bm z$ and $\bm \eta$ is not deterministic. Since $\bm \eta$ is a conceptual parameter, we may simply assume that
% $$\bm z_k = \bm \eta_k +\epsilon_k,\quad \text{for }k=0,\dots, K,$$
% where the error satisfies $\mathbb E(\epsilon_k)=0$, $\text{Var}(\epsilon_k)=\sigma^2\bm\Sigma$ with $\|\bm\Sigma\|=1$. \\

When $\sigma_z=0$, $\Theta_0(\bm z_0;\ell_2) = \mathbb E_\pi[\theta_0\mid\bm z_0] \equiv \theta_0$. Hence, estimating $\Theta_0$ becomes estimating the unknown function $g(\cdot)$ evaluated at $\bm z_0$. When $\sigma_z>0$, $\Theta_0$ and $\theta_0$ are in general different. Let $B_0$ and $V_0$ be the bias and variance of the target estimator $\Theta_0(\bm z_0;\ell_2)$ in estimating $\theta_0$ such that
\begin{equation}
    B_0(\theta_0) := \mathbb E_{\theta_0}[\Theta_0(\bm z_0;\ell_2)] - \theta_0,\quad V_0(\theta_0) = Var_{\theta_0}[\Theta_0(\bm z_0;\ell_2)].
    \label{eq: intrinsic-terms}
\end{equation}

The above bias and variance are defined with respect to a fixed $\theta_0$ with random $\bm z_0$. 
\begin{thm}\label{thm:bias-variance}
The asymptotic bias and variance of $\hat\theta^{(c)}_0$ in estimating a fixed $\theta_0$ are given by
\begin{align*}
\mathbb E_{\theta_0}[\hat\theta_0^{(c)}]-\theta_0 &= B_0(\theta_0)+O_p(b^2),\\
\text{Var}_{\theta_0}[\hat\theta_0^{(c)}]&=V_0(\theta_0)+O_p\left(\dfrac{1}{Kb^d}\right),
\end{align*}
where the intrinsic bias $B_0$ and the intrinsic variance $V_0$ are defined in (\ref{eq: intrinsic-terms}). 
% % $$B_0=\mathbb E[\mathbb E(\theta|\bm z_0)|\bm \eta_0]-\theta_0\quad \text{and}\quad V_0=\text{Var}[\mathbb E(\theta|\bm z_0)|\bm \eta_0]$$
% $$B_0=\mathbb E[\Theta_0|\bm \eta_0]-\theta_0\quad \text{and}\quad V_0=\text{Var}[\Theta_0|\bm \eta_0]$$
% respectively. Note that $\theta = g(\bm\eta)$ and $\theta_0 = g(\bm\eta_0)$.
\end{thm}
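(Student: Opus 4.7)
The plan is to reduce the claim to two standard ingredients: (i) the classical bias-variance analysis of a Nadaraya-Watson kernel smoother, already invoked in the proof of Theorem \ref{thm:hdkernel}, and (ii) the tower law applied with $\bm z_0$ conditioned on the fixed target $\theta_0$. First, observe that because $\hat\theta_k$ is unbiased for $\theta_k$ and the pairs $(\theta_k,\bm z_k)$ are i.i.d.\ under the unknown prior $\pi$, the regression function of $\hat\theta$ on $\bm z$ coincides with the target estimator,
\begin{equation*}
\mu(\bm z) := \mathbb{E}[\hat\theta \mid \bm z] = \mathbb{E}_\pi[\theta \mid \bm z] = \Theta_0(\bm z;\ell_2).
\end{equation*}
Hence $\hat\theta_0^{(c)}$ in (\ref{eq:estimator}) is a $d$-dimensional Nadaraya-Watson estimator of $\mu(\cdot)$ evaluated at the query point $\bm z_0$.

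Second, treating $\bm z_0$ as a fixed design point and conditioning on it, the standard Nadaraya-Watson expansions under Assumption \ref{thm:condition} (the same ones used to establish Theorem \ref{thm:hdkernel}) give
\begin{align*}
\mathbb{E}[\hat\theta_0^{(c)} \mid \bm z_0] &= \mu(\bm z_0) + O_p(b^2), \\
\text{Var}[\hat\theta_0^{(c)} \mid \bm z_0] &= O_p\!\left(\tfrac{1}{Kb^d}\right),
\end{align*}
with the $O_p(b^2)$ leading bias controlled by the second derivatives of $\mu$ and the marginal density of $\bm z$ at $\bm z_0$, and the $O_p(1/(Kb^d))$ variance governed by $\text{Var}[\hat\theta \mid \bm z]$.

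Third, fix $\theta_0$ and integrate over $\bm z_0 \mid \theta_0 \sim q(\cdot;\bm\eta_0)$ using the tower law and the law of total variance. For the bias,
\begin{equation*}
\mathbb{E}_{\theta_0}[\hat\theta_0^{(c)}] = \mathbb{E}_{\bm z_0 \mid \theta_0}[\Theta_0(\bm z_0;\ell_2)] + O_p(b^2) = \theta_0 + B_0(\theta_0) + O_p(b^2),
\end{equation*}
by the definition of $B_0$ in (\ref{eq: intrinsic-terms}). For the variance,
\begin{equation*}
\text{Var}_{\theta_0}[\hat\theta_0^{(c)}] = \mathbb{E}_{\bm z_0 \mid \theta_0}\!\bigl[\text{Var}[\hat\theta_0^{(c)} \mid \bm z_0]\bigr] + \text{Var}_{\bm z_0 \mid \theta_0}\!\bigl[\mathbb{E}[\hat\theta_0^{(c)} \mid \bm z_0]\bigr] = V_0(\theta_0) + O_p\!\left(\tfrac{1}{Kb^d}\right),
\end{equation*}
where the $O_p(b^4)$ contribution from the variance of the deterministic-in-$\bm z_0$ bias remainder is absorbed by the leading $O_p(1/(Kb^d))$ under the optimal-rate regime $b \asymp K^{-1/(d+4)}$ implied by Theorem \ref{thm:hdkernel}.

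The main obstacle is ensuring the conditional expansions in the second step hold \emph{uniformly} in $\bm z_0$ across the support of $q(\cdot;\bm\eta_0)$, so that the tower-law integration over $\bm z_0 \mid \theta_0$ does not inflate the remainders. Such uniformity is standard for Nadaraya-Watson smoothers once one has (a) sufficient neighbor density around every query point, guaranteed by Assumption \ref{assump:dense}, and (b) smoothness of $\mu(\bm z) = \Theta_0(\bm z;\ell_2)$, inherited from Assumption \ref{assump:smooth} together with the continuity of $g(\cdot)$. Once uniformity is secured, the remaining steps reduce to the bookkeeping above.
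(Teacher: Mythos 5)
Your proof follows essentially the same route as the paper's: condition on $\bm z_0$ to treat $\hat\theta_0^{(c)}$ as a standard kernel smoother of $\Theta_0(\bm z_0;\ell_2)$ with conditional bias $O_p(b^2)$ and conditional variance $O_p(1/(Kb^d))$, then integrate over $\bm z_0\mid\theta_0$ via the tower law and the law of total variance to surface $B_0$ and $V_0$. Your added remarks on identifying the regression function via unbiasedness and on uniformity of the conditional expansions are sound refinements of the same argument, so the proposal is correct and not a genuinely different approach.
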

The proof is given in Appendix. In the conditional probabilities, $\Theta_0=\mathbb E_\pi[\theta_0\mid \bm z_0]$, as a function of $\bm z_0$, is considered random under a given $\theta_0$.
\par
% The nonparametric kernel density estimator here, in fact, estimates the target parameter, which is $\Theta_0(\bm z_0)=\mathbb E_\pi[\theta_0|\bm z_0]$ under the unknown empirical prior $\pi(\theta)$. 
The bias and variance of $\hat\theta_0^{(c)}$ in terms of estimating a fixed $\theta_0$ can therefore be decomposed into two parts. The first part (the intrinsic part) comes from the bias and variance of estimating $\Theta_0[\bm z_0]$ itself to $\theta_0$ and the second part comes from estimating $\Theta_0$ nonparametrically.
Since $\bm z$ is observed with error, this is similar to error in variable problem where certain intrinsic bias cannot be avoided \citep{fuller2009measurement, carroll1995nonlinear, wansbeek2000measurement, bound2001measurement}. 
Such intrinsic bias and variance are asymptotically linear of $\sigma_z^2$, which is the noise level of $\bm z_k$, as shown in Theorem \ref{thm: intrinsic_asymp}. Especially, when $\sigma_z^2$ is exactly zero, all intrinsic terms vanish, and it reduces to the exact case when $\Theta_0=\theta_0$. 
\begin{thm}\label{thm: intrinsic_asymp}
Suppose $g(\cdot)$ is second-order differentiable and the distribution of $\epsilon_k$ has finite higher moments. Then, for a fixed $\theta_0$,
when $\sigma_z^2\rightarrow 0$, 
\begin{align*}
B_0\asymp \sigma_z^2,\quad V_0\asymp \sigma_z^2.
\end{align*}
\end{thm}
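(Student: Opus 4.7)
The plan is a two-step Taylor expansion in the noise scale $\sigma_z$. First, I derive a pointwise expansion of the target estimator $\Theta_0(\bm z;\ell_2)=\mathbb E_\pi[g(\bm\eta)\mid\bm z]$ as a function of $\bm z$; then, with $\bm z_0=\bm\eta_0+\epsilon_0$ and $\bm\eta_0=g^{-1}(\theta_0)$ (assuming local invertibility of $g$ at $\theta_0$), I Taylor expand in $\epsilon_0$ to read off the leading $\sigma_z^2$ contributions to $B_0$ and $V_0$.

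For the first step, write the posterior mean through densities
$$\Theta_0(\bm z;\ell_2) = \frac{\int g(\bm\eta) f_\eta(\bm\eta) q_{\sigma_z}(\bm z-\bm\eta)\,d\bm\eta}{\int f_\eta(\bm\eta) q_{\sigma_z}(\bm z-\bm\eta)\,d\bm\eta},$$
where $f_\eta$ is the marginal density of $\bm\eta$ induced by $\pi$ through $\theta=g(\bm\eta)$, and $q_{\sigma_z}$ is the density of $\epsilon$. Substituting $u=(\bm z-\bm\eta)/\sigma_z$ and Taylor expanding $g(\bm z-\sigma_z u)f_\eta(\bm z-\sigma_z u)$ and $f_\eta(\bm z-\sigma_z u)$ to second order, the odd $u$-moments vanish and $\mathbb E[uu^\top]=\bm\Sigma_z$, yielding
$$\Theta_0(\bm z;\ell_2) = g(\bm z) + \sigma_z^2 A(\bm z) + o(\sigma_z^2),$$
with $A(\bm z)$ an explicit combination of $\nabla g$, $\nabla^2 g$, $f_\eta$, and $\nabla f_\eta$ at $\bm z$.

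For the bias at fixed $\theta_0$, substitute $\bm z_0=\bm\eta_0+\epsilon_0$ and expand again. Using $\mathbb E[\epsilon_0]=0$ and $\mathbb E[\epsilon_0\epsilon_0^\top]=\sigma_z^2\bm\Sigma_z$,
$$\mathbb E_{\theta_0}[g(\bm\eta_0+\epsilon_0)] = \theta_0 + \tfrac{\sigma_z^2}{2}\,\tr\bigl(\bm\Sigma_z \nabla^2 g(\bm\eta_0)\bigr) + o(\sigma_z^2),$$
while $\sigma_z^2\,\mathbb E[A(\bm\eta_0+\epsilon_0)] = \sigma_z^2 A(\bm\eta_0)+o(\sigma_z^2)$. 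Adding and subtracting $\theta_0$ gives $B_0(\theta_0) = \sigma_z^2\, C_B(\bm\eta_0)+o(\sigma_z^2)$ for a computable $C_B(\bm\eta_0)$, hence $B_0\asymp\sigma_z^2$. For the variance, the $\sigma_z^2 A$ correction contributes only $O(\sigma_z^4)$ and is negligible; the dominant contribution comes from the linearization $\nabla g(\bm\eta_0)^\top\epsilon_0$, so
$$V_0(\theta_0) = \sigma_z^2\,\nabla g(\bm\eta_0)^\top \bm\Sigma_z\, \nabla g(\bm\eta_0) + o(\sigma_z^2),$$
again $\asymp\sigma_z^2$.

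The main obstacle is rigorously controlling the remainders when the two expansions are composed: to conclude $o(\sigma_z^2)$ for $\mathbb E_{\theta_0}[\Theta_0(\bm z_0)]$ rather than only a pointwise statement, I need the Taylor remainder from the inner expansion of $\Theta_0(\bm z)$ to be dominated by an integrable function of $\epsilon_0$ in a neighborhood of $\bm\eta_0$, so that dominated convergence applies uniformly. This is precisely why the hypotheses request second-order differentiability of $g$ together with finite higher moments of $\epsilon$ (a finite third absolute moment suffices). A secondary technical point is the local-invertibility assumption on $g$; if $g$ is not one-to-one near $\bm\eta_0$, conditioning on $\theta_0$ places $\bm\eta_0$ on the level set $\{g=\theta_0\}$, and an additional outer expectation over this set preserves the same $\sigma_z^2$ scaling with the $\bm\eta_0$-dependent coefficients replaced by their conditional averages.
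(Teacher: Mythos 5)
Your proposal follows essentially the same route as the paper's own proof: a convolution-type second-order Taylor expansion of numerator and denominator of the posterior mean in the noise scale (the paper isolates this as a lemma), giving $\Theta_0(\bm z;\ell_2)=g(\bm z)+\sigma_z^2A(\bm z)+o(\sigma_z^2)$, followed by a second expansion in $\epsilon_0$ around $\bm\eta_0$ to extract $B_0\asymp\sigma_z^2$ and $V_0=\sigma_z^2\nabla g(\bm\eta_0)^{T}\bm\Sigma_z\nabla g(\bm\eta_0)+o(\sigma_z^2)$, which matches the paper's leading terms exactly. Your added remarks on dominated convergence for the composed remainders and on non-injective $g$ are reasonable refinements that the paper itself glosses over, but they do not change the argument.
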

The proof is given in Appendix. 
\par
Research in nonparametric regression with error in variable shows a slower convergence rate to recover the function $\theta_0=g(\bm\eta)$ at any given $\bm\eta$ \citep{stefanski1990deconvolving, fan1993nonparametric}. Our problem is different. We focus on providing a point estimator of $\theta_0=g(\bm\eta_0)$ without knowning $\bm\eta_0$, but its noisy version $\bm z_0$. Even if we known the function $g(\cdot)$ precisely, $\theta_0$ is not known as we do not observe $\eta_0$. 
% Since $\bm\eta_0$ is not directly observed, the Bayes estimator for $\theta_0$ under squared loss is the expectation $\mathbb E[\theta|\bm z_0]$. 
When considering an individual with fixed but unobserved $(\theta_0, \bm\eta_0)$, it is difficult to choose an optimal bandwidth by bias-variance optimization with the non-zero intrinsic terms in Theorem \ref{thm:bias-variance}, because in this case the asymptotic mean squared error $(B_0+O_p(b^2))^2+V_0+ O_p((Kb^d)^{-1})$ may not have a local minimum. 
However, if we assume the target individual 0 is randomly chosen from the population, the target estimator $\Theta_0$ is the estimator that minimizes the overall risk under squared loss, i.e. a Bayes estimator, because it minimizes the squared loss pointwise for any $\bm z_0$.
Furthermore, immediately from Theorem \ref{thm:hdkernel}, $\hat\theta_0^{(c)}$ is a consistent estimator for $\Theta_0$. The overall performance of $\hat\theta_0^{(c)}$ for all individuals of the population could be optimized by choosing a proper bandwidth $b$ as stated in the following Theorem \ref{thm: overall-risk}. It provides a way to optimize the bandwidth globally.
% , though the individual asymptotic optimal bandwidth is still undetermined without knowing the intrinsic bias.

\begin{thm}\label{thm: overall-risk}
Assume Assumption \ref{assump:dense} - \ref{thm:condition} hold, then the estimator $\hat\theta_0^{(c)}$ has the following Bayes risk under squared loss
$$\mathbb E[(\hat\theta_0^{(c)}-\theta_0)^2]=R_0+O_p(b^4)+O_p\left(\dfrac{1}{Kb^d}\right),$$
where 
$$R_0=Var[\Theta_0-\theta_0]$$
is the risk of the Bayes estimator $\Theta_0=E_\pi[\theta|\bm z_0]$, and all above expectations is taken over all random variables assuming an empirical population distribution $\pi(\cdot)$ for $\theta_0$.
The optimal choice of the bandwidth $b$ is $b\asymp K^{1/(d+4)}$ with the corresponding overall risk $R_0+ O_p(K^{4/(d+4)})$.
\end{thm}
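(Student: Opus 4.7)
The plan is to benchmark $\hat\theta_0^{(c)}$ against the Bayes estimator $\Theta_0(\bm z_0;\ell_2)=\mathbb E_\pi[\theta_0\mid\bm z_0]$ and split the risk into an irreducible Bayes part plus a kernel-regression estimation part. Starting from the identity
\begin{equation*}
\mathbb E[(\hat\theta_0^{(c)}-\theta_0)^2]=\mathbb E[(\hat\theta_0^{(c)}-\Theta_0)^2]+2\,\mathbb E[(\hat\theta_0^{(c)}-\Theta_0)(\Theta_0-\theta_0)]+\mathbb E[(\Theta_0-\theta_0)^2],
\end{equation*}
the last summand is exactly $R_0=\mathrm{Var}[\Theta_0-\theta_0]$, since iterated expectation gives $\mathbb E[\Theta_0-\theta_0]=0$. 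It remains to bound the first and the cross terms.

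For the first term I would exploit the Nadaraya-Watson interpretation of $\hat\theta_0^{(c)}$. Since $\hat\theta_k$ is unbiased for $\theta_k$, $\mathbb E[\hat\theta_k\mid\bm z_k]=\mathbb E[\theta_k\mid\bm z_k]=\Theta_0(\bm z_k;\ell_2)$, so $\hat\theta_0^{(c)}$ is precisely the kernel regression estimator of the function $\bm z\mapsto\Theta_0(\bm z;\ell_2)$ at $\bm z_0$, based on i.i.d.\ draws $(\bm z_k,\hat\theta_k)$. The pointwise bias-variance expansions that underlie Theorem \ref{thm:hdkernel}, valid under Assumptions \ref{assump:dense}--\ref{thm:condition} together with the smoothness of $\Theta_0$ inherited from continuity of $g$ and Assumption \ref{assump:smooth}, give pointwise bias $O_p(b^2)$ and pointwise variance $O_p((Kb^d)^{-1})$; integrating against the marginal of $\bm z_0$ preserves the rates, so $\mathbb E[(\hat\theta_0^{(c)}-\Theta_0)^2]=O_p(b^4)+O_p((Kb^d)^{-1})$.

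For the cross term, the critical observation is orthogonality after conditioning on $\mathcal G:=\{\bm z_k\}_{k=0}^K\cup\{\hat\theta_k\}_{k\neq 0}$: given $\mathcal G$, $\Theta_0$ is deterministic, and $\hat\theta_0^{(c)}=\alpha_0\hat\theta_0+c$ with $\alpha_0=\mathcal K(0)/\sum_k\mathcal K(\|\bm z_k-\bm z_0\|/b)$ and $c$ both $\mathcal G$-measurable. Using $\mathbb E[\theta_0\mid\mathcal G]=\mathbb E[\hat\theta_0\mid\mathcal G]=\Theta_0$ and $\mathbb E[\hat\theta_0\theta_0\mid\mathcal G]=\mathbb E[\theta_0^2\mid\bm z_0]$ (which follow from independence across individuals and $\mathbb E[\hat\theta_0\mid\theta_0]=\theta_0$), a short algebraic cancellation yields
\begin{equation*}
\mathbb E[(\hat\theta_0^{(c)}-\Theta_0)(\Theta_0-\theta_0)\mid\mathcal G]=-\alpha_0\,\mathrm{Var}[\theta_0\mid\bm z_0].
\end{equation*}
Assumption \ref{assump:dense} forces $\sum_k\mathcal K(\|\bm z_k-\bm z_0\|/b)\asymp Kb^d$ in probability, so $\alpha_0\asymp(Kb^d)^{-1}$ and the cross term contributes $O_p((Kb^d)^{-1})$, absorbed into the variance rate.

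Assembling the three pieces gives $\mathbb E[(\hat\theta_0^{(c)}-\theta_0)^2]=R_0+O_p(b^4)+O_p((Kb^d)^{-1})$, and balancing $b^4\asymp(Kb^d)^{-1}$ yields the optimal $b\asymp K^{-1/(d+4)}$ with residual risk $O_p(K^{-4/(d+4)})$. The main obstacle I anticipate is the cross term: a naive Cauchy-Schwarz bound is only $O(b^2)$ and would dominate the claimed $O(b^4)$ rate, so one must exploit the orthogonality in the display above to cancel the leading contribution and reveal the $(Kb^d)^{-1}$ rate. The remaining steps reduce to routine multivariate kernel-smoothing asymptotics already invoked by Theorem \ref{thm:hdkernel}.
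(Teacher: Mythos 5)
Your proof is correct, and it takes a genuinely different route from the paper. The paper fixes $\theta_0$, invokes Theorem \ref{thm:bias-variance} to get $\mathbb E_{\theta_0}[(\hat\theta_0^{(c)}-\theta_0)^2]=(B_0+O_p(b^2))^2+V_0+O_p((Kb^d)^{-1})$, and then integrates over the prior, using $\mathbb E[B_0]=0$ and $\mathbb E[B_0^2+V_0]=\mathrm{Var}[\Theta_0-\theta_0]$ to identify $R_0$; the cross term $2\,\mathbb E[B_0\cdot O_p(b^2)]$ is tacitly dropped, which is the one delicate point of that argument (a crude Cauchy--Schwarz bound would only give $O(b^2)$ there). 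You instead decompose around the Bayes estimator $\Theta_0=\mathbb E_\pi[\theta_0\mid\bm z_0]$ directly, in the spirit of Proposition \ref{thm: risk-decomposition-simple-l2}, and --- correctly recognizing that the cross term does \emph{not} vanish exactly here because $\hat\theta_0^{(c)}$ contains $\hat\theta_0$ and hence is not conditionally independent of $\theta_0$ given $\bm z_0$ alone --- you compute it in closed form as $-\alpha_0\,\mathrm{Var}[\theta_0\mid\bm z_0]$ with $\alpha_0\asymp(Kb^d)^{-1}$ by Assumption \ref{assump:dense}, so it is absorbed into the variance rate. Your conditional computation checks out ($\mathbb E[\theta_0\mid\mathcal G]=\mathbb E[\hat\theta_0\mid\mathcal G]=\Theta_0$ and $\mathbb E[\hat\theta_0\theta_0\mid\mathcal G]=\mathbb E[\theta_0^2\mid\bm z_0]$ by unbiasedness and cross-individual independence), and the first term is exactly the content of Theorem \ref{thm:hdkernel}. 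What your route buys is an explicit, exact treatment of the cross term that the paper's route leaves implicit; what the paper's route buys is a direct reuse of the frequentist bias/variance objects $B_0,V_0$ already defined in (\ref{eq: intrinsic-terms}). Minor note: your optimal bandwidth $b\asymp K^{-1/(d+4)}$ and residual $O_p(K^{-4/(d+4)})$ carry the correct (negative) exponents; the signs printed in the theorem statement appear to be typos.
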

The proof is given in Appendix. 
\par
The magnitude of the measurement error of $\bm z_k$, measured by $\sigma_z^2$, compared to that of the individual estimation error is crucial for the performance of the iGroup method. The bias and variance of iGroup estimator increase when $\sigma_z^2$ increases (see Theorem \ref{thm: intrinsic_asymp}). And the asymptotic Bayes risk $R_0$ also depends on $\sigma_z^2$. When iGroup is based on unreliable $\bm z$, it could result in a worse estimator compared to the one without any grouping. This phenomenon will be demonstrated in Section \ref{sec:simulation}.\\
\noindent
\textbf{Remark:}
Results in Theorems \ref{thm:bias-variance}, \ref{thm: intrinsic_asymp} and \ref{thm: overall-risk} can be generalized to the iGroup estimator $\tilde\theta^{(c)}_0$, which combines the objective functions, except that the target estimator changes from $\mathbb E_\pi[\theta|z_0]$ is replaced by $\argmin_\theta \mathbb E_\pi[M(\theta)|z_0]$. As shown in (\ref{eq: aggtheta_kernel}) in the Appendix, $\tilde\theta_0^{(c)}$ is asymptotically a kernel smoothing estimator with the same bias and variance rates. 

% Since the population distribution for $\bm z$ is unknown, the Bayes estimator is not directly feasible. The iGroup estimator turns out to be the Bayes estimator without knowing the prior when $K$ goes to infinity. 
% This is similar to empirical Bayes approach \citep{robbins1985empirical}, but the estimator is structured differently and our objective is different.

\subsection[]{Case 2: Without exogenous variables}\label{sec:theta-only}
In this case, we assume the exogenous variable $\bm z$ is not available. Our target estimator is $\Theta_0(\bm x;\ell_2) = \mathbb E_\pi[\theta_0 |\bm x_0]$ under squared loss and is $\Theta_0(\bm x_0; L) = \argmin_\theta \mathbb E_\pi [L(\theta, \theta_0) \mid \bm x_0]$ under a general loss function $L$. The iGroup estimation depends solely on $\hat\theta$. The weight function (\ref{eq: weight}) used in (\ref{eq: aggtheta}) and (\ref{eq: aggm}) now reduces to
\begin{equation}\label{eq: weightreduce}
w_2(\hat\theta_k,\hat\theta_0)=\dfrac{\int p(\hat\theta_k|\theta)p(\hat\theta_0|\theta)\pi(\theta)d\theta}{\int p(\hat\theta_k|\theta)\pi(\theta)d\theta\int p(\hat\theta_0|\theta)\pi(\theta)d\theta},
\end{equation}
where $\pi(\theta)$ corresponds to the unknown distribution of $\theta$ in the whole population. As discussed in Section \ref{sec:evaluate-weight}, an estimation of this weight function can be achieved by kernel density estimation on the bootstrapped samples $(\hat\theta^{(1)}_k, \hat\theta^{(2)}_k)$.
\par
The weight function (\ref{eq: weightreduce}) is used to aggregated individual unbiased estimators to the posterior mean, and to aggregate objective functions $M:\Omega_\theta\times\Omega_\theta\rightarrow \mathbb R$ to the corresponding Bayes estimator under certain loss function, as shown in Theorems \ref{thm: theta-theta-only} and \ref{thm:m-theta-only}.
\begin{thm}\label{thm: theta-theta-only}
Suppose $w_2(\hat\theta_k,\hat\theta_0)$ is defined as in Equation (\ref{eq: weightreduce}) and $\hat\theta_k$ is a sufficient and unbiased estimator of $\theta_k$ for all $k$, then as $K\rightarrow\infty$:
$$\hat \theta^{(c)}_0\rightarrow \Theta_0(\bm x_0;\ell_2)\quad\text{in probability}.$$
Furthermore, if $\mathbb E_{\hat\theta_0}[w_2^2(\hat\theta_k, \hat\theta_0)]<\infty$ for any fixed $\hat\theta_0$ and $\mathbb E_\pi[\hat\theta^2]<\infty$, then
$$\sqrt{K}(\hat\theta^{(c)}_0 - \Theta_0)=O_p(1).$$
\end{thm}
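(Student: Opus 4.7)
The plan is to exploit the Radon--Nikodym identity that appears just after Equation (\ref{eq: weight}): for any measurable $h$ and any fixed $\hat\theta_0$,
$$\mathbb{E}\big[h(\hat\theta_k)\,w_2(\hat\theta_k,\hat\theta_0)\,\big|\,\hat\theta_0\big]=\mathbb{E}\big[h(\hat\theta_k)\,\big|\,\hat\theta_0,\theta_k=\theta_0\big],$$
where the left-hand expectation is under the marginal law induced by $\theta_k\sim\pi$ and $\hat\theta_k\mid\theta_k\sim p(\cdot;\theta_k)$, and the right-hand expectation is under the posterior--predictive law obtained by drawing $\theta_k$ from the posterior $\pi(\cdot\mid\hat\theta_0)$. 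This identity converts the (weighted) population average of $\hat\theta_k$'s into the Bayes posterior object of interest.

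For the consistency claim, first I would observe that, conditional on $\hat\theta_0$, the triples $(\hat\theta_k,\theta_k,w_2(\hat\theta_k,\hat\theta_0))$ for $k=1,\ldots,K$ are iid. Applying the identity with $h(u)=u$, together with unbiasedness $\mathbb{E}[\hat\theta_k\mid\theta_k]=\theta_k$ and sufficiency of $\hat\theta_0$ (so that $\mathbb{E}_\pi[\theta_0\mid\bm{x}_0]=\mathbb{E}_\pi[\theta_0\mid\hat\theta_0]$), I get
$$\mathbb{E}[\hat\theta_k\,w_2\mid\hat\theta_0]=\int\theta\,\pi(\theta\mid\hat\theta_0)\,d\theta=\Theta_0(\bm{x}_0;\ell_2),$$
and analogously $\mathbb{E}[w_2\mid\hat\theta_0]=1$ with $h\equiv 1$. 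A conditional weak law of large numbers, applied separately to the numerator and denominator of $\hat\theta_0^{(c)}$ (the $k=0$ term is $O_p(1/K)$ after normalization by $K$), plus Slutsky/continuous mapping on the ratio, delivers $\hat\theta_0^{(c)}\xrightarrow{p}\Theta_0(\bm{x}_0;\ell_2)$.

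For the $\sqrt{K}$-rate, the extra assumptions $\mathbb{E}[w_2^2\mid\hat\theta_0]<\infty$ and $\mathbb{E}_\pi[\hat\theta^2]<\infty$ would be used to bound the conditional variances of the iid summands. The weight part is immediate: $\mathrm{Var}(w_2\mid\hat\theta_0)\le\mathbb{E}[w_2^2\mid\hat\theta_0]<\infty$. For $\hat\theta_k w_2$, I would combine Cauchy--Schwarz with the Radon--Nikodym identity (rewriting $\mathbb{E}[\hat\theta_k^2 w_2^2\mid\hat\theta_0]=\mathbb{E}_{\mathrm{pred}}[\hat\theta_k^2 w_2\mid\hat\theta_0]$) to control the second moment by the assumed moments. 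A conditional CLT (or simply Chebyshev) then gives
$$\sqrt{K}\!\left(\tfrac{1}{K}\sum_{k=0}^K\hat\theta_k w_2-\Theta_0\right)=O_p(1),\qquad \sqrt{K}\!\left(\tfrac{1}{K}\sum_{k=0}^K w_2-1\right)=O_p(1),$$
and a one-step delta expansion of the map $(a,b)\mapsto a/b$ at $(\Theta_0,1)$ finishes the argument.

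The main obstacle is the conditional second-moment bound on $\hat\theta_k w_2$: the stated hypotheses are quite terse, so one must be careful in combining the Radon--Nikodym identity with Cauchy--Schwarz (and perhaps an implicit integrability requirement on $w_2$) to guarantee a finite conditional variance so that the CLT applies. Once the invariance property of $w_2$ is in place, the rest of the argument is a routine LLN/CLT plus a continuous mapping step.
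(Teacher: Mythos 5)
Your proposal is correct and follows essentially the same route as the paper's proof: a law of large numbers applied separately to the numerator and denominator of $\hat\theta_0^{(c)}$, identification of the limits ($\mathbb E[\hat\theta_k w_2\mid\hat\theta_0]=\mathbb E_\pi[\theta_0\mid\hat\theta_0]$ via unbiasedness, $\mathbb E[w_2\mid\hat\theta_0]=1$), Slutsky for the ratio, and a CLT plus Slutsky for the $\sqrt{K}$ rate. The only cosmetic difference is that you route the limit computation through the Radon--Nikodym identity from Section \ref{sec: agg-weight}, while the appendix carries out the same integral manipulation directly.
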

The proof is given in Appendix.\\
% \begin{cor}
% Under squared loss and prior $\pi(\theta)$, the above combined estimator converges in probability to the Bayes estimator.
% \end{cor}
% Suppose $L:\Theta\times\Theta\rightarrow\mathbb R$ is a loss function, the corresponding Bayes estimator is
% $$\hat\theta_0^{Bayes}=\argmin_\theta\int L(\theta,\theta_0)p(\hat\theta_0|\theta_0)\pi(\theta_0)d\theta_0.$$
For the aggregated estimator (\ref{eq: aggm}), suppose the objective function $M: \Omega_\theta\times\Omega_\theta\rightarrow \mathbb R$ used satisfies
\begin{equation}\label{eq: objective_function}
    \int M(\theta,\hat\theta)p(\hat\theta|\theta')d\hat\theta=L(\theta,\theta')+C(\theta'),
\end{equation}
where $L$ is non-negative and $L(\theta, \theta)=0$ for all $\theta$, and $C$ is constant with respect to $\theta$.
Then $L$ is the loss function corresponding to $M$, under which the target estimator is
$$\Theta_0(\bm x_0; L)=\argmin_\theta\int L(\theta,\theta_0)p(\hat\theta_0|\theta_0)\pi(\theta_0)d\theta_0.$$
For example, if the objective function $M$ is the negative log-likelihood function
$M(\theta,\hat\theta)=-\log p(\hat\theta|\theta),$
then the corresponding loss function $L(\theta,\theta')$ is the Kullback-Leibler divergence of the given parameters. 
\begin{thm}\label{thm:m-theta-only}
If for any given $\hat\theta$, $M(\theta, \hat\theta)$ as a function of $\theta$ is convex and second-order differentiable, then the combined estimator $\tilde \theta_0^{(c)}$ using the objective function $M$ converges in probability to the target estimator under the loss function $L$ as $K\rightarrow\infty$:
$$\hat\theta_0^{(c)}=\argmin_\theta \sum_{k=0}^Kw_2(\hat\theta_k,\hat\theta_0)M(\theta,\hat\theta_k)\xrightarrow{\enskip P\enskip}\Theta_0(\bm x_0; L).$$
Furthermore, if $\mathbb E_{\hat\theta_0}[w_2(\hat\theta_k,\hat\theta_0)M'_\theta(\theta_0, \hat\theta)]^2<\infty$ for any fixed $\hat\theta_0$,
$$\sqrt{K}(\tilde\theta_0^{(c)} - \Theta_0)=O_p(1).$$
\end{thm}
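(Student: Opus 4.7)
The plan is to view $\tilde\theta_0^{(c)}$ as a weighted M-estimator and carry out a three-step argument: (i) identify the limit of the normalized empirical objective function, (ii) upgrade pointwise convergence in probability to argmin convergence by exploiting convexity, and (iii) obtain the $\sqrt{K}$-rate from a one-term Taylor expansion of the score combined with a central limit theorem. As in the parallel Theorem \ref{thm: theta-theta-only}, $\hat\theta_0$ plays the role of a sufficient statistic so that conditioning on $\bm x_0$ and on $\hat\theta_0$ yield the same posterior for $\theta_0$.

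For step (i), define
$$F_K(\theta) := \frac{1}{K+1}\sum_{k=0}^K w_2(\hat\theta_k, \hat\theta_0)\, M(\theta, \hat\theta_k).$$
Specializing the Radon--Nikodym identity noted after (\ref{eq: weight}) to the no-$\bm z$ case, for any measurable $h$,
$$\mathbb E_{p(\hat\theta)}\bigl[w_2(\hat\theta, \hat\theta_0)\, h(\hat\theta)\bigr] = \int h(\hat\theta)\, p(\hat\theta\mid\hat\theta_0)\, d\hat\theta,$$
where $p(\hat\theta\mid\hat\theta_0)=\int p(\hat\theta\mid\theta)p(\theta\mid\hat\theta_0)\,d\theta$. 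Applying the weak law of large numbers pointwise in $\theta$ (conditionally on $\hat\theta_0$, the summands for $k\ge 1$ are i.i.d.) gives $F_K(\theta) \to F(\theta) := \int M(\theta,\hat\theta)\, p(\hat\theta\mid\hat\theta_0)\,d\hat\theta$ in probability. Then by Fubini together with assumption (\ref{eq: objective_function}),
$$F(\theta) = \int \bigl[L(\theta,\theta') + C(\theta')\bigr]\, p(\theta'\mid\hat\theta_0)\,d\theta' = \mathbb E_\pi[L(\theta,\theta_0)\mid\hat\theta_0] + \text{const},$$
whose unique minimizer is exactly $\Theta_0 = \Theta_0(\bm x_0; L)$.

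For step (ii), since $M(\cdot,\hat\theta)$ is convex, each realization of $F_K$ is convex in $\theta$. A standard fact from convex analysis states that pointwise convergence in probability of convex functions implies uniform convergence on compact subsets of the interior of the common domain, and hence convergence of the unique minimizers. This yields $\tilde\theta_0^{(c)} \to \Theta_0$ in probability. For step (iii), the first-order condition $F'_K(\tilde\theta_0^{(c)}) = 0$ together with a one-term Taylor expansion of $F'_K$ about $\Theta_0$ gives
$$\tilde\theta_0^{(c)} - \Theta_0 = -\frac{F'_K(\Theta_0)}{F''_K(\tilde\xi)}$$
for some $\tilde\xi$ between the two points. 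Since $\Theta_0$ minimizes $F$, the Radon--Nikodym identity above applied to $M'_\theta(\Theta_0,\cdot)$ yields $\mathbb E[F'_K(\Theta_0)\mid\hat\theta_0] = F'(\Theta_0) = 0$, and the stated second-moment assumption $\mathbb E_{\hat\theta_0}[w_2(\hat\theta_k,\hat\theta_0)M'_\theta(\theta_0,\hat\theta)]^2 < \infty$ lets the classical central limit theorem deliver $\sqrt{K}\, F'_K(\Theta_0) = O_p(1)$. Meanwhile $F''_K(\tilde\xi) \to F''(\Theta_0) > 0$ in probability by the consistency from step (ii) combined with continuity of $M''_\theta$ (positivity at $\Theta_0$ comes from strict minimization and convexity), giving the claimed $O_p(1)$ rate.

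The main obstacle is the argmin-convergence step when the parameter space is not compact: without convexity one would need an additional well-separation or tightness condition. Here the convexity assumption on $M$ is what lets pointwise convergence of $F_K$ transfer directly to convergence of its minimizer. A secondary subtlety is the identification $\Theta_0(\hat\theta_0; L) = \Theta_0(\bm x_0; L)$, which rests on $\hat\theta_0$ being a sufficient statistic, and the justification that $F'_K$ and $F''_K$ inherit the same type of convergence as $F_K$ itself, which again follows from convex-analytic arguments (the subgradients of convex functions converge along with the functions on compacta).
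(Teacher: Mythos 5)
Your proposal is correct and follows essentially the same route as the paper's proof: identify the pointwise limit of the weighted criterion via the change-of-measure property of $w_2$ together with the law of large numbers, pass from pointwise convergence to argmin convergence using convexity, and obtain the $\sqrt{K}$ rate from a Taylor expansion of the first-order condition combined with the central limit theorem and Slutsky's theorem. The only cosmetic difference is in the middle step, where you invoke the convexity lemma (pointwise convergence in probability of convex functions implies convergence of the unique minimizers) while the paper, following its Appendix proof of Theorem \ref{thm:mest}, sandwiches the root of the monotone score $\Psi_K$ between $\Theta_0-\epsilon$ and $\Theta_0+\epsilon$; both arguments exploit convexity in the same essential way.
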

The proof is given in Appendix.
\par
The finite second moment conditions in Theorems \ref{thm: theta-theta-only} and \ref{thm:m-theta-only} are satisfied in most cases. Both Theorems \ref{thm: theta-theta-only} and \ref{thm:m-theta-only} assume an accurate estimation of the weight $w_2(\hat\theta_k, \hat\theta_0)$ (with an error rate smaller than $O_p(K^{-1/2})$. 
With the accurate weights $w_2(\hat\theta_k, \hat\theta_0)$, both iGroup estimators have faster convergence rates to the target estimator $\Theta_0$ than the nonparametric one in Theorems \ref{thm:hdkernel}. 
\par
When no accurate estimations for $w_2(\hat\theta_k, \hat\theta_0)$ are feasible, we proposed an approximate estimator for $w_2(\hat\theta_k,\hat\theta_0)$ in Section \ref{sec:evaluate-weight}, using a set of bootstrap samples $(\hat\theta_k^{(1)},\hat\theta_k^{(2)})$ for $k=0,\dots, K$. 
When $\bm z$ is not available, 
the integral $\int p(\hat\theta_k|\theta)p(\hat\theta_0|\theta)\pi(\theta)d\theta$ can be estimated by a kernel density estimator in a lower dimensional space:
% $$\int p(\hat\theta|\theta)p(\hat\theta'|\theta)\pi(\theta)d\theta\approx \dfrac{1}{K}\sum_{k=1}^K\mathcal K_1\left(\dfrac{|\hat\theta_k^{(1)}-\hat\theta|}{b_1}\right)\mathcal K_2\left(\dfrac{|\hat\theta_k^{(2)}-\hat\theta'|}{b_2}\right),$$
$$ \dfrac{1}{K+1}\sum_{j=0}^K\mathcal K_1\left(\dfrac{|\hat\theta_j^{(1)}-\hat\theta_k|}{b_1}\right)\mathcal K_2\left(\dfrac{|\hat\theta_j^{(2)}-\hat\theta_0|}{b_2}\right),$$
where $\mathcal K_1$ and $\mathcal K_2$ are two kernel functions with $b_1$, $b_2$ the corresponding bandwidths. The bootstrap estimation of the weight $w_2(\hat\theta_k, \hat\theta_0)$ has a nonparametric error rate $O_p(K^{-1/(d'+2})$, where $d'$ is the dimension of $\theta_0$. This inaccuracy gives rise to the final error rate in Theorem \ref{thm: theta-theta-only} and \ref{thm:m-theta-only} such that for $\hat\theta_0^{(c)}$ (or $\tilde \theta_0^{(c)}$) constructed based on $\hat w_2(\hat\theta_k, \hat\theta_0)$ with error rate $O_p(K^{-1/(d'+2)})$, 
$\hat\theta_0^{(c)} - \Theta_0(\bm x_0;\ell_2) = O_p(K^{-1/(d'+2)})$ and $\tilde\theta_0^{(c)} - \Theta_0(\bm x_0;L) = O_p(K^{-1/(d'+2)})$. Both are slower than $O_p(K^{-1/2})$.

\par
The performance of the target estimator $\Theta_0(\bm x_0;\ell_2)$ in estimating $\theta_0$ strongly depends on the accuracy of individual level $\hat\theta_k$. Define the bias and variance of the target estimator $\Theta_0(\bm x_0;\ell_2)=\mathbb E_\pi[\theta_0\mid \hat\theta_0]$ by 
\begin{equation}
B_0(\theta_0) = \mathbb E_{\theta_0}[\Theta_0(\bm x_0;\ell_2)] - \theta_0,\quad
V_0(\theta_0) = \text{Var}_{\theta_0}[\Theta_0(\bm x_0;\ell_2)].
\label{eq: bias-variance-define-theta-only}
\end{equation}
Suppose $\hat\theta_0 = \theta_0 + \zeta_0$ with $\mathbb E[\zeta_0] = 0$ and $\mathbb E[\zeta_0^2] = \sigma_\theta^2$. Similar to Theorem \ref{thm: intrinsic_asymp}, $B_0$ and $V_0$ are of order $\sigma_\theta^2$ when $\sigma_\theta^2\rightarrow 0$.
\begin{thm}
\label{thm: intrinsic_asymp_theta_only}
Suppose $\zeta_0$ has finite higher moments. Then, when $\sigma_\theta^2\rightarrow 0$, the bias and variance of the target estimator $\Theta_0(\bm x_0;\ell_2)$ with respect to a fixed $\theta_0$ are
$$B_0\asymp \sigma_\theta^2,\quad V_0\asymp \sigma_\theta^2,$$
where $B_0$ and $V_0$ are defined in (\ref{eq: bias-variance-define-theta-only}).
\end{thm}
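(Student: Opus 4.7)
The plan is to perform a small-noise Laplace expansion of the posterior mean $\Theta_0(\bm x_0;\ell_2) = \mathbb E_\pi[\theta_0 \mid \hat\theta_0]$ in powers of $\sigma_\theta$, mirroring the argument used for Theorem \ref{thm: intrinsic_asymp} in the exogenous-variable setting. I write the individual-level sampling density in rescaled form $p(\hat\theta\mid\theta) = \sigma_\theta^{-1} f((\hat\theta - \theta)/\sigma_\theta)$, where $f$ is the standardized density of $\zeta_0/\sigma_\theta$ with unit variance and (by hypothesis) finite higher moments. The change of variables $u = (\hat\theta_0 - \theta)/\sigma_\theta$ in
\begin{equation*}
\Theta_0 = \frac{\int \theta\, p(\hat\theta_0\mid\theta)\pi(\theta)\,d\theta}{\int p(\hat\theta_0\mid\theta)\pi(\theta)\,d\theta}
\end{equation*}
turns the posterior mean into a ratio of integrals against $f(u)$, with $\pi$ evaluated at $\hat\theta_0 - \sigma_\theta u$.

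Next I Taylor-expand $\pi(\hat\theta_0 - \sigma_\theta u) = \pi(\hat\theta_0) - \sigma_\theta u\, \pi'(\hat\theta_0) + \tfrac{1}{2}\sigma_\theta^2 u^2 \pi''(\hat\theta_0) + O(\sigma_\theta^3)$, apply $\int u f(u)\,du = 0$ and $\int u^2 f(u)\,du = 1$, and divide out the leading factor $\pi(\hat\theta_0)$. After the common term $\tfrac{1}{2}\sigma_\theta^2 \hat\theta_0\,\pi''(\hat\theta_0)/\pi(\hat\theta_0)$ cancels between numerator and denominator, I obtain the Tweedie-type identity
\begin{equation*}
\Theta_0 = \hat\theta_0 + \sigma_\theta^2\,(\log\pi)'(\hat\theta_0) + O(\sigma_\theta^3).
\end{equation*}
From this expansion the bias and variance at fixed $\theta_0$ follow directly. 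A further Taylor expansion of $(\log\pi)'$ around $\theta_0$ and the facts $\mathbb E_{\theta_0}[\hat\theta_0-\theta_0]=0$, $\mathrm{Var}_{\theta_0}[\hat\theta_0]=\sigma_\theta^2$ give
\begin{align*}
B_0(\theta_0) &= \sigma_\theta^2\,(\log\pi)'(\theta_0) + O(\sigma_\theta^3) \asymp \sigma_\theta^2,\\
V_0(\theta_0) &= \sigma_\theta^2 + O(\sigma_\theta^3) \asymp \sigma_\theta^2,
\end{align*}
where the variance correction from the nonlinear $\sigma_\theta^2 (\log\pi)'(\hat\theta_0)$ term contributes only $O(\sigma_\theta^4)$.

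The main obstacle, as in Theorem \ref{thm: intrinsic_asymp}, is controlling the remainder terms uniformly so that the formal Taylor expansion is actually valid. This requires smoothness of $\pi$ in a neighborhood of $\theta_0$ (twice continuous differentiability is implicit, matching the role played by $g$ in Theorem \ref{thm: intrinsic_asymp}) together with finiteness of the moments $\int |u|^k f(u)\,du$ for $k\le 3$, which is precisely what the assumption ``$\zeta_0$ has finite higher moments'' supplies; dominated convergence then justifies the exchange of limit and integral. The multivariate extension is routine after replacing derivatives of $\pi$ by gradients and Hessians, leaving the $\asymp \sigma_\theta^2$ orders intact. A secondary subtlety is that $(\log\pi)'(\theta_0)$ may vanish at specific $\theta_0$ (e.g.\ a mode of $\pi$); as in Theorem \ref{thm: intrinsic_asymp}, the $\asymp$ statement is read as ``at most of order $\sigma_\theta^2$ and generically of that exact order.''
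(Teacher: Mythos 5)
Your proposal is correct and follows essentially the same route as the paper: both derive the Tweedie-type expansion $\Theta_0 = \hat\theta_0 + \sigma_\theta^2\,\pi'(\hat\theta_0)/\pi(\hat\theta_0) + o(\sigma_\theta^2)$ via a second-order expansion of $\pi$ against the rescaled noise density (the paper packages this as a small lemma proved for Theorem \ref{thm: intrinsic_asymp}), and then read off $B_0$ and $V_0$ by taking moments under $p(\hat\theta_0\mid\theta_0)$. The only difference is cosmetic — you Taylor-expand $(\log\pi)'$ around $\theta_0$ while the paper re-applies the same integral lemma to $\Theta_0$ and $\Theta_0^2$ — and your caveat about $\pi'(\theta_0)=0$ applies equally to the paper's own statement.
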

The proof is provided in Appendix.
\par
When $\hat\theta_0$ is exact such that $\sigma_\theta=0$, the target estimator equals to the true parameter $\theta_0$ as the weight function $w_2(\hat\theta_k, \hat\theta_0)$ assigns zero weight for all other individuals except individual 0. 
Similar results hold for the target estimator $\Theta_0(\bm x_0;L)$.

\subsection{Case 3: The complete case}\label{sec:complete-case}
When both $\hat \theta$ and $\bm z$ are available and reasonably accurate, we should use both information to improve the inference via grouping. Assuming $\hat\theta$ is sufficient for $\theta_0$, the target estimator %of $\theta_0$ 
is $\Theta_0(\bm x_0, \bm z_0;\ell_2) = \mathbb E_\pi[\theta_0\mid\hat\theta_0, \bm z_0]$ under squared loss and $\Theta_0(\bm x_0, \bm z_0; L) = \argmin_\theta \mathbb E_\pi[L(\theta, \theta_0)\mid\hat\theta_0, \bm z_0]$ under other loss function $L$. 
The following results are based on a combination of both information.
\begin{thm}\label{thm:theta-complete}
Suppose $\hat\theta_k$ is a sufficient and unbiased estimator for $\theta_k$, and $\hat\theta_0^{(c)}$ is a combined estimator as in (\ref{eq: aggtheta}) with the weight functions (\ref{eq: weight0}), (\ref{eq: weight1}) and (\ref{eq: weight}), where $\mathcal K(\cdot)$ is a kernel function satisfying Assumption \ref{thm:condition}. Then under Assumptions (\ref{assump:dense}) and (\ref{assump:smooth})
$$\hat\theta_0^{(c)}\rightarrow \Theta_0(\bm x_0, \bm z_0;\ell_2)\quad \text{in probability}.$$
With the optimal bandwidth $\hat b$ chosen to be $\hat b\asymp K^{1/(d+4)}$, the optimal mean squared error is $\mathbb E[\hat\theta_0^{(c)}-\Theta_0]^2\asymp K^{-4/(d+4)}$.
\end{thm}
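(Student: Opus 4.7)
The proof plan mirrors that of Theorem \ref{thm:hdkernel}, extending the Nadaraya--Watson analysis to accommodate the additional $w_2$ factor. Write $\hat\theta_0^{(c)}=N_K/D_K$ with
$$N_K=\frac{1}{K+1}\sum_{k=0}^K \hat\theta_k\, w_1(\bm z_k,\bm z_0)\, w_2(\hat\theta_k,\hat\theta_0\mid\bm z_0,\bm z_k),\quad D_K=\frac{1}{K+1}\sum_{k=0}^K w_1(\bm z_k,\bm z_0)\, w_2(\hat\theta_k,\hat\theta_0\mid\bm z_0,\bm z_k).$$
Working conditionally on $(\hat\theta_0,\bm z_0)$, I treat the pairs $(\hat\theta_k,\bm z_k)$ for $k\geqslant 1$ as i.i.d.\ draws from the joint marginal $p(\hat\theta,\bm z)$ and apply a law of large numbers to concentrate $N_K$ and $D_K$ around their conditional means.

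The key step exploits the Radon--Nikodym identity from Section \ref{sec: agg-weight}: for any bounded measurable $h$ and any fixed $\bm z_k$,
$$\mathbb{E}\bigl[h(\hat\theta_k)\,w_2(\hat\theta_k,\hat\theta_0\mid\bm z_0,\bm z_k)\bigm|\bm z_k\bigr] = \mathbb{E}_{p(\hat\theta\mid\hat\theta_0,\bm z_0)}[h(\hat\theta)],$$
whose right-hand side does not depend on $\bm z_k$. Applied with $h\equiv 1$ it equals $1$; applied with $h(\hat\theta)=\hat\theta$, the sufficiency and unbiasedness of $\hat\theta_k$ combined with the tower property give $\mathbb{E}_{p(\hat\theta\mid\hat\theta_0,\bm z_0)}[\hat\theta]=\mathbb{E}_\pi[\theta_0\mid\hat\theta_0,\bm z_0]=\Theta_0(\bm x_0,\bm z_0;\ell_2)$. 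Iterating the tower over $\bm z_k$ produces $\mathbb{E}[N_K]=\Theta_0\,\mathbb{E}[w_1(\bm z_1,\bm z_0)]$ and $\mathbb{E}[D_K]=\mathbb{E}[w_1(\bm z_1,\bm z_0)]$, so the ratio of expectations is exactly $\Theta_0$. Consistency $\hat\theta_0^{(c)}\to\Theta_0$ in probability then follows from concentration of $N_K$ and $D_K$ under Assumption \ref{assump:dense}.

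For the rate, a standard kernel computation gives $\mathbb{E}[w_1(\bm z_1,\bm z_0)]\asymp b^d p(\bm z_0)$ and, assuming $\mathbb{E}[w_2^2\mid\bm z_1]$ is bounded on a neighborhood of $\bm z_0$, $\mathrm{Var}(N_K),\mathrm{Var}(D_K)\asymp b^d/K$; the delta method then gives $\mathrm{Var}(N_K/D_K)=O((Kb^d)^{-1})$. For the bias, a second-order Taylor expansion of the integrands of $\mathbb{E}[N_K]$ and $\mathbb{E}[D_K]$ around $\bm z_0$ (using Assumption \ref{assump:smooth} and the smoothness of $p(\bm z)$ and of the conditional densities involved) contributes the usual Nadaraya--Watson $O(b^2)$ bias. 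Balancing $b^4$ against $(Kb^d)^{-1}$ yields $\hat b\asymp K^{-1/(d+4)}$ with MSE $\asymp K^{-4/(d+4)}$.

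The main obstacle is controlling the random weight $w_2(\hat\theta_k,\hat\theta_0\mid\bm z_0,\bm z_k)$, which depends on the localizing variable $\bm z_k$ through its denominator $p(\hat\theta_k\mid\bm z_k)$. Unlike in Theorem \ref{thm:hdkernel}, where the kernel depends only on $\bm z_k-\bm z_0$, here an extra continuity and moment argument is required: (i) establish, via Assumption \ref{assump:smooth} and continuity of $p(\hat\theta\mid\bm z)$, that $\bm z_k\mapsto w_2(\hat\theta_k,\hat\theta_0\mid\bm z_0,\bm z_k)$ is Lipschitz on a neighborhood of $\bm z_0$ so that the Taylor expansion is legitimate; and (ii) verify $\mathbb{E}[w_2^2\mid\bm z_k]<\infty$ uniformly on that neighborhood, paralleling the moment hypotheses used in Theorems \ref{thm: theta-theta-only} and \ref{thm:m-theta-only}. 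Once these regularity preliminaries are in place, the remaining bookkeeping is the routine kernel-smoothing argument from Theorem \ref{thm:hdkernel}.
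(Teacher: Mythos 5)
Your proposal follows essentially the same route as the paper's proof: write $\hat\theta_0^{(c)}=N_K/D_K$, apply a law of large numbers to numerator and denominator separately, and use the Radon--Nikodym property of $w_2$ --- the cancellation of $p(\hat\theta_k\mid\bm z_k)$ against the sampling density --- together with sufficiency and unbiasedness of $\hat\theta_k$ to identify the limit of the ratio as $\mathbb E_\pi[\theta_0\mid\hat\theta_0,\bm z_0]$. The paper performs the identical cancellation by direct computation of $\mathbb E[w\hat\theta]$ and $\mathbb E[w]$, factoring $\int\mathcal K(\|\bm z-\bm z_0\|/b)\,p(\bm z)\,d\bm z$ out of both; your added regularity remarks (continuity of $w_2$ in $\bm z_k$, conditional second moments) are points the paper leaves implicit.

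There is, however, an internal tension in your rate argument that you need to resolve. You correctly observe that $\mathbb E\bigl[h(\hat\theta_k)\,w_2(\hat\theta_k,\hat\theta_0\mid\bm z_0,\bm z_k)\bigm|\bm z_k\bigr]$ does not depend on $\bm z_k$, hence $\mathbb E[N_K]=\Theta_0\,\mathbb E[w_1(\bm z_1,\bm z_0)]$ and $\mathbb E[D_K]=\mathbb E[w_1(\bm z_1,\bm z_0)]$ \emph{exactly}. But then the ``usual Nadaraya--Watson $O(b^2)$ bias'' you invoke two sentences later cannot come from Taylor-expanding the integrands in $\bm z$: the $\bm z$-dependent factor is identical in numerator and denominator and cancels exactly in the ratio of expectations, so the leading bias of $N_K/D_K$ arises only from the nonlinearity of the ratio and from the $k=0$ term, both of order $(Kb^d)^{-1}$, not $b^2$. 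As written, your argument first proves the cancellation and then denies it. To be fair, the paper's own proof asserts the same $O(b^2)$ kernel-smoothing bias without noticing that its own factorization makes that term vanish from the ratio, so your conclusion agrees with the stated theorem; but if you keep the exact-weight formulation you should either exhibit a genuine $O(b^2)$ term (there is none under exact $w_2$) or note that the $b^4$-versus-$(Kb^d)^{-1}$ tradeoff really only materializes when $w_2$ is itself estimated nonparametrically, as in Section \ref{sec:evaluate-weight}.
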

The proof is given in Appendix.
\par
% With observing both $\bm z_0$ and $\hat\theta_0$, the Bayes estimator under the loss function $L:\Omega_\theta\times\Omega_\theta\rightarrow\mathbb R$ is
% \begin{equation}\label{eq: complete-bayes}
%     \hat\theta^{Bayes}_0=\argmin_\theta\int L(\theta,\theta_0)p(\hat\theta_0|\theta_0)\pi(\theta_0| \bm z_0)d\theta_0.
% \end{equation}

Let $M(\theta, \hat\theta)$ be the corresponding objective function as defined in (\ref{eq: objective_function}). We have that the aggregated estimator (\ref{eq: aggm}) based on the objective function $M(\theta, \hat\theta)$ converges to the target estimator $\Theta_0(\bm x_0, \bm z_0; L)$ as shown in the following Theorem \ref{thm:m-complete}.

\begin{thm}\label{thm:m-complete}
If for any given $\hat\theta$, $M(\theta, \hat\theta)$ as a function of $\theta$ is convex and second-order differentiable, then under Assumptions (\ref{assump:dense}) and (\ref{assump:smooth}), the combined estimator $\tilde\theta^{(c)}$ using the objective function $M$ satisfying (\ref{eq: objective_function}) converges to the target estimator:
$$\tilde\theta^{(c)}_0=\argmin_\theta \sum_{k=1}^Kw(\hat\theta_k,\bm z_k; \hat\theta_0, \bm z_0)M(\theta,\hat\theta_k)\xrightarrow{\enskip P\enskip}\Theta_0(\bm x_0, \bm z_0; L).$$
With the optimal bandwidth $\hat b$ chosen to be $\hat b\asymp K^{1/(d+4)}$, the optimal mean squared error is $\mathbb E[\tilde\theta_0^{(c)}-\Theta_0]^2\asymp K^{-4/(d+4)}$.
\end{thm}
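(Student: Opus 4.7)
The plan is to prove Theorem \ref{thm:m-complete} as the $M$-estimator analog of Theorem \ref{thm:theta-complete}, combining the kernel-smoothing argument of Theorem \ref{thm:mest} (which handles the $\bm z$-direction) with the Radon--Nikodym reweighting argument of Theorem \ref{thm:m-theta-only} (which handles the $\hat\theta$-direction). Because $M(\theta,\hat\theta)$ is convex in $\theta$ by hypothesis, pointwise convergence of the aggregated objective will be transferred to convergence of its argmin via a standard convexity lemma, after which the MSE rate follows by balancing the bias and variance of the associated score.

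Write
$$Q_K(\theta) \;=\; \frac{1}{K b_1^d}\sum_{k=1}^{K} w_1(\bm z_k,\bm z_0)\,w_2(\hat\theta_k,\hat\theta_0\mid\bm z_0,\bm z_k)\,M(\theta,\hat\theta_k),$$
so that $\tilde\theta_0^{(c)} = \argmin_\theta Q_K(\theta)$. Under Assumptions \ref{assump:dense}, \ref{assump:smooth} and \ref{thm:condition}, a Nadaraya--Watson style expansion as in the proof of Theorem \ref{thm:mest} gives, for each fixed $\theta$,
\begin{align*}
\mathbb E[Q_K(\theta)] &= \Bigl(\!\int \mathcal K_1(u)\,du\Bigr) f(\bm z_0)\,\mathbb E_{p(\hat\theta\mid\bm z_0)}\!\bigl[w_2(\hat\theta,\hat\theta_0\mid\bm z_0,\bm z_0)\,M(\theta,\hat\theta)\bigr] + O(b_1^2),\\
\mathrm{Var}[Q_K(\theta)] &= O\!\bigl((K b_1^d)^{-1}\bigr),
\end{align*}
where $f$ denotes the density of $\bm z$. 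By the Radon--Nikodym property of $w_2$ derived immediately after (\ref{eq: weight}), the inner expectation equals $\mathbb E_{p(\hat\theta\mid\hat\theta_0,\bm z_0)}[M(\theta,\hat\theta)]$. Applying (\ref{eq: objective_function}) with the outer integration against $p(\theta'\mid\hat\theta_0,\bm z_0)$ shows that the population objective is, up to an additive constant (in $\theta$) and a positive multiplicative factor, equal to $\mathbb E_\pi[L(\theta,\theta_0)\mid\hat\theta_0,\bm z_0]$, which by the defining property (\ref{eq:ThetaL}) of the target estimator is uniquely minimized at $\Theta_0(\bm x_0,\bm z_0;L)$.

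The bias/variance bound above gives $Q_K(\theta)\to Q_\infty(\theta)$ in probability for each fixed $\theta$, with unique minimizer $\Theta_0(\bm x_0,\bm z_0;L)$. Since each $Q_K$ is convex, the convexity lemma of Hjort and Pollard upgrades pointwise convergence to uniform convergence on compact sets and yields $\tilde\theta_0^{(c)}\to\Theta_0(\bm x_0,\bm z_0;L)$ in probability. For the rate, a Taylor expansion of the first-order condition $Q_K'(\tilde\theta_0^{(c)}) = 0$ around $\Theta_0$ reduces the MSE of $\tilde\theta_0^{(c)}$ (up to a bounded multiplicative constant coming from the population Hessian $Q_\infty''(\Theta_0)$, which is positive by convexity and (\ref{eq: objective_function})) to that of $Q_K'(\Theta_0)$. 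The latter is a kernel-smoothed average and so has squared bias $O(b_1^4)$ and variance $O((Kb_1^d)^{-1})$; balancing the two yields $\hat b\asymp K^{-1/(d+4)}$ with optimal MSE $\asymp K^{-4/(d+4)}$.

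The main obstacle I anticipate is controlling the $w_2$ factor uniformly over $\bm z_k$ in a shrinking neighborhood of $\bm z_0$, because the fourth argument of $w_2(\hat\theta_k,\hat\theta_0\mid\bm z_0,\bm z_k)$ varies with the summand whereas the Radon--Nikodym identity is cleanest when that argument equals $\bm z_0$ exactly. The smoothness of $p(\hat\theta\mid\bm z)$ in $\bm z$ (which follows from Assumption \ref{assump:smooth} combined with continuity of $q$ in $\bm\eta$) yields $w_2(\hat\theta_k,\hat\theta_0\mid\bm z_0,\bm z_k) = w_2(\hat\theta_k,\hat\theta_0\mid\bm z_0,\bm z_0) + O(b_1)$ on the effective support of $w_1$, and this perturbation is absorbed into the $O(b_1^2)$ bias term. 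A secondary technical point is verifying the $w_2$-integrability needed to bound the variance of the score (analogous to the $\mathbb E_{\hat\theta_0}[w_2(\hat\theta_k,\hat\theta_0)M'_\theta(\theta_0,\hat\theta)]^2<\infty$ condition in Theorem \ref{thm:m-theta-only}); in the $d$-dimensional kernel regime this contribution is dominated by the nonparametric variance rate $(Kb_1^d)^{-1}$ and therefore does not change the optimal bandwidth.
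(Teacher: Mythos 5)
Your proposal follows essentially the same route as the paper's proof: compute the population limit of the weighted objective, identify it via the Radon--Nikodym property of $w_2$ and (\ref{eq: objective_function}) as $\int L(\theta,\theta')p(\theta'\mid\hat\theta_0,\bm z_0)\,d\theta'$ plus a constant, pass to the argmin by convexity, and obtain the rate by Taylor-expanding the score and balancing the $O(b^2)$ bias against the $O((Kb^d)^{-1})$ variance. The one obstacle you flag --- the dependence of $w_2$ on $\bm z_k$ through its fourth argument --- is handled in the paper without any smoothness expansion: since $\bm z_k$ enters $w_2$ only via the denominator $p(\hat\theta_k\mid\bm z_k)$, it cancels exactly against the factor $p(\hat\theta\mid\bm z)$ in the joint density $p(\hat\theta,\bm z)$, so the expectation factors into a $\hat\theta$-integral times the kernel-smoothed density $\int \mathcal K(\|\bm z-\bm z_0\|/b)p(\bm z)\,d\bm z\to p(\bm z_0)$.
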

The proof is given in Appendix. 
\par
Define the bias and variance of the target estimator $\Theta_0(\bm x_0, \bm z_0;\ell_2)$ as 
\begin{equation}
    B_0(\theta_0) = \mathbb E_{\theta_0}[\Theta_0(\bm x_0, \bm z_0;\ell_2)] -\theta_0,\quad
    V_0(\theta_0) = \text{Var}_{\theta_0}[\Theta_0(\bm x_0, \bm z_0;\ell_2)].
    \label{eq: intrinsic_complete}
\end{equation}
The asymptotic rate of $B_0$ and $V_0$ as $\sigma_\theta^2$ or $\sigma_z^2$ approaches zero is shown in Theorem \ref{thm: intrinsic_asymp_complete_case}.

\begin{thm}
\label{thm: intrinsic_asymp_complete_case}
Suppose $g(\cdot)$ is second order differentiable and $\epsilon_k$ and $\zeta_k$ have finite higher moments.
If $B_0$ and $V_0$ are as defined in (\ref{eq: intrinsic_complete}), then\\
(i) for a fixed $\sigma_z^2$, when $\sigma_\theta^2\rightarrow 0$,
$$B_0\asymp \sigma_\theta^2,\quad V_0\asymp \sigma_\theta^2.$$
(ii) for a fixed $\sigma_\theta^2$, when $\sigma_z^2\rightarrow 0$,
$$B_0\asymp \sigma_z^2,\quad V_0\asymp \sigma_z^2.$$
\end{thm}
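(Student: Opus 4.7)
The plan is to reduce this two-noise setting to the single-noise cases already handled by Theorems~\ref{thm: intrinsic_asymp} and \ref{thm: intrinsic_asymp_theta_only}. The key is the conditional independence $\hat\theta_0 \perp \bm z_0 \mid \theta_0$, which follows from the hierarchy in Figure~\ref{fig: diagram}: integrating out $\bm\eta_0$ given $\theta_0$ yields $p(\bm x_0,\bm z_0\mid\theta_0)=p(\bm x_0\mid\theta_0)\,p(\bm z_0\mid\theta_0)$, and hence the same factorization holds for $\hat\theta_0=\hat\theta_0(\bm x_0)$. Using this, I would rewrite the two-noise posterior in two equivalent ways,
\begin{equation*}
p(\theta_0 \mid \hat\theta_0, \bm z_0) \;=\; \frac{p(\hat\theta_0\mid\theta_0)\,p(\theta_0\mid\bm z_0)}{p(\hat\theta_0\mid\bm z_0)} \;=\; \frac{p(\bm z_0\mid\theta_0)\,p(\theta_0\mid\hat\theta_0)}{p(\bm z_0\mid\hat\theta_0)},
\end{equation*}
so that $p(\theta_0\mid\bm z_0)$ plays the role of an effective prior in a pure-$\hat\theta$ problem and, symmetrically, $p(\theta_0\mid\hat\theta_0)$ plays the role of an effective prior in a pure-$\bm z$ problem.

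For part (i), I would fix $\sigma_z^2$ and interpret $\Theta_0(\bm x_0,\bm z_0;\ell_2)$ as the posterior mean of $\theta_0$ given $\hat\theta_0$ under the $\bm z_0$-dependent prior $\tilde\pi_{\bm z_0}(\theta):=p(\theta\mid\bm z_0)$. This puts us exactly in the setting of Theorem~\ref{thm: intrinsic_asymp_theta_only}, except that the fixed prior $\pi$ has been replaced by the data-dependent $\tilde\pi_{\bm z_0}$. Repeating the Laplace/Taylor expansion of $\mathbb E[\theta_0\mid\hat\theta_0]$ around $\hat\theta_0=\theta_0$ as $\sigma_\theta^2\to 0$, which depends only on smoothness of the prior and the finite higher moments of $\zeta_0$, gives an expansion of the form $\Theta_0=\hat\theta_0+A(\hat\theta_0,\bm z_0)\,\sigma_\theta^2+o(\sigma_\theta^2)$ with a bounded coefficient $A$ built from the score of $\tilde\pi_{\bm z_0}$. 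Taking the outer expectation over $(\hat\theta_0,\bm z_0)$ at fixed $\theta_0$ and using $\mathbb E_{\theta_0}[\hat\theta_0]=\theta_0$ yields $B_0\asymp\sigma_\theta^2$, and computing the variance gives $V_0=\mathrm{Var}_{\theta_0}(\hat\theta_0)+O(\sigma_\theta^4)\asymp\sigma_\theta^2$.

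Part (ii) is symmetric: fix $\sigma_\theta^2$ and take $\tilde\pi_{\hat\theta_0}(\theta):=p(\theta\mid\hat\theta_0)$ as the effective prior, so that $\Theta_0(\bm x_0,\bm z_0;\ell_2)$ is the posterior mean in a pure-$\bm z$ problem. Theorem~\ref{thm: intrinsic_asymp} then applies with the random prior $\tilde\pi_{\hat\theta_0}$, and its proof strategy (Taylor-expanding $g$ around $\bm\eta_0$ and using the moments of $\epsilon_0$) delivers $B_0,V_0\asymp\sigma_z^2$ once the outer expectation over $\hat\theta_0$ at fixed $\theta_0$ is taken. The hypotheses needed --- second-order differentiability of $g$, continuity of $\pi$, and finite higher moments of $\epsilon_0$ --- are precisely what the theorem assumes.

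The main obstacle I anticipate is verifying that the single-noise proofs, which are stated for a fixed prior $\pi$, survive the replacement of $\pi$ by the data-dependent $\tilde\pi_{\bm z_0}$ or $\tilde\pi_{\hat\theta_0}$. What is really needed is uniform (in the conditioning variable) control of the remainder terms in the Laplace/Taylor expansions, so that after averaging over the non-active noise at fixed $\theta_0$ the $O(\sigma^2)$ rate is preserved. Smoothness of $g$, continuity of $\pi$, and the higher-moment assumptions on $\epsilon_0$ and $\zeta_0$ together yield bounded derivatives of the effective prior on a neighborhood of $\theta_0$, which should give the required uniformity. Beyond this, the only genuinely new bookkeeping relative to the earlier single-noise theorems is tracking the cross integrals involving the non-active noise while the active one vanishes, and this reduces to standard dominated-convergence arguments under the stated moment assumptions.
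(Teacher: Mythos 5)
Your proposal is correct and follows essentially the same route as the paper: the paper's proof also reduces the complete case to Theorems \ref{thm: intrinsic_asymp_theta_only} and \ref{thm: intrinsic_asymp} by replacing the prior $\pi(\theta_0)$ with the effective prior $p(\bm z_0\mid\theta_0)\pi(\theta_0)$ (equivalently $p(\theta_0\mid\bm z_0)$) in part (i) and symmetrically in part (ii), relying on the same conditional factorization $p(\hat\theta_0,\bm z_0\mid\theta_0)=p(\hat\theta_0\mid\theta_0)p(\bm z_0\mid\theta_0)$. Your write-up is in fact more explicit than the paper's about the outer expectation over the non-active noise and the uniformity of the remainder terms, which the paper leaves implicit.
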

The proof is provided in Appendix.
The bias and variance of the target estimator is of the order of the more accurate one between $\bm z_0$ and $\hat\theta_0$. Especially, when either is exact such that $\sigma_z^2=0$ or $\sigma_\theta^2=0$, the target estimator equals the true parameter $\theta_0$. 

\subsection{Further results on risk decomposition}
Let $\hat\theta^{(c)}_0$ be an iGroup estimator as defined in (\ref{eq: aggtheta}) based on information sets $\{\bm z\}$, $\{\hat\theta\}$ or $\{\hat\theta, \bm z\}$ as in Sections \ref{sec:exogenous-variable}, \ref{sec:theta-only} and \ref{sec:complete-case}, respectively. Let $\Theta_0$ be the target estimator in any of the three cases: $\Theta_0(\bm x_0;\ell_2)$, $\Theta_0(\bm z_0;\ell_2)$ or $\Theta_0(\bm x_0, \bm z_0;\ell_2)$, depending on the information set used in $\hat\theta_0^{(c)}$. We have $\hat\theta_0^{(c)}\rightarrow \Theta_0$ in probability.
When both $\hat\theta$ and $\bm z$ are available for all individuals, the overall risk of $\hat\theta^{(c)}_0$ under the prior $\pi(\theta)$  can be decomposed into three components as shown in Proposition \ref{thm:decomposition} as an extension to Proposition \ref{thm: risk-decomposition-simple-l2}.
\begin{prop}\label{thm:decomposition}
% Suppose $\hat\theta^{(c)}_0$ is an iGroup estimator as defined in (\ref{eq: aggtheta}), and $\hat\theta^{(c)}_0 \rightarrow \hat\Theta_0$ in probability when $K\rightarrow\infty$ for every $\hat\theta_0$ and $\bm z_0$. 
% Then 
% $$
% R(\hat\theta^{(c)}_0) = R_{np}(\hat\theta^{(c)}_0) + 
%                         R_{lim}(\hat\theta^{(c)}_0) +
%                         R_{0},
% $$
% where $R(\hat\theta^{(c)}_0)=\mathbb E[(\hat\theta^{(c)}_0-\theta_0)^2]$ is the overall risk of $\hat\theta^{(c)}_0$ under squared loss and prior $\pi(\theta)$, and 
% $$
% R_{np}(\hat\theta^{(c)}_0)= \mathbb E[(\hat\theta^{(c)}_0-\hat\Theta_0)^2],\quad
% R_{lim}(\hat\theta^{(c)}_0)=\mathbb E[(\hat\Theta_0 -\Theta_0(\bm x_0, \bm z_0;\ell_2))^2],\quad
% R_{0} = \mathbb E[(\Theta_0(\bm x_0, \bm z_0;\ell_2)-\theta_0)^2]
% $$ 
% are the risk components from the nonparametric estimation, deviation of the limit estimator from $\Theta_0(\bm x_0, \bm z_0;\ell_2)$ and the intrinsic risk of $\Theta_0(\bm x_0, \bm z_0;\ell_2)$, respectively.
% \end{prop}
Suppose $\hat\theta^{(c)}_0$ is an iGroup estimator as defined in (\ref{eq: aggtheta}) with the target estimator $\Theta_0$. 
Then 
$$
R(\hat\theta^{(c)}_0) = R_{np}(\hat\theta^{(c)}_0) + 
                        R_{target}(\Theta_0),
$$
where $R(\hat\theta^{(c)}_0)=\mathbb E[(\hat\theta^{(c)}_0-\theta_0)^2]$ is the overall risk of $\hat\theta^{(c)}_0$ under squared loss and prior $\pi(\theta_0)$, and 
$$
R_{np}(\hat\theta^{(c)}_0)= \mathbb E[(\hat\theta^{(c)}_0-\Theta_0)^2],\quad
R_{target}(\Theta_0)=\mathbb E[(\Theta_0 -\theta_0)^2]
$$ 
are the risk components from the nonparametric estimation and the target estimator itself, respectively.\\
Furthermore, assuming both $\bm x$ and $\bm z$ are available, for $\Theta_0=\Theta_0(\bm x_0;\ell_2)$ or $\Theta_0=\Theta_0(\bm z_0;\ell_2)$, which only uses partial information, we have
$$
R_{target}(\Theta_0) = R_{inf}(\Theta_0) + R_0,
$$
where $R_{inf}(\Theta_0)= \mathbb E[(\Theta_0 -\Theta_0(\bm x_0, \bm z_0;\ell_2))^2]$ is the risk premium resulting from using partial information, and $R_0 = \mathbb E[(\Theta_0(\bm x_0, \bm z_0;\ell_2)-\theta_0)^2]$ is the overall risk of $\Theta_0(\bm x_0, \bm z_0;\ell_2)$.
\end{prop}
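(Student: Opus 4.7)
The plan is to prove both identities via add-and-subtract decompositions of the squared error, with the cross-product terms killed by tower arguments keyed on the Bayes-optimality property that every $\Theta_0$ is a posterior mean of $\theta_0$ under $\pi$.

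For the main identity $R(\hat\theta^{(c)}_0) = R_{np}(\hat\theta^{(c)}_0) + R_{target}(\Theta_0)$, I would expand
\begin{equation*}
(\hat\theta^{(c)}_0 - \theta_0)^2 = (\hat\theta^{(c)}_0 - \Theta_0)^2 + 2(\hat\theta^{(c)}_0 - \Theta_0)(\Theta_0 - \theta_0) + (\Theta_0 - \theta_0)^2,
\end{equation*}
take expectations, and identify the outer two pieces as $R_{np}(\hat\theta^{(c)}_0)$ and $R_{target}(\Theta_0)$ by definition. The content of the identity then reduces to showing the cross-product $C := \mathbb{E}[(\hat\theta^{(c)}_0 - \Theta_0)(\Theta_0 - \theta_0)]$ vanishes. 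I would condition on the $\sigma$-algebra $\mathcal{I}_0$ on which $\Theta_0$ is defined (one of $\sigma(\bm x_0)$, $\sigma(\bm z_0)$, $\sigma(\bm x_0, \bm z_0)$ depending on the case). Because $\Theta_0 = \mathbb{E}_\pi[\theta_0 \mid \mathcal{I}_0]$, the inner expectation $\mathbb{E}[\theta_0 - \Theta_0 \mid \mathcal{I}_0]$ is zero; the across-individual independence in the generative model — $\theta_k$'s are iid under $\pi$ and independent of $\theta_0$, so the data on other individuals carries no information about $\theta_0$ beyond $\mathcal{I}_0$ — is what allows the $\hat\theta^{(c)}_0$ factor to be carried through the tower so that $C = 0$.

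For the second identity $R_{target}(\Theta_0) = R_{inf}(\Theta_0) + R_0$, I would apply the same add-and-subtract trick with $\Theta_0(\bm x_0, \bm z_0; \ell_2)$ as the intermediate pivot:
\begin{equation*}
(\Theta_0 - \theta_0)^2 = (\Theta_0 - \Theta_0(\bm x_0, \bm z_0; \ell_2))^2 + 2(\Theta_0 - \Theta_0(\bm x_0, \bm z_0; \ell_2))(\Theta_0(\bm x_0, \bm z_0; \ell_2) - \theta_0) + (\Theta_0(\bm x_0, \bm z_0; \ell_2) - \theta_0)^2.
\end{equation*}
After expectations, the outer terms match $R_{inf}(\Theta_0)$ and $R_0$ from the statement. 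The cross term vanishes after conditioning on $(\bm x_0, \bm z_0)$: both $\Theta_0$ (a function of a coarser sub-$\sigma$-algebra contained in $\sigma(\bm x_0, \bm z_0)$) and $\Theta_0(\bm x_0, \bm z_0; \ell_2)$ are measurable at that level, and by the very definition of the full-information target $\mathbb{E}_\pi[\theta_0 - \Theta_0(\bm x_0, \bm z_0; \ell_2) \mid \bm x_0, \bm z_0] = 0$, so the inner cross expectation collapses to zero.

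The main obstacle is the first cross term in Cases 1 and 2, where $\hat\theta^{(c)}_0$ is built from data strictly richer than $\mathcal{I}_0$ — for example, in Case 1 the estimator $\sum_k \hat\theta_k w_k / \sum_k w_k$ involves every $\hat\theta_k$ and not only $\bm z_0$-measurable quantities — so $\hat\theta^{(c)}_0$ cannot simply be pulled out of the conditional expectation at the $\mathcal{I}_0$ level. The clean treatment is a two-step tower: condition first on the full data $(\mathcal{D}_x,\mathcal{D}_z)$, making $\hat\theta^{(c)}_0$ and $\Theta_0$ both measurable, then reduce the inner conditional expectation of $\theta_0 - \Theta_0$ back to $\mathcal{I}_0$ using the cross-individual independence that eliminates any contribution from $\{(\bm x_k,\bm z_k)\}_{k\ge 1}$. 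Once this conditional independence is in hand, the first cross term collapses and both identities follow; the second identity is comparatively mechanical.
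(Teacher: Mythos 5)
Your overall strategy---add-and-subtract around $\Theta_0$, then around $\Theta_0(\bm x_0,\bm z_0;\ell_2)$, killing cross terms by conditioning---is exactly the paper's, and your treatment of the second identity is correct and matches the paper: conditioning on $(\bm x_0,\bm z_0)$ makes both $\Theta_0$ and $\Theta_0(\bm x_0,\bm z_0;\ell_2)$ measurable, and $\mathbb E_\pi[\theta_0-\Theta_0(\bm x_0,\bm z_0;\ell_2)\mid \bm x_0,\bm z_0]=0$ finishes it. The gap is in the first cross term for Cases 1 and 2, precisely at the step you flag as the ``main obstacle.'' Cross-individual independence does \emph{not} reduce the inner conditional expectation ``back to $\mathcal I_0$''; it only reduces it to $\sigma(\bm x_0,\bm z_0)$. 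Concretely, after conditioning on all the data, $\mathbb E[\theta_0\mid \text{data}]=\mathbb E_\pi[\theta_0\mid \bm x_0,\bm z_0]$, so the cross term becomes $\mathbb E\bigl\{\mathbb E[\hat\theta_0^{(c)}-\Theta_0\mid \hat\theta_0,\bm z_0]\,\bigl(\Theta_0-\mathbb E_\pi[\theta_0\mid \hat\theta_0,\bm z_0]\bigr)\bigr\}$, and the second factor is \emph{not} zero when $\Theta_0=\mathbb E_\pi[\theta_0\mid\bm z_0]$ or $\mathbb E_\pi[\theta_0\mid\hat\theta_0]$. Your assertion that the other individuals' data ``carries no information about $\theta_0$ beyond $\mathcal I_0$'' is false in these cases: individual $0$'s own complementary observation ($\hat\theta_0$ in Case 1, $\bm z_0$ in Case 2) sits inside the conditioning set and does carry information about $\theta_0$ beyond $\mathcal I_0$.

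The missing ingredient---which is the actual content of the paper's proof of this step---is a \emph{second} tower argument justified by the structure of the estimator rather than by independence across individuals: in Case 2 neither $\hat\theta_0^{(c)}$ nor $\Theta_0$ depends on $\bm z$, so $\mathbb E[\hat\theta_0^{(c)}-\Theta_0\mid\hat\theta_0,\bm z_0]$ is a function of $\hat\theta_0$ alone; one may therefore condition the remaining factor further on $\hat\theta_0$, which collapses $\mathbb E_\pi[\theta_0\mid\hat\theta_0,\bm z_0]$ to $\mathbb E_\pi[\theta_0\mid\hat\theta_0]=\Theta_0$ and kills the term (symmetrically, integrate out $\hat\theta_0$ first in Case 1, where the weights depend only on the $\bm z$'s). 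Without invoking this $\mathcal I_0$-measurability of $\mathbb E[\hat\theta_0^{(c)}-\Theta_0\mid\hat\theta_0,\bm z_0]$, the cross term does not collapse, so as written your argument establishes the first decomposition only in Case 3.
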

The proof is provided in Appendix.
\par
The decomposition in Proposition \ref{thm:decomposition} reveals a guideline to optimize the iGroup estimator. The overall risk of iGroup estimator $\hat\theta_0^{(c)}$ can be decomposed into two parts: one from the nonparametric estimation of the target estimator and the other from the risk of the target estimator itself. The risk component $R_{np}$ involves the bandwidth $b$ in the nonparametric estimation. The corresponding optimal bandwidth is chosen as in a high-dimensional kernel smoothing problem (see Theorems \ref{thm:hdkernel}, \ref{thm: overall-risk} and \ref{thm:theta-complete}), since the bandwidth does not appear in the other risk terms. 
\par
The risk component $R_{target}$ evaluates the performance of the target estimator. Different choices in constructing iGroup weight correspond to different $\Theta_0$'s.
Such difference is revealed by decomposing $R_{target}$ into two parts: $R_{inf}$ is the risk term arising from using partial information and $R_0$ is the risk of the target estimator $\Theta_0(\bm x_0, \bm z_0;\ell_2)$, which incorporates the full information set.
Since $R_{inf}$ obtains its minimum at $\Theta_0 = \Theta_0(\bm x_0, \bm z_0;\ell_2)$, it is always (asymptotically) optimal to use the full information set $\{\hat\theta, \bm z\}$ in grouping, if both are available as in the complete case. On the other hand, if $\hat\theta$ (or $\bm z$) is extremely noisy such that $\Theta_0=\mathbb E_\pi[\theta_0\mid\bm z_0]\approx \mathbb E_\pi[\theta_0\mid \hat\theta_0, \bm z_0]$ (or $\Theta_0=\mathbb E_\pi[\theta_0\mid\hat\theta_0]\approx \mathbb E_\pi[\theta_0\mid \hat\theta_0, \bm z_0]$, respectively), it is more practical to use $\bm z$ only (or $\hat\theta$ only, respectively) for grouping, since it will have similar performance but less computational cost, and finite sample variation.
%Note that, grouping with respect to $\hat\theta$ only has a faster convergence rate than the complete case as stated in Theorems \ref{thm: theta-theta-only} and \ref{thm:theta-complete}. However, such a faster convergence rate is achieved at the cost of the increase in the overall risk. 
\par
The last risk component $R_0$ is the minimum overall risk one can achieve. In our approach, such a minimum risk can be asymptotically reached when both $\hat\theta$ and $\bm z$ are included in grouping and the number of individuals $K$ approaches infinity. When $\hat\theta$ or $\bm z$ is exact, $\Theta_0(\bm x_0, \bm z_0;\ell_2) = \mathbb E_\pi[\theta_0|\hat\theta_0, \bm z_0] = \theta_0$ and $R_0$ is 0. In this case, all iGroup estimators in (\ref{eq: aggtheta}) converges to $\theta_0$. The three risk components of different iGroup models are compared in Table \ref{table: comparison}. Note that the rate of $R_{np}$ for Case 2 assumes an accurate evaluation of the weight function $w_2(\hat\theta_k, \theta_0)$.
\begin{table}[!htpb]
    \centering
    \begin{tabular}{|c|c|l|c|c|}
         \hline
         & \multirow{2}{*}{iGroup Set}&\multicolumn{1}{c|}{\multirow{2}{*}{$R_{np}$}} & \multicolumn{2}{c|}{$R_{target}$}\\ 
         \cline{4-5}
         & & & $R_{inf}$ & $R_0$\\ 
         \hline
         Case 1&$\{\bm z\}$&$\asymp K^{-4/(d+4)}$ & $>0$ & \\
         \cline{1-4}
         Case 2&$\{\hat\theta\}$& $\asymp K^{-1}$\hfill & $>0$ & same value \\
         \cline{1-4}
         Case 3&$\{\hat\theta, \bm z\}$&$\asymp K^{-4/(d+4)}$ & $=0$ & \\
         \hline
    \end{tabular}
    \caption{Comparison of the three risk components in different iGroup cases.}
    \label{table: comparison}
\end{table}

\par 
Similar to Proposition \ref{thm:decomposition}, the risk decomposition for the iGroup estimator $\tilde \theta_0^{(c)}$ in (\ref{eq: aggm}) is provided in Proposition \ref{thm:m-decomposition} as an extension to Proposition \ref{thm: risk-decomposition-simple-L}.
% \begin{lem}\label{thm:m-decomposition}
% Suppose the loss function $L$ is as defined in (\ref{eq: objective_function}). The iGroup estimator $\tilde \theta_0^{(c)}$ is defined in (\ref{eq: aggm}), and $\tilde\theta^{(c)}_0 \rightarrow \tilde\Theta_0$ in probability when $K\rightarrow\infty$ for every $\hat\theta_0$ and $\bm z_0$. If $L(\hat\theta, \theta)$ is second-order partially differentiable with respect to $\hat\theta$ such that $L'(\hat\theta, \theta) = \partial L/\partial \hat\theta$ and $L''(\hat\theta, \theta) = \partial^2 L/\partial \hat\theta^2$, then
% $$\tilde R(\tilde\theta_0^{(c)}) = \tilde R_{np}(\tilde\theta_0^{(c)})
% +\tilde R_{lim}(\tilde\theta_0^{(c)}) + \tilde R_0 + o(\mathbb E[(\tilde\theta_0^{(c)} - \tilde\Theta_0)^2]),
% $$
% where $\tilde R(\tilde\theta_0^{(c)}) = \mathbb E[L(\tilde\theta_0^{(c)}, \theta_0)]$ is the overall risk of $\tilde\theta_0^{(c)}$ under loss $L$ and prior $\pi(\theta)$, and 
% $$
% \tilde R_{np}(\tilde\theta_0^{(c)})=\frac{1}{2}\mathbb E[L''(\tilde\Theta_0, \theta_0)(\tilde\theta_0^{(c)}-\tilde\Theta_0)^2],\ 
% \tilde R_{lim}(\tilde\theta_0^{(c)})=\mathbb E[L(\tilde\Theta_0, \theta_0)] - \tilde R_0,\ 
% \tilde R_0=\mathbb E[L(\Theta_0(\bm x_0, \bm z_0; L), \theta_0)]
% $$
% are the risk components from the nonparametric estimation, deviation of $\tilde\Theta_0$ from $\Theta_0(\bm x_0, \bm z_0;L)$ and the intrinsic risk of $\Theta_0(\bm x_0, \bm z_0;L)$, respectively. 
% \end{lem}
\begin{prop}\label{thm:m-decomposition}
Suppose the loss function $L$ is as defined in (\ref{eq: objective_function}). The iGroup estimator $\tilde \theta_0^{(c)}$ is defined in (\ref{eq: aggm}) with the target estimator $\Theta_0$. If $L(\hat\theta, \theta)$ is second-order partially differentiable with respect to $\hat\theta$ such that $L'(\hat\theta, \theta) = \partial L/\partial \hat\theta$ and $L''(\hat\theta, \theta) = \partial^2 L/\partial \hat\theta^2$, then
$$\tilde R(\tilde\theta_0^{(c)}) = \tilde R_{np}(\tilde\theta_0^{(c)})
+\tilde R_{target}(\Theta_0) + o(\mathbb E[(\tilde\theta_0^{(c)} - \Theta_0)^2]),
$$
where $\tilde R(\tilde\theta_0^{(c)}) = \mathbb E[L(\tilde\theta_0^{(c)}, \theta_0)]$ is the overall risk of $\tilde\theta_0^{(c)}$ under loss $L$ and prior $\pi(\theta)$, and 
$$
\tilde R_{np}(\tilde\theta_0^{(c)})=\frac{1}{2}\mathbb E[L''(\Theta_0, \theta_0)(\tilde\theta_0^{(c)}-\Theta_0)^2],\ 
\tilde R_{target}(\Theta_0)=\mathbb E[L(\Theta_0, \theta_0)],
$$
are the risk components from the nonparametric estimation of the target estimator and the target estimator itself, respectively. \\
Furthermore,  assuming both $\bm x$ and $\bm z$ are available, for any $\Theta_0=\Theta_0(\bm z_0;L)$ or $\Theta_0=\Theta_0(\bm x_0;L)$, which only uses partial information, we have
$$\tilde R_{target}(\Theta_0) = \tilde R_{inf}(\Theta_0) + \tilde R_0,$$
where $\tilde R_0 = \mathbb E[L(\Theta_0(\bm x_0, \bm z_0;L), \theta_0)]$ is the overall risk of $\Theta_0(\bm x_0, \bm z_0;L)$ and $\tilde R_{inf}(\Theta_0)= \mathbb E[L(\Theta_0, \theta_0)] - \tilde R_0$ is the risk premium resulting from using partial information.
\end{prop}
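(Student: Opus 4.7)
The plan is to recognize that Proposition \ref{thm:m-decomposition} is essentially the Taylor-expansion analogue of Proposition \ref{thm: risk-decomposition-simple-l2}, and that almost all the work is already encapsulated in Proposition \ref{thm: risk-decomposition-simple-L}. The first decomposition follows by direct application of Proposition \ref{thm: risk-decomposition-simple-L} to $\delta_0 = \tilde\theta_0^{(c)}$, which is legitimate because $\tilde\theta_0^{(c)}$ defined in (\ref{eq: aggm}) is a measurable function of $(\mathcal D_x, \mathcal D_z)$. Plugging in (and reading the $\mathbb E[L(\delta_0,\theta_0)]$ on the right-hand side of Proposition \ref{thm: risk-decomposition-simple-L} as $\mathbb E[L(\Theta_0,\theta_0)]$, which is the zeroth-order term of the Taylor expansion around $\Theta_0$), I would obtain
\[
\mathbb E[L(\tilde\theta_0^{(c)}, \theta_0)] = \tfrac{1}{2}\mathbb E[L''(\Theta_0, \theta_0)(\tilde\theta_0^{(c)}-\Theta_0)^2] + \mathbb E[L(\Theta_0, \theta_0)] + o\bigl(\mathbb E[(\tilde\theta_0^{(c)}-\Theta_0)^2]\bigr),
\]
which coincides with the claimed identity $\tilde R(\tilde\theta_0^{(c)}) = \tilde R_{np}(\tilde\theta_0^{(c)}) + \tilde R_{target}(\Theta_0) + o(\mathbb E[(\tilde\theta_0^{(c)} - \Theta_0)^2])$ upon matching definitions.

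For the second decomposition, the plan reduces to a trivial algebraic identity plus a dominance argument. By adding and subtracting $\tilde R_0$,
\[
\tilde R_{target}(\Theta_0) = \mathbb E[L(\Theta_0, \theta_0)] = \mathbb E[L(\Theta_0(\bm x_0, \bm z_0;L), \theta_0)] + \bigl(\mathbb E[L(\Theta_0, \theta_0)] - \tilde R_0\bigr) = \tilde R_0 + \tilde R_{inf}(\Theta_0),
\]
which is the stated decomposition by the definitions of $\tilde R_0$ and $\tilde R_{inf}$. To justify the interpretation of $\tilde R_{inf}$ as a nonnegative risk premium, I would appeal to the defining property (\ref{eq:ThetaL}) of $\Theta_0(\bm x_0, \bm z_0;L)$ as the minimizer of $\mathbb E_\pi[L(\cdot, \theta_0)\mid \bm x_0, \bm z_0]$: since any partial-information Bayes estimator $\Theta_0(\bm x_0;L)$ or $\Theta_0(\bm z_0;L)$ is also a measurable function of $(\bm x_0, \bm z_0)$, the conditional inequality $\mathbb E_\pi[L(\Theta_0,\theta_0)\mid\bm x_0,\bm z_0] \geqslant \mathbb E_\pi[L(\Theta_0(\bm x_0,\bm z_0;L),\theta_0)\mid\bm x_0,\bm z_0]$ holds pointwise, and taking unconditional expectations yields $\tilde R_{inf}(\Theta_0)\geqslant 0$.

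There is no real obstacle here; both pieces follow immediately from results and definitions already in place. The one subtlety worth flagging is that the proof of Proposition \ref{thm: risk-decomposition-simple-L} implicitly uses the first-order optimality condition $\mathbb E_\pi[L'(\Theta_0,\theta_0)\mid \mathcal I_0]=0$ (where $\mathcal I_0$ denotes the information set defining $\Theta_0$) to kill the linear cross term in the Taylor expansion, together with the conditional independence of $\theta_0$ from the other individuals' data given $(\bm x_0,\bm z_0)$. These are already embedded in Proposition \ref{thm: risk-decomposition-simple-L} and need not be re-derived in the present proof.
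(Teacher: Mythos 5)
Your proposal is correct and follows essentially the same route as the paper: the paper's own proof of Proposition \ref{thm:m-decomposition} is exactly the second-order Taylor expansion of $L$ around $\Theta_0$ with the linear cross term $\mathbb E[L'(\Theta_0,\theta_0)(\tilde\theta_0^{(c)}-\Theta_0)]$ killed by conditional independence plus the first-order optimality of $\Theta_0$ (handled case by case over the three information sets), which is precisely the content of Proposition \ref{thm: risk-decomposition-simple-L} that you invoke, and the second decomposition is the same definitional add-and-subtract. You are also right that the $\mathbb E[L(\delta_0,\theta_0)]$ on the right-hand side of Proposition \ref{thm: risk-decomposition-simple-L} must be read as $\mathbb E[L(\Theta_0,\theta_0)]$ for that statement to match its own proof.
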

The proof is given in Appendix.

\subsection{Bandwidth selection and other practical guide }\label{sec:bandwidth-selection}
For real applications, the bandwidth $b$ in the weight function (\ref{eq: weight1}) remains to be tuned. Ideally one would perform bandwidth selection to the target individual $\theta_0$. However, cross validation cannot be implemented to determine $b$ with only one estimator $\hat\theta_0^{(c)}$ for a single individual. Instead, we consider a set $\Omega_0$ around target individual 0 such that the bandwidth $b$ is tuned to minimize the averaged risk over $\Omega_0$.

% The goal is to find an optimal bandwidth $b$ such that the overall risk is minimized. For this optimization problem, leave-one-out cross validation can be implemented to determine the optimal bandwidth. . However, we still wish to adopt bandwidth selection to $\theta_0$. To determine the optimal bandwidths, we consider a set $\Omega_0$ around target 0. 
\par 
When $\Omega_0$ is chosen as the full set $\{1, 2, \dots, K\}$, it is the global bandwidth selection scheme that usually used in kernel smoothing and machine learning. However, the bandwidth selected by such global optimization is not optimal for the particular target individual 0. A cross validation set $\Omega_0$ localized to individual 0 is more appreciated to tune this individualized local bandwidth. When tuning the bandwidth in $w_1$ over $\bm z_k$'s, such a set $\Omega_0$ can be constructed based on $\bm z_0$ such as $\Omega_0(\bm z_0, \epsilon) =\{k\in \{1, \dots, K\}: \|\bm z_0 - \bm z_k\|\leqslant \epsilon\}$.
\par
Suppose $\hat\theta_k$'s are available and the individual estimators are aggregated to form an iGroup estimator as described in (\ref{eq: aggtheta}). The goal is to choose a bandwidth $b$ that minimizes the local risk function over $\Omega_0$ (under squared loss) around $\theta_0$
$$R_{\Omega_0}(b) = \mathbb E\left[\dfrac{1}{|\Omega_0|}\sum_{k\in\Omega_0} (\hat\theta_k^{(c)}-\theta_k)^2\right].$$
The cross-validation error we use is computed as 
$$CV_{\Omega_0}(b) = \dfrac{1}{|\Omega_0|}\sum_{k\in\Omega_0} \left(\hat\theta_{(-k)}^{(c)}-\hat\theta_k\right)^2,$$
where $\hat\theta_{(-k)}^{(c)}$ is the leave-one-out estimator defined by
\begin{equation}\label{eq: loo}
\hat\theta_{(-k)}^{(c)} = \dfrac{\sum_{l\neq k}\hat\theta_l w(l;k)}{\sum_{l\neq k} w(l;k)}.
\end{equation}
It is worth to point out that although the cross validation set $\Omega_0$ is localized/individualized, the leave-one-out estimators (\ref{eq: loo}) still utilize all individuals instead of limited to $\Omega_0$.
\par
It is seen in Proposition \ref{thm:lemma} that the leave-one-out cross-validation can estimate the local risk over $\Omega_0$ up to a constant and hence be useful. 
% However, the cross validation process is capable to find such a bandwidth $b^*$ that minimizes the cross-validation error $CV(b)$ rather than minimizing the infeasible risk function $R_K(b)$. The reason is that under certain conditions, the cross-validation error function is an unbiased estimation of the risk function up to a constant as $K\rightarrow\infty$.
\begin{prop}\label{thm:lemma}
Suppose $\hat\theta_k$ is an unbiased estimator for $\theta_k$ for all $k=1,\dots, K$ and the weight function $w(l;k)$ satisfies
\begin{equation}
\dfrac{w(k;k)}{\sum_{l\neq k} w(l;k)}=O\left(\dfrac{1}{K}\right).\label{eq: wcond}
\end{equation}
Then 
$$\mathbb E[CV_{\Omega_0}(b)]=R_{\Omega_0}(b) + C_{\Omega_0} +O\left(\dfrac{1}{K}\right),$$
where $C_{\Omega_0}$ is related to $\Omega_0$ but is a constant with respect to $b$.
\end{prop}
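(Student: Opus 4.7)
The plan is to start from the algebraic decomposition
$$\hat\theta_{(-k)}^{(c)} - \hat\theta_k = \bigl(\hat\theta_{(-k)}^{(c)} - \theta_k\bigr) - \bigl(\hat\theta_k - \theta_k\bigr),$$
expand the square inside $CV_{\Omega_0}(b)$, and handle the three resulting pieces separately. The cross term $-2\,\mathbb E[(\hat\theta_{(-k)}^{(c)} - \theta_k)(\hat\theta_k - \theta_k)]$ should vanish: by definition \eqref{eq: loo}, $\hat\theta_{(-k)}^{(c)}$ is a function of $\{\hat\theta_l\}_{l\neq k}$ and the exogenous $\bm z$'s only, hence independent of $\hat\theta_k$ by cross-individual independence; combined with the unbiasedness $\mathbb E[\hat\theta_k]=\theta_k$, the cross expectation is zero. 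The remaining ``own-variance'' term $\mathbb E[(\hat\theta_k-\theta_k)^2]=\mathrm{Var}(\hat\theta_k)$ does not involve the bandwidth $b$ at all, and its average over $\Omega_0$ becomes the bandwidth-free constant $C_{\Omega_0}=|\Omega_0|^{-1}\sum_{k\in\Omega_0}\mathrm{Var}(\hat\theta_k)$.

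The next step is to replace $\mathbb E[(\hat\theta_{(-k)}^{(c)} - \theta_k)^2]$ by $\mathbb E[(\hat\theta_k^{(c)} - \theta_k)^2]=R_{\Omega_0}(b)$-contribution at the cost of $O(1/K)$. Writing $W_k=w(k;k)$, $S_k=\sum_{l\ne k}w(l;k)$, $T_k=\sum_{l\ne k}\hat\theta_l w(l;k)$, direct algebra on $\hat\theta_k^{(c)}=(W_k\hat\theta_k+T_k)/(W_k+S_k)$ and $\hat\theta_{(-k)}^{(c)}=T_k/S_k$ yields the clean identity
$$\hat\theta_k^{(c)}-\hat\theta_{(-k)}^{(c)} \;=\; \frac{W_k}{W_k+S_k}\bigl(\hat\theta_k-\hat\theta_{(-k)}^{(c)}\bigr).$$
Hypothesis \eqref{eq: wcond} forces the prefactor to be $O(1/K)$, so $\hat\theta_k^{(c)}-\hat\theta_{(-k)}^{(c)}=O_p(1/K)$. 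Expanding
$$(\hat\theta_{(-k)}^{(c)}-\theta_k)^2=(\hat\theta_k^{(c)}-\theta_k)^2+2(\hat\theta_k^{(c)}-\theta_k)(\hat\theta_{(-k)}^{(c)}-\hat\theta_k^{(c)})+(\hat\theta_{(-k)}^{(c)}-\hat\theta_k^{(c)})^2$$
and applying Cauchy--Schwarz to the cross term gives a remainder of the required order. Averaging over $k\in\Omega_0$ then collapses everything into $\mathbb E[CV_{\Omega_0}(b)]=R_{\Omega_0}(b)+C_{\Omega_0}+O(1/K)$.

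The main obstacle is the uniform control of the $O(1/K)$ remainder: the Cauchy--Schwarz bound needs $\mathbb E[(\hat\theta_k^{(c)}-\theta_k)^2]$ to be bounded and $\mathbb E[(\hat\theta_k-\hat\theta_{(-k)}^{(c)})^2]$ to be bounded, both uniformly in $k\in\Omega_0$ and across the bandwidth range used for tuning. These bounds should follow from the unbiasedness of $\hat\theta_k$ together with mild moment conditions on the individual-level noise and on $\bm z$, but they must be invoked carefully so that pointwise-in-$k$ $O(1/K)$ terms aggregate to a single $O(1/K)$ after averaging, rather than degrading through $|\Omega_0|$.
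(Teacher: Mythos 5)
Your proposal is correct and follows essentially the same route as the paper's proof: the same decomposition of $\hat\theta_{(-k)}^{(c)}-\hat\theta_k$ with the cross term killed by independence and unbiasedness, the same exact algebraic identity relating $\hat\theta_{(-k)}^{(c)}$ to $\hat\theta_k^{(c)}$ (the paper writes it as $\hat\theta_{(-k)}^{(c)}-\theta_k=\hat\theta_k^{(c)}-\theta_k+\frac{w(k;k)}{\sum_{l\neq k}w(l;k)}(\hat\theta_k^{(c)}-\hat\theta_k)$, which is equivalent to your prefactor form), and the same use of condition (\ref{eq: wcond}) to absorb the correction into $O(1/K)$. Your explicit Cauchy--Schwarz treatment of the remainder and the remark about uniform-in-$k$ moment bounds are somewhat more careful than the paper's one-line absorption of that term, but they do not change the argument.
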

The proof is given in Appendix.\\
% The proof is provided in the appendix. 
\noindent\textbf{Remark I:}
A sufficient condition for the weight function to satisfy (\ref{eq: wcond}) is that the function is bounded. With bounded weights, we have 
$$\dfrac{w(k;k)}{\sum_{l\neq k} w(l;k)}\rightarrow \dfrac{w(k;k)}{K\mathbb E w(\cdot;k)}=O\left(\dfrac{1}{K}\right).$$
Common kernels such as the boxed, Gaussian and Epanechnikov kernels satisfy this condition. Our choice of weight function $(\ref{eq: weight})$ with a bounded kernel $\mathcal K$ satisfies the condition as well.\\
\textbf{Remark II:} Similar results hold for aggregating objective functions (\ref{eq: aggm}) as long as the objective function is convex and second-order differentiable, and a Taylor series expansion is available.
\par
Beside the theoretical discussions on iGroup's asymptotic performance, there are many other factors that may affect the accuracy in real applications with finite number of individuals. First of all, the weight component $w_2(\cdot)$ is estimated from bootstrapped samples. It lowers the convergence rate since bootstrapped samples from finite population are usually correlated. Secondly, computing the full weight function requires a kernel density estimation in a high dimensional space. When $K$ is finite, aggregating individuals with weights evaluated directly from a high dimensional space suffers from the lack of sample size. It often requires some feature selection procedures to reduce the dimension. 
\par
Therefore, when the weight estimation is not accurate and when the sample size is limited, the complete case may not be the best choice. In real application, we suggest using (local) cross-validation to tune the bandwidth and to choose the most appropriate weight formulation.

% \newpage
\section{Simulations}\label{sec:simulation}
\subsection{iGroup with noisy exogenous variables (Case 1 in Section \ref{sec:exogenous-variable})}\label{sec:simulation-noisy-z}
In this example, the performance of using an exogenous variable $z$ in iGroup is studied. Suppose, for each individual, the true parameter $\theta$ is a quadratic function of $\eta$:
$$\theta_k =g(\eta_k) =  (\eta_k+1)^2.$$
The relationship is set to a quadratic form because a continuous function of $z$ can be approximated by a quadratic function within a small enough neighborhood of $z_0$.  A population of size $K=1000$ is generated with their $\eta_k$'s following a Gaussian distribution $N(0.2,1)$. For each individual $k$, let $\hat\theta_k$ be a sufficient unbiased estimator of $\theta_k$ using $\bm x_k$ such that $\hat\theta_k$ is directly generated with error $\epsilon\sim N(0,\tau^2=1)$ and there is no need to generate $\bm x_k$ explicitly. $z_k$ is a noisy observation of $\eta_k$ such that $z_k\sim N(\eta_k, \sigma^2)$. 
\par
More specifically, the dataset is generated by the following hierarchical structure.
\begin{align*}
    \eta_k \sim N(0.2, 1),\quad
    \theta_k  = (\eta_k + 1)^2,\quad
    \hat\theta_k \sim N(\theta_k, 1),\quad
    z_k \sim N(\eta_k, \sigma^2),
\end{align*}
for $k=1,\dots, K$. 
%Suppose the bootstrapped versions of $\hat\theta_k$ are infeasible and hence $w_2(\cdot)$ is not available, an iGroup estimator can only be constructed by using $z_k$ for grouping, which falls in the
% We consider this as a special case in Section \ref{sec:exogenous-variable}. 
The estimator in (\ref{eq:estimator}) is used by setting $\mathcal K(\cdot)$ to the Gaussian kernel.
\par
The parameter $\sigma^2$ controls the noise level in the observed $z_k$. Both individualized performance at $\theta_0=1$ and the overall performance over the population are studied at six choices of noise levels $\sigma=0,0.2,0.4,0.6,0.8,1.0$ with 1000 replications each. 
\begin{figure}[!hpb]
\centering
\includegraphics[width=0.8\textwidth]{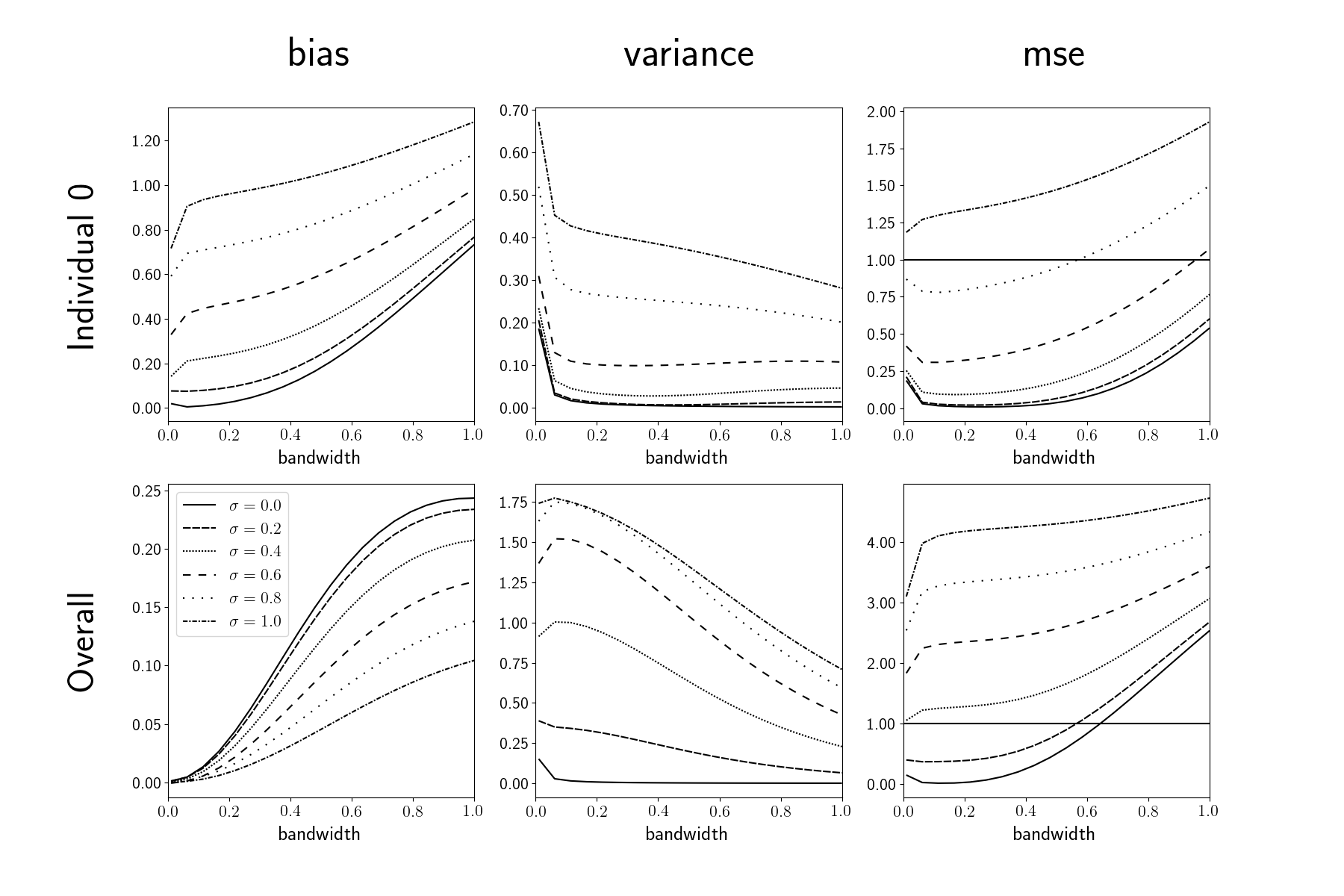}
\caption{Bias, variance and mean squared error as a function of bandwidth under different noise levels for individual 0 (top) and the population (bottom)}\label{fig:noise}
\end{figure}
\par
The in-sample performance of the iGroup estimators are demonstrated in Figure \ref{fig:noise}. The first row shows the bias, variance and mean squared error for the individual at $\theta_0=1$, while the second row plots the overall performance by averaging individual performance over the population. Every curve represents a performance measure (bias, variance or MSE) as a function of the bandwidth $b$ used in weight calculation in (\ref{eq: weight1}) and six different curves distinguish different noise levels $\sigma^2$. 
\par 
From Figure \ref{fig:noise}, it is seen that an increase in the noise level in $\bm z_k$ increases both the bias and variance of the iGroup estimator. When $\sigma>0$, an intrinsic bias is observed for individual 0 when the bandwidth shrinks to zero, while at the population level, the average bias vanishes when the bandwidth shrinks to zero as the iGroup estimator converges to the target estimator $\Theta_0(\bm z_0;\ell_2) = \mathbb E_\pi[\theta_0\mid \bm z_0]$, whose expectation is $\mathbb E_\pi[\theta_0]$. Recall that the individual estimate $\hat\theta_k$ without grouping has a risk $\tau^2=1.0$ by the simulation design. It is marked on the right panels by the horizontal line. When the noise level $\sigma$ exceeds 0.4, both the individual level and population level risk are worse than using $\hat\theta_k$ directly without grouping. Smaller noise in $z_k$ would significantly reduce the risk of the iGroup estimator.

\begin{figure}[!htpb]
\centering
\includegraphics[width=0.6\textwidth]{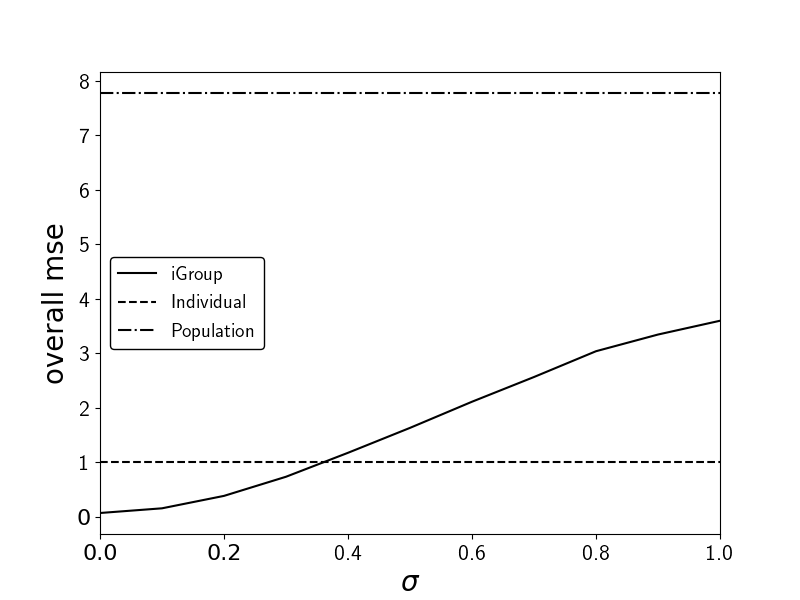}
\caption{Overall MSE of three estimators: individual level, iGroup with cross-validation and population level.}\label{fig:noisecompare}
\end{figure}

 In real applications, the performance plots such as Figure \ref{fig:noise} are not available without knowing the true parameter. As suggested in Section \ref{sec:bandwidth-selection}, an optimal bandwidth can be selected by leave-one-out cross validation. We simply use the global set $\Omega_0=\{1,\dots, K\}$ to tune the bandwidth. Figure \ref{fig:noisecompare} compares the mean square errors of three different estimators under different noise level settings for $\sigma^2$. The individual level estimator uses $\hat\theta_k$, which achieves a constant MSE at $\tau^2=1$. The population level estimator uses the averaged estimator $(\sum_{k=1}^K\hat\theta_k)/K$, assuming population homogeneity. The iGroup estimator uses the estimator ($\ref{eq:estimator}$) and selects the optimal bandwidth by leave-one-out cross validation over a grid of bandwidths. The population level estimator is always the worst because the homogeneity population assumption is invalid in this simulation. The overall MSE of the iGroup estimator is a monotone increasing function of the noise level $\sigma$, because the intrinsic bias and variance increase with $\sigma$. The iGroup estimator outperforms the individual estimator when $\sigma$ is below the threshold $\sigma=0.35$. It also suggests that the iGroup method works better when more accurate exogenous variable $z$ is used. 

\subsection{Short time series (Case 2 in Section \ref{sec:theta-only})}\label{sec:simulation-time-series}
In this simulation study, the individualized grouping learning method is applied to a set of short time series without any exogenous information. It is a simulation study for Case 2 in Section \ref{sec:theta-only}. Suppose we have $K=200$ time series following an AR(1) model. Their AR coefficients $\theta_1,\dots,\theta_{200}$ are drawn randomly from a beta-shaped distribution on $[-1,1]$ such that
\begin{equation}
    \dfrac{\theta_k+1}{2} \sim Beta(4,4),\quad k=1,\dots,200.
    \label{eq: ts-prior}
\end{equation}
The length of each time series is 10. They are generated from their stationary distributions:
\begin{align*}
x_{k,0}&\sim N\left(0,\dfrac{\sigma^2}{1-\theta_k^2}\right),\\
x_{k,t} &=\theta_kx_{k,t-1}+\epsilon_{k,t}, \quad k=1,\dots, 200,\ t=1,\dots, 10,
\end{align*}
where $\epsilon_{k,t}\sim N(0,\sigma^2)$ and $\sigma=3$.
\par
Four estimators are used and their mean squared errors averaged over the 200 individual time series are compared. The individual level estimator is based on each time series of 10 observations and does not borrow any information from the others. It is an unbiased estimator for each individual.
The iGroup1 estimator aggregates the log-likelihood functions according to (\ref{eq: aggm}), where the weight function used is (\ref{eq: weightreduce}), which is estimated by bootstrap samples. The bootstrap estimates are obtained based on multinomial samples of $(x_{t-1}, x_t)$ pairs for each individual. The bandwidth used in estimating $w_2(\hat\theta_k, \hat\theta_0)$ in (\ref{eq: weightreduce}) is chosen by cross-validation as in a kernel density estimation problem.
The iGroup2 estimator aggregates individual level estimators by the weight function in Equation (\ref{eq: weightreduce}), the same weight function as in the iGroup1 estimator. These three methods do not utilize the true prior distribution. The fourth estimator, the oracle one, uses the posterior mean as the estimator with the true population prior (\ref{eq: ts-prior}) as the prior. The oracle estimator, which is the best point estimator for $\theta_0$ given the prior information $\pi(\cdot)$, is the target estimator $\Theta_0(\bm x_0;\ell_2)$ for iGroup methods.

\begin{figure}[!htp]
\centering
\includegraphics[width=0.9\textwidth]{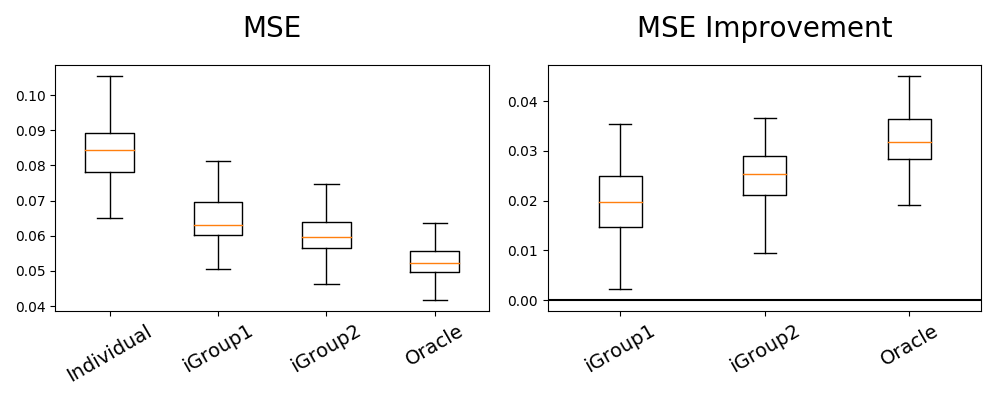}
\caption{Comparison of the averaged MSE over 200 individuals on 100 replications for four estimators}\label{fig:mse}
\end{figure}

\par
The simulation (including generating the data) is repeated 100 times. The box plots of the mean squared errors of the four estimators are reported in the left panel of Figure \ref{fig:mse}. On average, the iGroup1 and iGroup2 estimators achieve smaller mean squared errors and smaller variances compared with the individual one. The oracle estimator is the best among those four with the smallest average error and variation. The iGroup estimators are quite close to the oracle one. The slight worse performance is due to the approximation error when constructing the weight functions. Between the two iGroup estimators, iGroup2 is slightly better than iGroup1 because the loss function used in iGroup2 is the squared loss, whose overall risk is minimized by aggregating $\hat\theta_k$ (See Theorem \ref{thm: theta-theta-only}).
\par
The right panel in Figure \ref{fig:mse} plots the improvement (difference) of the mean square errors of the iGroup estimators and the oracle estimator over the individual estimator for the 100 replications. It shows that in all experiment replications, the mean square errors of the iGroup estimators are uniformly better than the individual one. Estimation does benefit from individualized grouping in this case. 

\subsection{A combined case (Case 3 in Section \ref{sec:complete-case})}\label{sec:simulation-complete-case}
In this simulation, we compare the performance of different iGroup estimators constructed on different information sets when both $\hat\theta$ and $z$ are available as in Case 3 discussed in Section \ref{sec:complete-case}. Consider a population with $n=1024$ individuals following:
\begin{align*}
\eta_k \sim N(0, 1),\quad
\theta_k=\sin (\pi\eta_k),\quad
z_k \sim N(\eta_k, \sigma^2),\quad
x_{k,1}, x_{k,2},\dots, x_{k, n}\sim N(\theta_k, \sigma_x^2),
\end{align*}
for $k=1,\dots, 1024$. $\theta$ is the parameter of interest. Individual estimator used is
$$\hat\theta_k = \dfrac{1}{n}\sum_{i=1}^n x_{k, i}\text{ for }k=1,\dots, 1024.$$
Four approaches are investigated here as special cases of the iGroup method. iGroup($\emptyset$) is the individual estimation without grouping, i.e. using $\hat\theta_k$ as the estimator.  iGroup($z$) uses the exogenous observation $z$ only for grouping and an iGroup estimator is obtained by aggregating $\hat\theta$'s using $w_1(\bm z_k, \bm z_0)$ in (\ref{eq: weight1}), where the bandwidth $b$ is selected by leave-one-out cross validation. iGroup($\hat\theta$) uses $\hat\theta_k$ only for grouping, using $w_2(\hat\theta, \hat\theta')$ in (\ref{eq: weight}) as the weight function. The weight is approximated by kernel density estimation on the bootstrapped samples with bandwidth selected by cross validation. And lastly, iGroup($z$, $\hat\theta$) uses both $z$ and $\hat\theta$ for calculating the weight function $w(\bm z_k, \hat\theta_k; \bm z_0, \hat\theta_0)$ in (\ref{eq: weight0}) as discussed in Section \ref{sec:complete-case}, with the bandwidth selected by leave-one-out cross validation.

\begin{table}[!htp]
\centering
\begin{tabular}{|c||c|c|c||c|c|c|c|}
\hline
$Configuration$ & $n$ & $\tau^2 = \sigma_x^2/n$ & $\sigma$ & iGroup($\emptyset$) & iGroup($\hat\theta$) & iGroup($z$) & iGroup($z$, $\hat\theta$)\\
\hline
1 & 5  & 0.20 & 0.10 & 0.200 & 0.163 & \textbf{0.044} & 0.154\\
2 & 5  & 0.20 & 0.15 & 0.200 & 0.163 & \textbf{0.090} & 0.163\\
3 & 5  & 0.20 & 0.20 & 0.200 & 0.163 & \textbf{0.137} & 0.170\\
4 & 5  & 0.20 & 0.30 & 0.200 & \textbf{0.163} & 0.200 & 0.179\\
5 & 10 & 0.10 & 0.10 & 0.100 & 0.089 & \textbf{0.048} & 0.059\\
6 & 10 & 0.10 & 0.15 & 0.100 & 0.089 & 0.089 & \textbf{0.070}\\
7 & 10 & 0.10 & 0.20 & 0.100 & 0.089 & 0.099 & \textbf{0.077}\\
8 & 10 & 0.10 & 0.30 & 0.100 & 0.089 & 0.100 & \textbf{0.084}\\
9 & 20 & 0.05 & 0.10 & 0.050 & 0.046 & 0.044 & \textbf{0.040}\\
10 & 20 & 0.05 & 0.15 & 0.050 & 0.046 & 0.050 & \textbf{0.044}\\
11 & 20 & 0.05 & 0.20 & 0.050 & 0.046 & 0.050 & \textbf{0.045}\\
12 & 20 & 0.05 & 0.30 & 0.050 & \textbf{0.046} & 0.050 & 0.047\\
\hline
\end{tabular}
\caption{Mean squared error for the experiment in Section \ref{sec:simulation-complete-case} in different configurations.}\label{table: scenario}
\end{table}
\par
Several different $(n,\sigma,\sigma_x)$ configurations are studied. The mean square errors are reported in Table \ref{table: scenario}. The smallest MSE across the different methods is shown in bold face for each configuration. From Table \ref{table: scenario}, it is seen that in Configurations 6 to 11, using both $\bm z$ and $\hat\theta$ outperforms the other three methods. However, it is worth to point out that it is not always the best. When $z$ is relatively accurate and $\hat\theta$ is not so as in Configurations 1, 2, 3 and 5, using $\bm z$ alone is better than involving $\hat\theta$ in the grouping. The reason is that the weight function used in the estimation is an approximation based on bootstrap sampling, which is not accurate when the sample size $n$ is too small (as discussed in Section \ref{sec:bandwidth-selection}). It is also intuitive since using inaccurate $\hat\theta_k$ for grouping may reduce the grouping quality. 
% Another reason is that adding a weight factor on $\hat\theta$ reduces the number of neighborhood points in averaging, which gives rise to the increase in variance. 
When $\bm z$ is quite noisy as in Scenario 4 and 12, using $\hat\theta$ only is better than using the complete information set. Note that when the bandwidth in $w_1(\bm z_k, \bm z_0)$ shrinks to zero, iGroup($z$) reduces to the individual estimator and the complete estimator iGroup($z$, $\hat\theta$) reduces to iGroup($\hat\theta$). However, due to the randomness from finite sample size and possible overfitting, iGroup($\hat\theta$) or iGroup($z$) sometimes performs better.
\par
In conclusion, we suggest the following brief guideline in choosing iGroup models. 
When $\hat\theta$ is relatively inaccurate and the bootstrap method has unignorable error, it is better not to use $\hat\theta$ in grouping. When $\bm z$ is relatively inaccurate, it is better to either use $\hat\theta$ only or use the full model. But when using the full model, the bandwidth needs to be tuned carefully around zero. When both $\hat\theta$ and $\bm z$ are considerably accurate, it is beneficial to consider both in grouping.  

\section{Examples}\label{sec:examples}
\subsection{Value at Risk (VaR) analysis based on Fama-French factors}\label{sec:example-var}
In this example we use iGroup to improve the estimation of Value at Risk in stock returns. Denote the return of stock $k$ in day $t$ as $r_{t,k}$. The one-day value at risk (VaR) of $r_{t,k}$, denoted as $\widehat{VaR}_{t,k}$, is defined as the smallest quantity $v$ such that the probability of the event $r_{t,k}\leqslant -v$ is no greater than a predetermined confidence level $\alpha$ (for example, 1\%). Statistically, $-v$ is the $\alpha$ quantile of $r_{t,k}$. VaR is widely used in quantitative finance and risk management to estimate the possible losses in worse cases (e.g. $1\%$  lower quantile) due to adverse market moves. In practice, it is usually difficult to estimate the value of risk because it requires a large size of data to estimate small quantiles accurately, but the market conditions change over time, which limits the available sample size.
In this application, we consider the daily return of 490 stocks in S\&P 500 for 2016. Three approaches to estimate VaR are compared.

\textbf{Individual VaR estimation using empirical quantiles:} A naive method to estimate VaR is to use the empirical quantile of $r_{t-1, k}, \dots, r_{t-S, k}$. When $\alpha$ is set to be $1\%$ and $S=100$, we have $\widehat{VaR}(t,k) = \min \{r_{t-1,k}, r_{t-2,k}, ..., r_{t-100,k}\}$. Such a quantile estimation is not very accurate. On one hand, when $S$ is small and there is not enough observations, the empirical quantile is not defined. On the other hand, $S$ cannot be very large as the market changes over time and so does the distribution of returns. 

\textbf{Market Level VaR:} The second approach assumes homogeneity among all stocks. The value-at-risk could then be estimated by pooling historical returns of all stocks. In this case, the estimator is
$$\widehat{VaR}(t,k) = Q_{\alpha}\left(\bigcup_{l=1}^{K}\bigcup_{s=1}^{S}\{r_{t-s,l}\}\right),$$
where $Q_{\alpha}(A)$ is the empirical $\alpha$ quantile estimator given a set of observations $A$. Pooling observations from other stocks bring a significant bias if the homogeneity assumption is not valid.

\textbf{iGroup Estimation:} The third approach is an application of the iGroup learning method. Assume on each day, each stock return follows the Fama-French three factor model \citep{fama1993common}: 
\begin{align*}
r_{t,k} &= \alpha_{t,k}+r_f+b_{0,t,k}(MKT_t-r_f)+b_{1,t,k}SMB_t+b_{2,t,k}HML_t+\epsilon_{t,k},\\
\epsilon_{t,k}&\sim \mathcal N(0,\sigma_{k}^2),
\end{align*}
where $MKT$, $SMB$ and $HML$ are the three Fama-French factors, and $b_{0,k,t}$, $b_{1,k,t}$ and $b_{2,k,t}$ are the corresponding coefficients for the stock labeled $k$ at time $t$. The three coefficients characterize stocks by their sensitivity to the corresponding factors. In this model, we assume the Fama-French coefficients $b_0, b_1, b_2$ vary over time slowly. Therefore, the Fama-French coefficients could be used as the exogenous variable $\bm z$ in our iGroup framework. To be more specific, the iGroup estimator is
$$\widehat{VaR}(t,k) = Q_{\alpha}^{(w)}\left(\bigcup_{l=1}^{K}\bigcup_{s=1}^{S}\{(r_{t-s,l}, w(\bm z_{t,l};\bm z_{t,k})) \}\right),$$
where $Q_{\alpha}^{(w)}(\cdot)$ is the empirical $\alpha$ quantile estimator from a weighted sample and $\bm z_{t,k} = (b_{0,t,k}, b_{1,t,k}, b_{2,t,k})$ are the Fama-French coefficients of stock $k$ fitted using the returns in the $S$ days before day $t$. The weight function here is chosen to be a Gaussian kernel
$$w(\bm z_{t,l};\bm z_{t,k})\propto \exp\left(-\dfrac{\|\bm z_{t,l}-\bm z_{t,k}\|_2^2}{2b^2}\right).$$ 
The bandwidth $b$ is the parameter to be tuned. Although the iGroup approach pools all other stocks just as the market level method, it assigns different weights to different stocks based on the similarity of characteristics of the stocks, e.g. the Fama-French coefficients in our case. The market level estimator can be viewed as an extreme case of iGroup estimation when the bandwidth $b$ approaches $\infty$. The individual estimator is another extreme when the bandwidth $b$ shrinks to $0$. 
Note that, the weighted empirical quantile function used in iGroup estimation is equivalent to aggregating the following objective function
\begin{equation*}
M_k(\theta;t)= \sum_{s=1}^S|r_{t-s, k}-\theta|\left(\alpha \bm 1_{\{r_{t-s, k}>\theta\}}+(1-\alpha)\bm 1_{\{r_{t-s, k}\leqslant \theta\}}\right)\label{eq: quantile}
\end{equation*}
by the weight $w_1(\bm z_k, \bm z_0)$ in (\ref{eq: weight1}).
\par 
In this study, we use $\alpha=0.01$, $S=100$, and $K=490$.
The prediction error is measured over 250 trading days in the year 2016 for 490 stocks using\\
$$RMSE = \left[\dfrac{1}{490}\sum_{k=1}^{490}\left(\dfrac{1}{250}\sum_{t=1}^{250}\bm 1_{\left\{r_{t,k}\leqslant \widehat{VaR}(t,k)\right\}}-0.01\right)^2\right]^{1/2},$$
where $\widehat{VaR}(t,k)$ is based on returns $\{r_{t-1, k},\dots, r_{t-100, k}, k=1,\dots, 490\}$.
\begin{figure}[!hpbt]
\centering
\includegraphics[width=.7\textwidth]{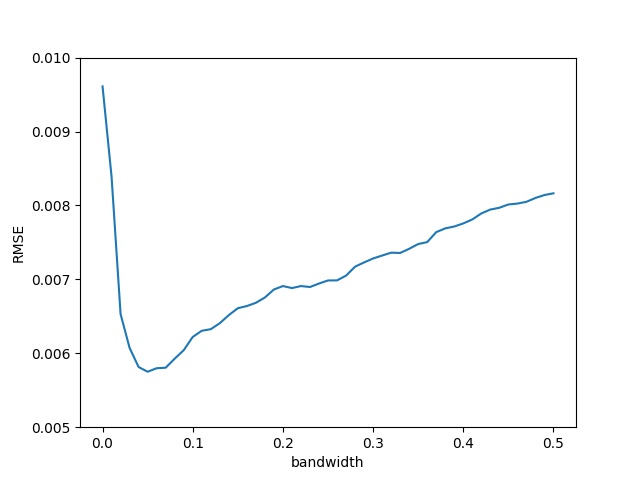}
\caption{Prediction error (RMSE) as a function of bandwidth.}\label{fig: rmse}
\end{figure}
\par
Figure \ref{fig: rmse} shows the RMSE curve as a function of the bandwidth $b$. The bandwidth controls the bias-variance tradeoff. It is seen from the figure that the V-shaped RMSE curve decreases at the beginning and achieves a minimal value at approximately $b=0.05$ with minimum RMSE being $5.75\times 10^{-3}$. The RMSEs of each model are shown in Table \ref{table: rmse}. The iGroup estimator improves the accuracy significantly. 

\begin{table}[!hpt]
\centering
\begin{tabular}{|c|ccc|}
\hline
Method & Individual Estimation & Market Estimation & iGroup Estimation\\
\hline
RMSE & $9.61\times 10^{-3}$ &$1.34\times 10^{-2}$ &$5.75\times 10^{-3}$\\
\hline
\end{tabular}
\caption{Prediction error for three candidate models.}\label{table: rmse}
\end{table}

\subsection{Maritime anomaly detection}\label{sec:example-maritime}
The maritime transportation system is critical to the U.S. and world economy. For security and environmental concerns, it is important to have an efficient detection and risk assessment system for maritime traffic over space and time. Automatic Identification System (AIS) is an automatic tracking system and are mandatory installed on ships such that the maritime information, including GPS location, speed, heading, etc., is reported periodically. The global AIS system receives data from approximately a million ships with updates for each ship as frequently as every two seconds while in motion and every three minutes while at anchor. The data are available at https://marinecadastre.gov/ais/.
\par
In this example, we focused on 534 voyages of tankers and cargo vessels arriving at the Port of Newark between July and November 2014. We investigated their approaching behaviors starting from crossing the 12 nautical mile US territorial sea (TS) boundary to arriving at the port. Two features are considered in this study: the trajectory and the sailing time (duration). The trajectory, treated as an exogenous variable $\bm z$, is a polygonal line consisting of a sequence of reported GPS locations during the approach. The 534 approaching trajectories are plotted in Figure \ref{fig: traj_all} along with the coastlines around the Port of Newark. The sailing time, treated as the observation $x_k$, is the time spent in the approaching procedure starting at the time of entering the 12 nautical miles territorial sea of U.S. and ending at one of the docks in the Port of Newark. Our goal is to identify outliers in sailing time given the trajectory. In this case the parameter of interest is the mean and standard deviation of sailing time, $\theta_k = (\mu_k, \sigma_k)$, such that an outlier can be identified by two standard deviation rule, i.e. individual $k$ is an outlier in time if $|x_k - \hat\mu_k|\geqslant 2\sigma_k$.

\begin{figure}[!htp]
\centering
\includegraphics[width=0.9\textwidth]{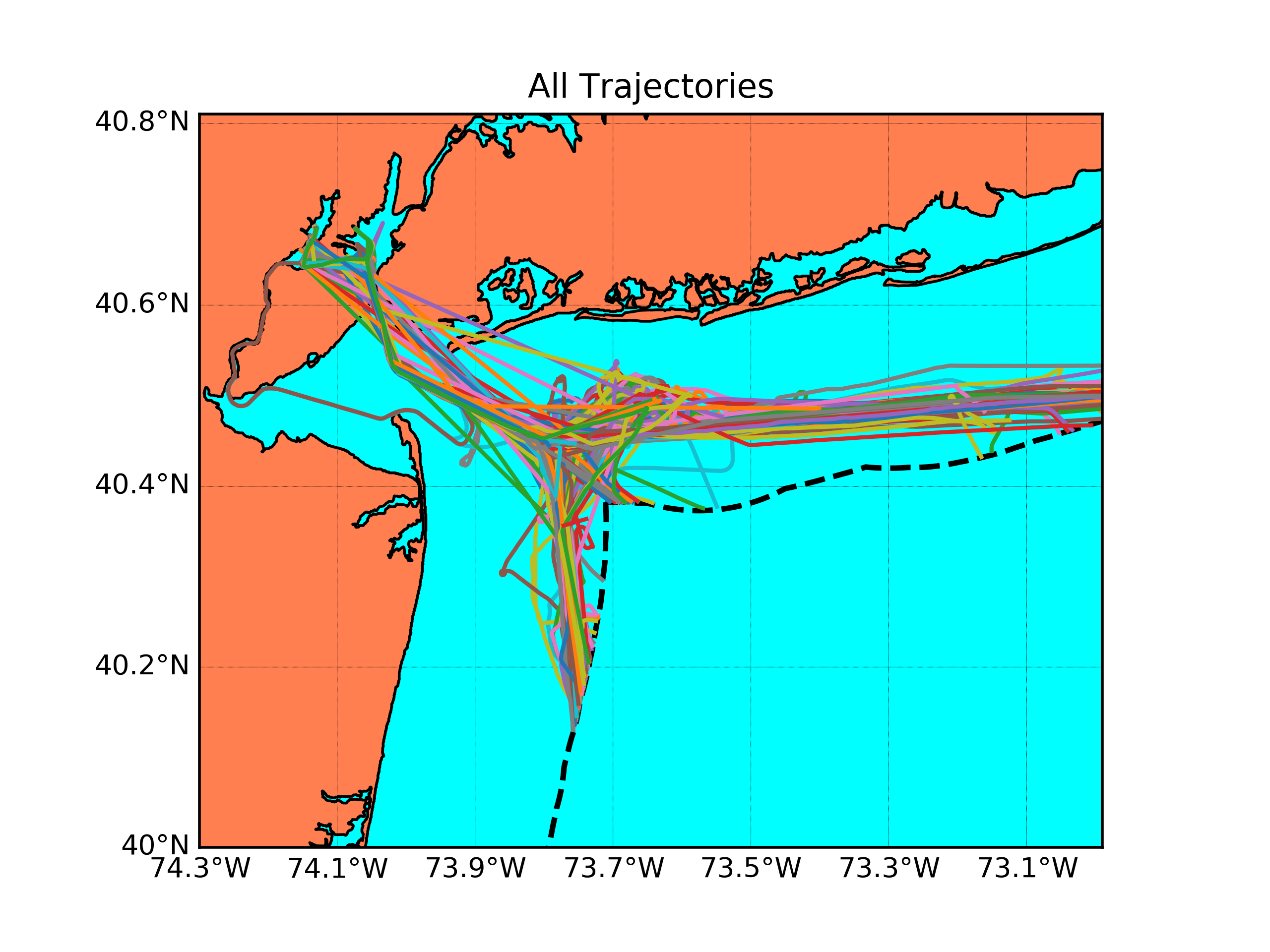}
\caption{All 534 trajectories approaching the Port of Newark}\label{fig: traj_all}
\end{figure}
\par
The trajectory is a functional feature that requires special treatment. Every trajectory consists of a sequence location reports ordered in time. Since the reporting intervals are irregular, it cannot be considered as a 2-dimensional regular time series of equal time intervals. However, since we utilize the trajectory as an exogenous variable $\bm z$ in the iGroup framework, we only need a proper distance/similarity measure defined for any trajectory pairs. Here, we use the dynamic time warping (DTW) distance as the similarity measure. Dynamic time warping is widely used as a similarity measure between two time series for studies in speech recognition and other applications \citep{sakoe1978dynamic, juang1984hidden, nakagawa1988speaker, koenig2008speech}. It finds the optimal monotone one-to-one mapping between two sequences such that the average pairwise distance is minimized. 
\par 
For simplicity, for each individual voyage, we use its nearest 40 neighbors in terms of DTW to form iGoups with equal weight.
Figure \ref{fig:cliques} shows four typical trajectories (top) and their individualized groups identified by its DTW neighbors (bottom). 
\begin{figure}[!hp]
\centering
\includegraphics[width=0.23\textwidth]{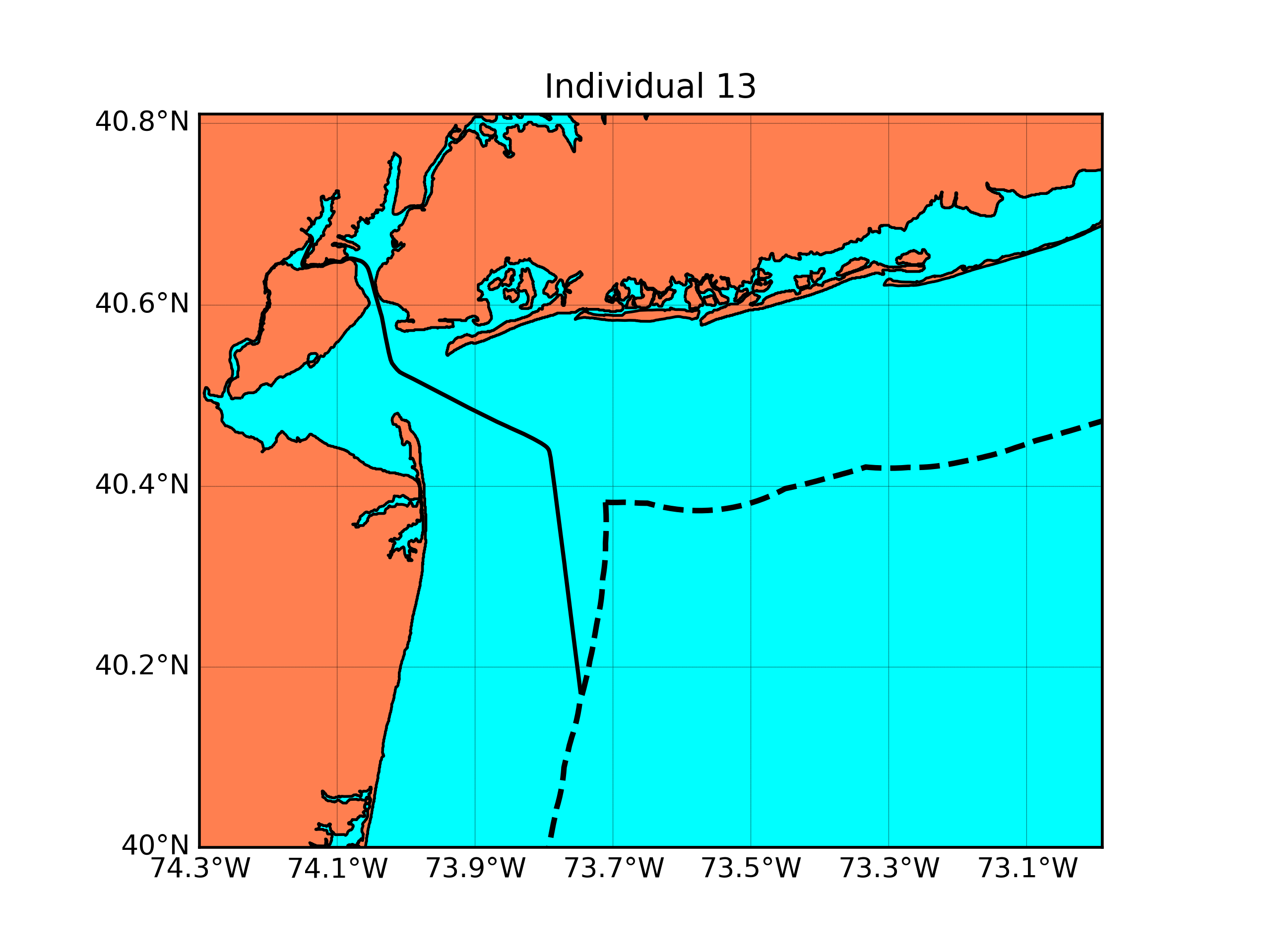}
\includegraphics[width=0.23\textwidth]{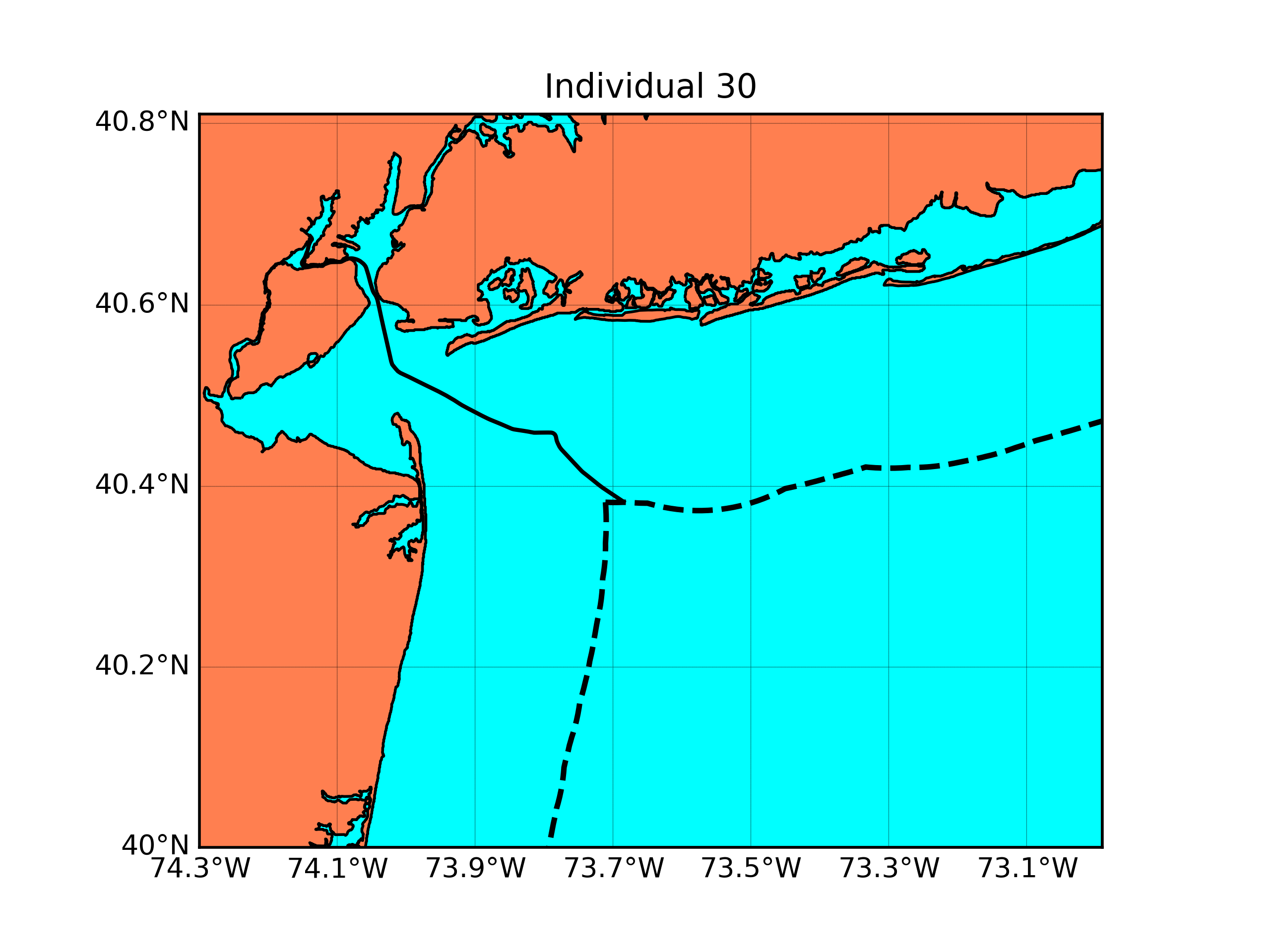}
\includegraphics[width=0.23\textwidth]{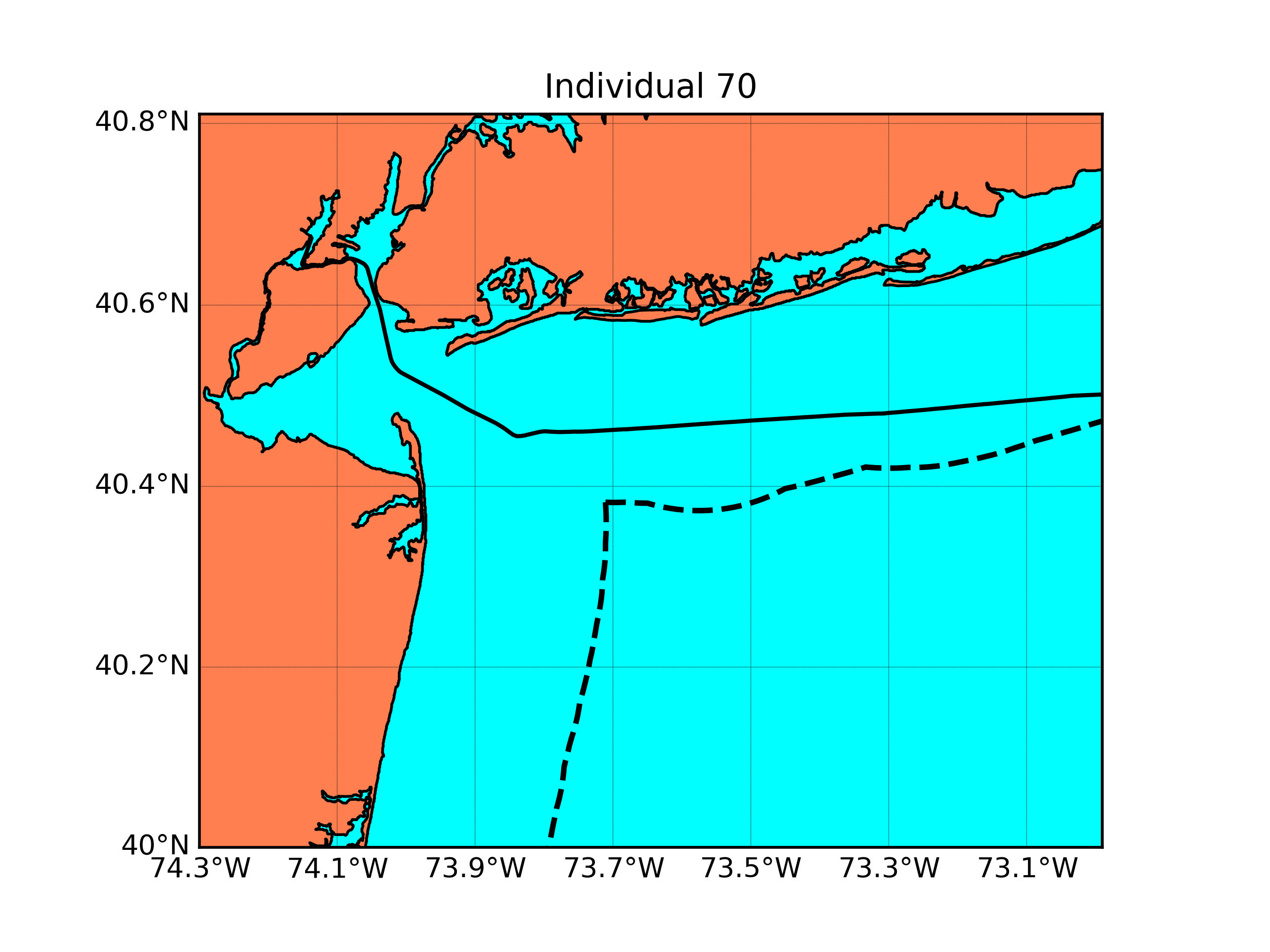}
\includegraphics[width=0.23\textwidth]{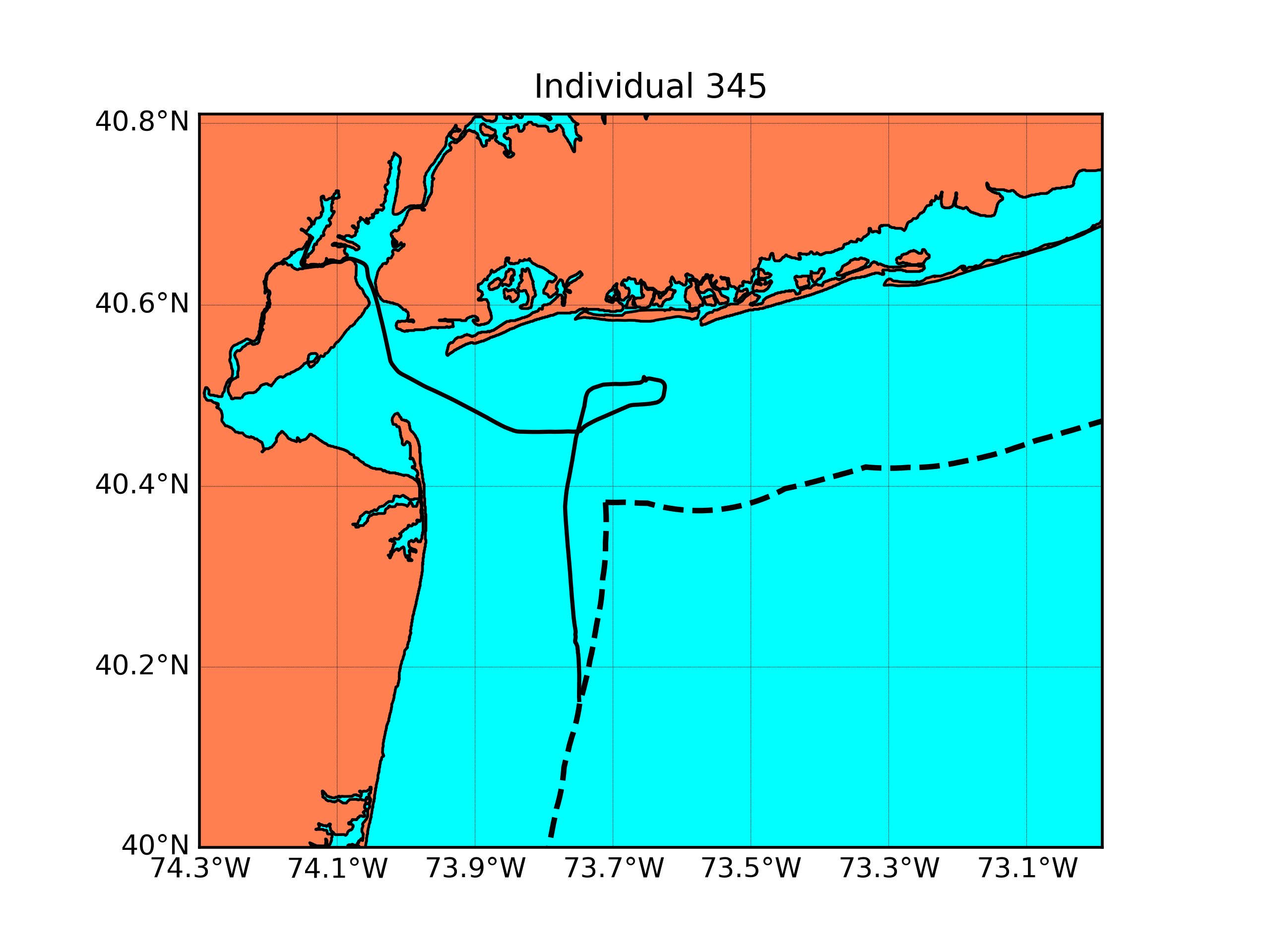}\\
\includegraphics[width=0.23\textwidth]{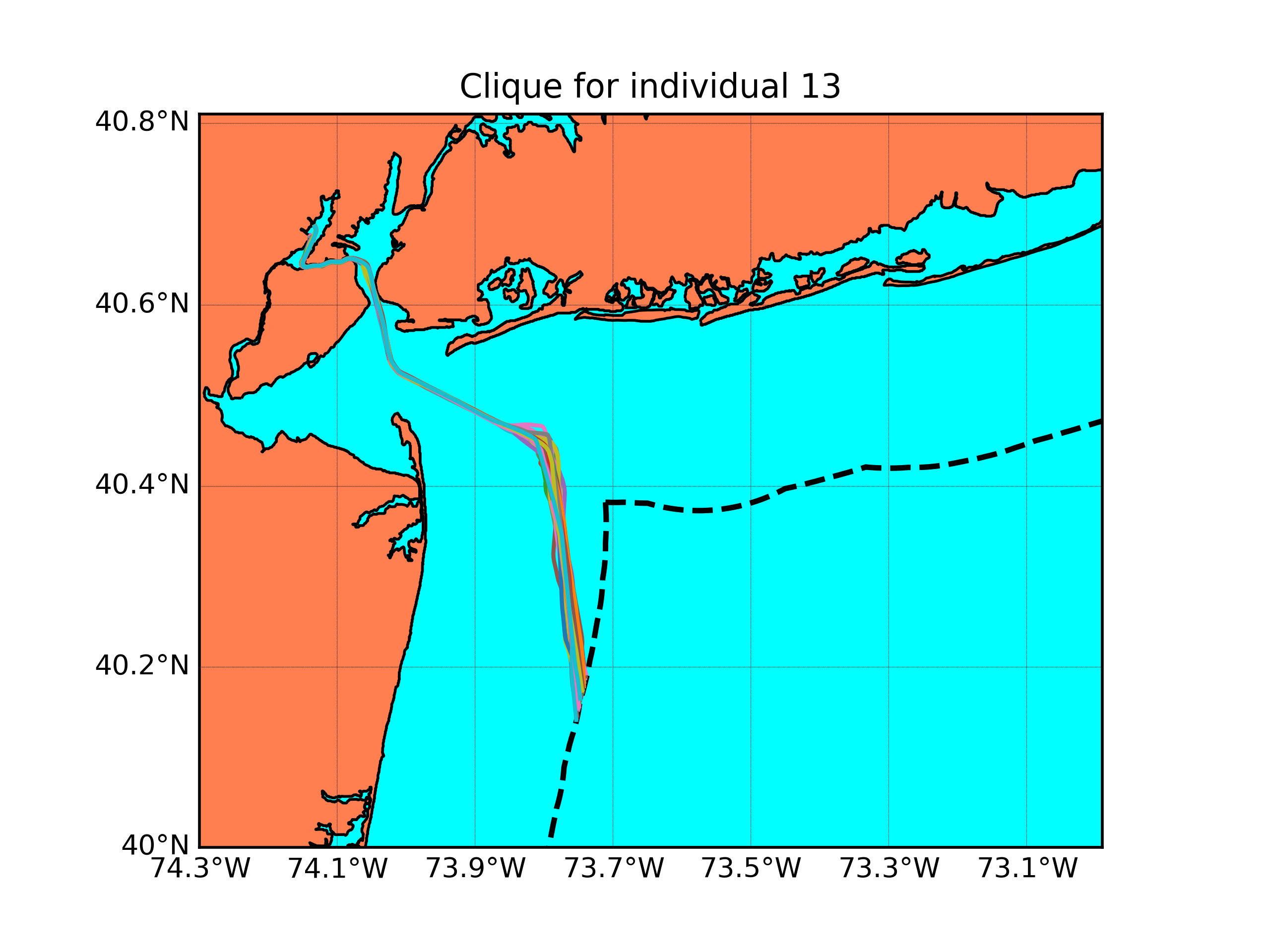}
\includegraphics[width=0.23\textwidth]{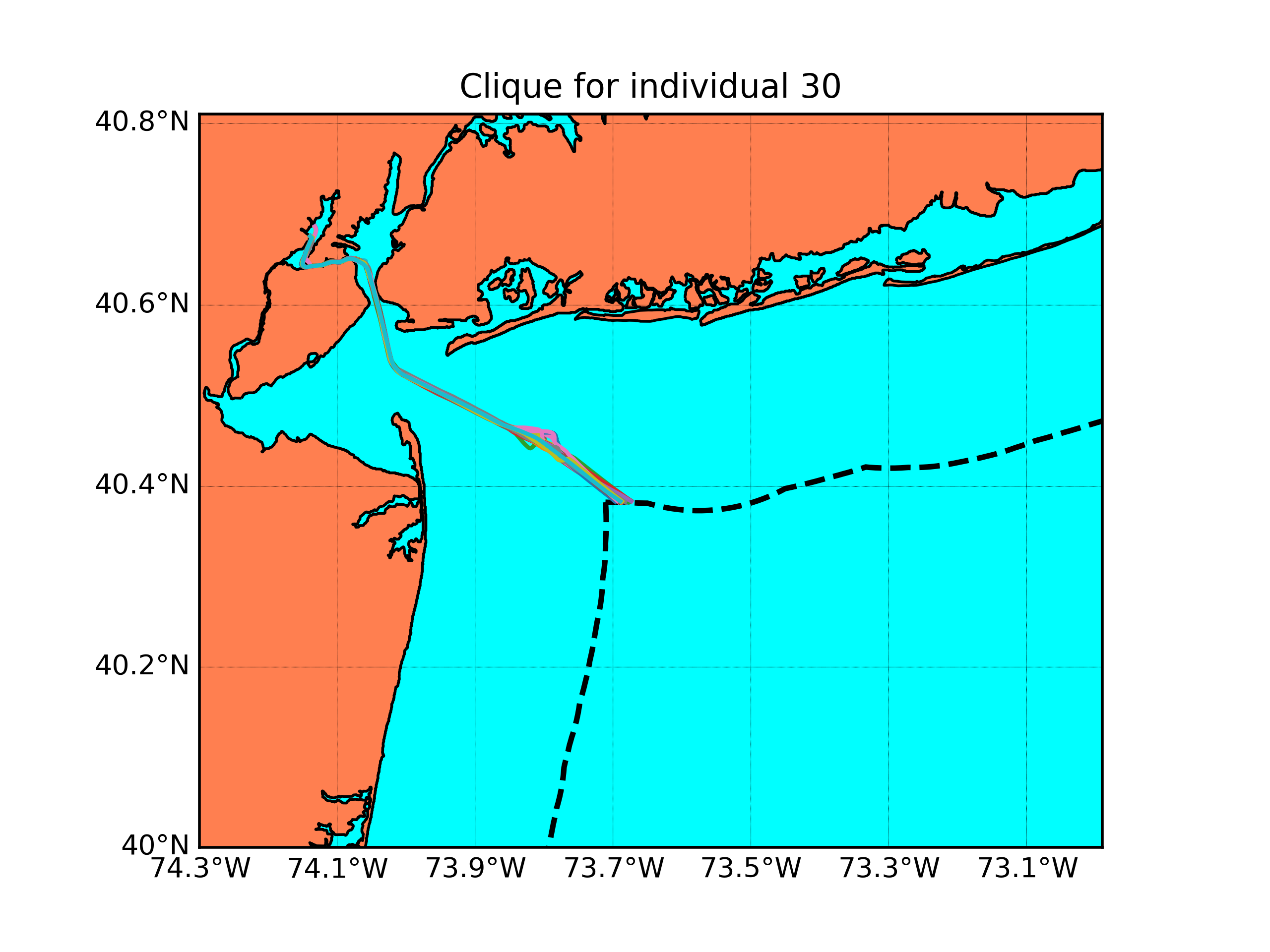}
\includegraphics[width=0.23\textwidth]{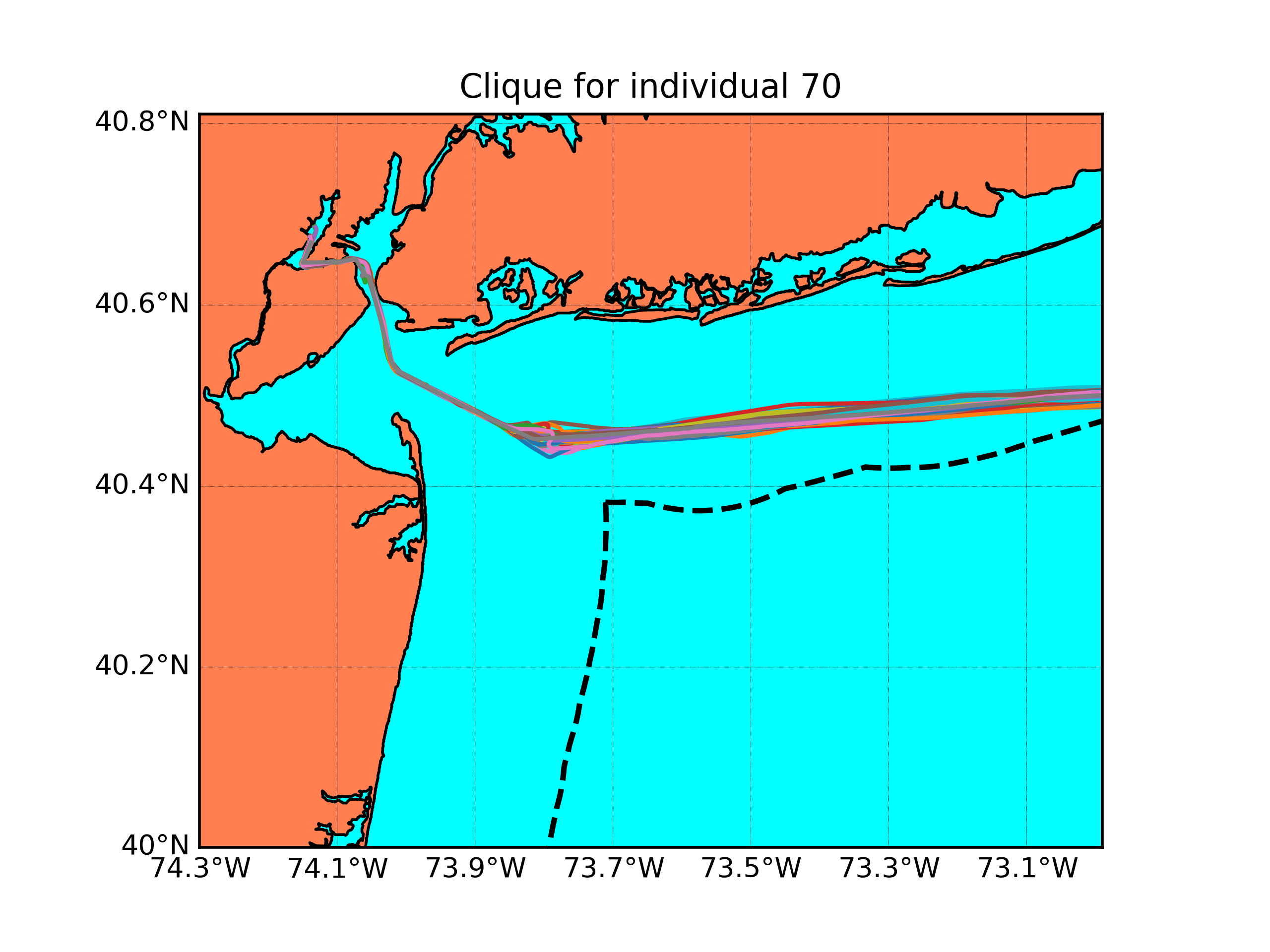}
\includegraphics[width=0.23\textwidth]{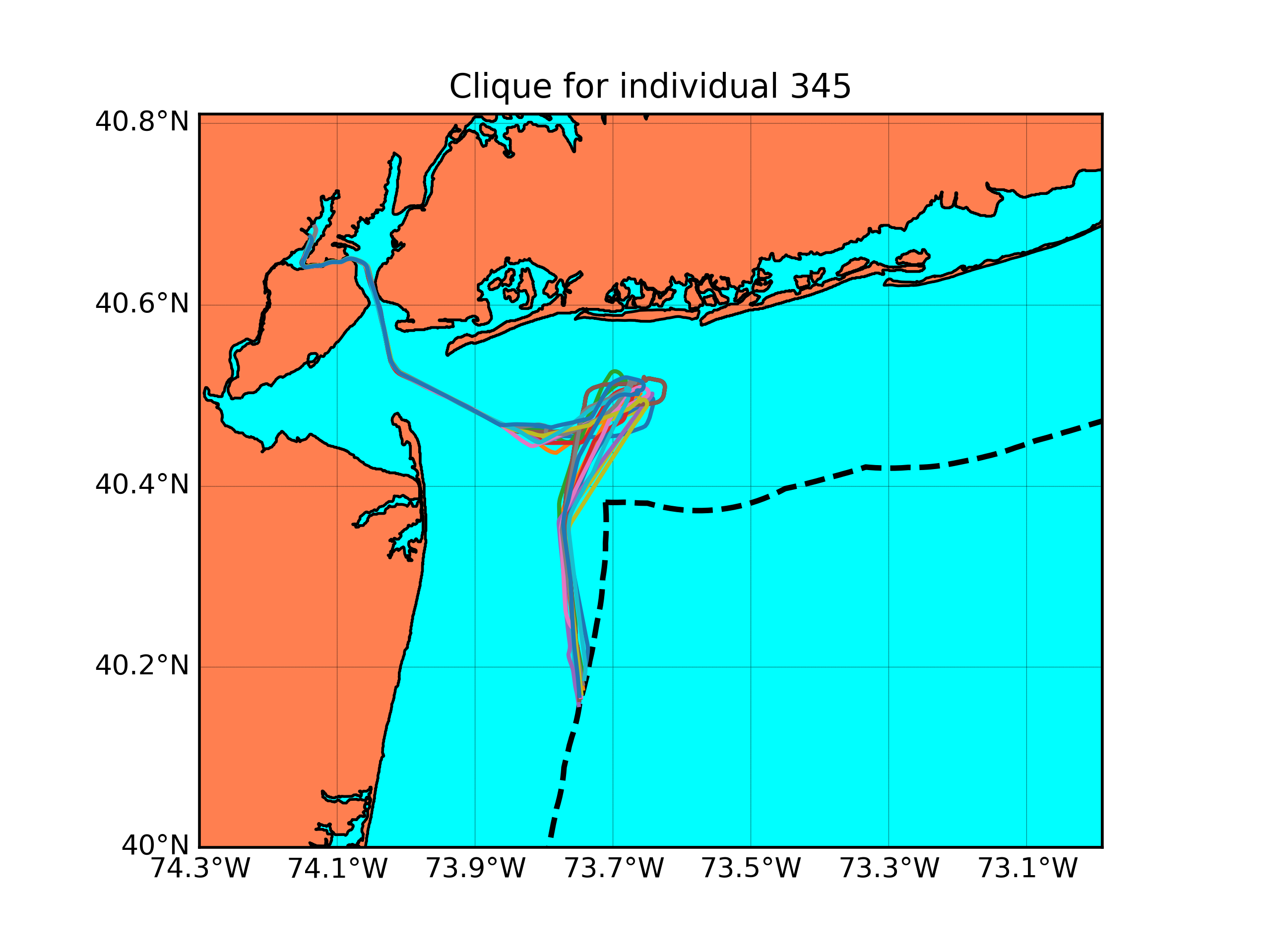}
\caption{Four typical trajectories and their identified individualized groups.}\label{fig:cliques}
\end{figure}
\par
Since the individual level estimator for $\theta_k$ is not available as we only have one observation $x_k$ per individual. The iGroup estimator is constructed by aggregating the log-likelihood functions. In this case, it is equivalent to estimate $\theta_k$ by the sample mean and the sample standard deviation from the formed igroup. Since our main focus is to identify outliers, we exclude the target from the estimation. Denote $\mathcal C_k$ as the individualized group (clique) identified by the DTW distance for voyage $k$. Note that we control $|\mathcal C_k|=40$. The iGroup estimator can be constructed as 
\begin{align*}
    \mu_k^{(c)} = \dfrac{\sum_{i\in \mathcal C_k}x_i}{|\mathcal C_k|},\quad
    \sigma_k^{(c)}=\dfrac{\sum_{i\in \mathcal C_k}(x_i-\mu_k^{(c)})^2}{|\mathcal C_k| - 1}.
\end{align*}
Then the risk score (the likelihood of being an outlier) of individual $k$ can be obtained as 
$$1-2P\left(Z>\left|\dfrac{x_k - \mu_k^{(c)}}{\sigma_k^{(c)}}\right|\right),$$
where $Z\sim N(0, 1).$\\
\begin{figure}[!htp]
\centering
\includegraphics[width=0.3\textwidth]{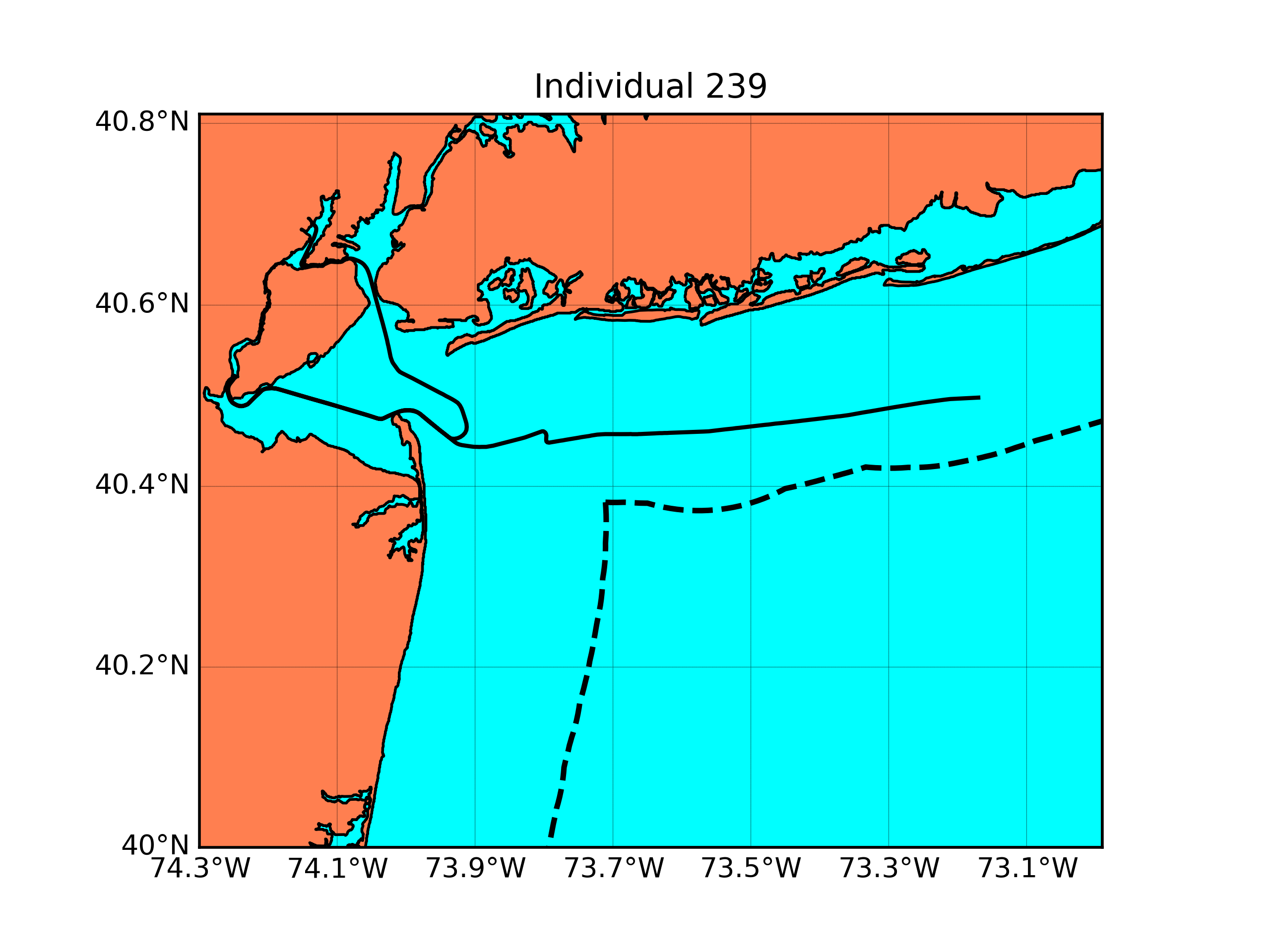}
\includegraphics[width=0.3\textwidth]{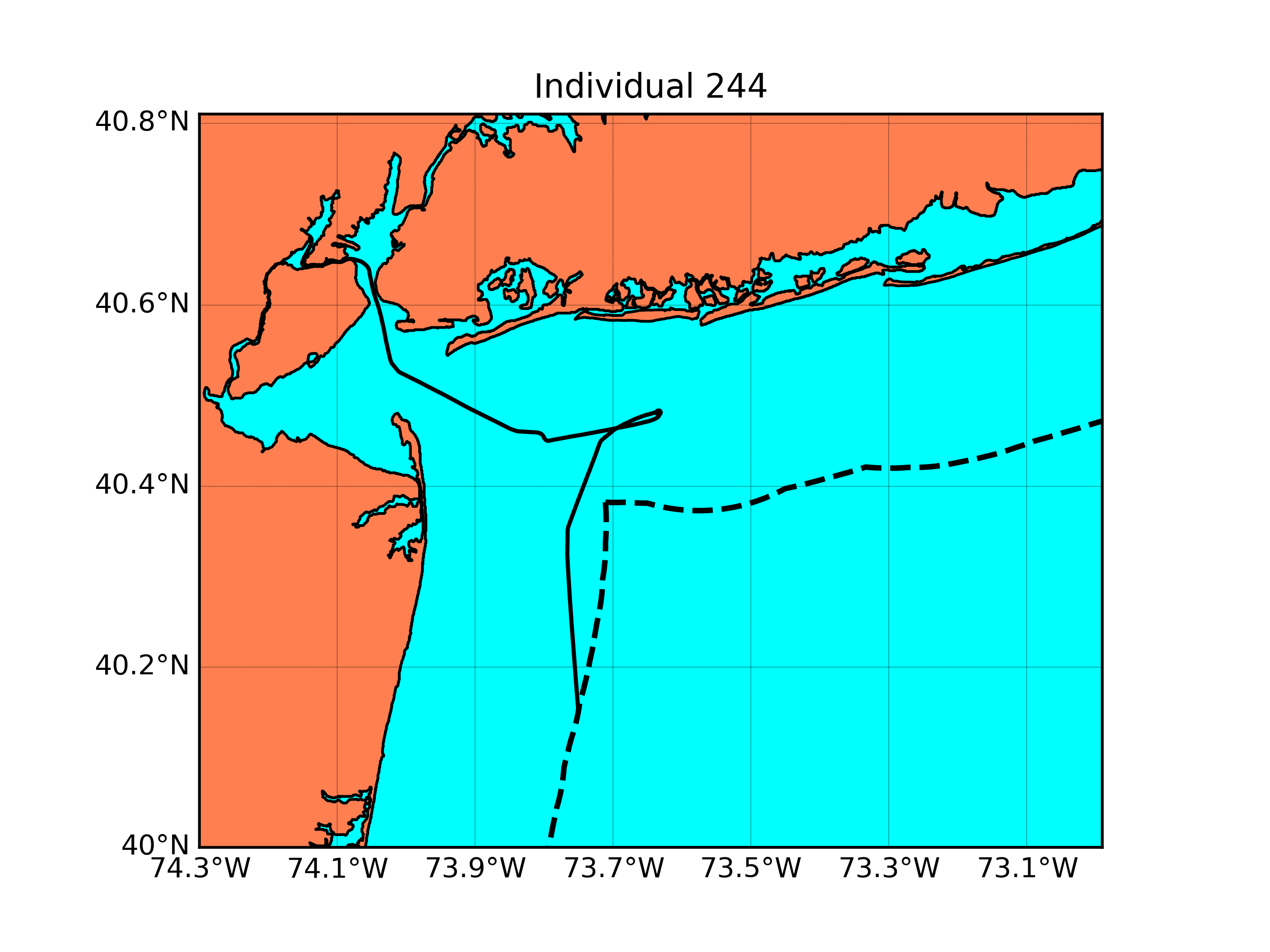}
\includegraphics[width=0.3\textwidth]{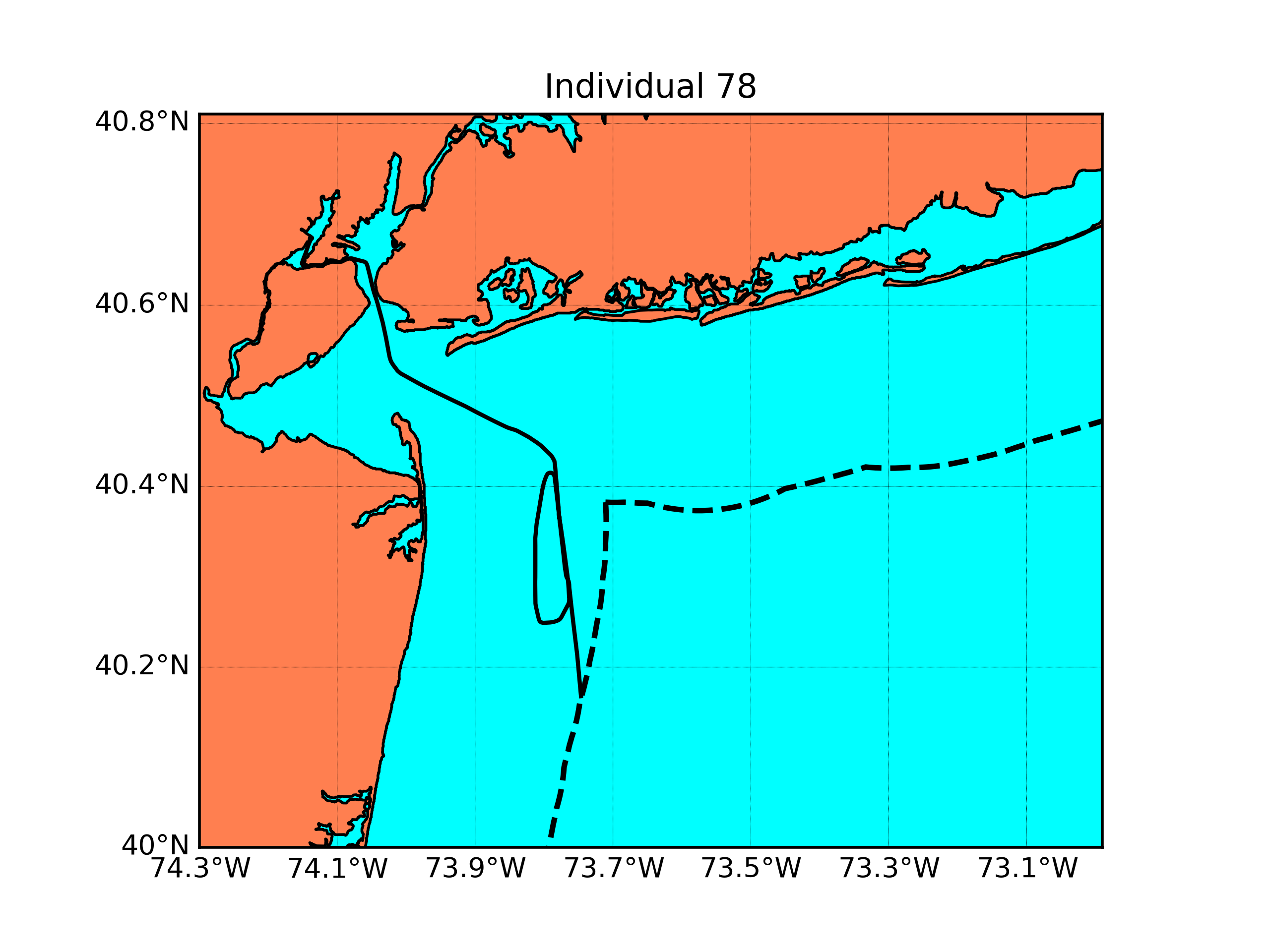}
\caption{Outliers among vessels/voyages in trajectories of vessels heading to Port of Newark}\label{fig:outlier}
\end{figure}
\par
In these 534 vessels, 95 outliers with risk scores larger than 95\% were determined as abnormal. A manual inspection reveals that they belong to three categories (with some overlaps between (a) and (b)): (a) 40 vessels had a prior dock before the Port of Newark (left panel of Figure \ref{fig:outlier}); (b) 18 vessels were anchored somewhere outside the port for an extremely long time (middle panel); (c) the other 43 vessels were traveling too fast/slow compared with their iGroup (right panel). Figure \ref{fig:outlier} shows typical trajectories of the three categories. Due to the limited population, vessels with few similar trajectories are also classified as abnormal such as the one shown in the right panel in Figure \ref{fig:outlier}.

\section{Conclusion and Discussion}\label{sec:conclusion}
In conclusion, the proposed iGroup method provides an effective tool for efficient inference in a heterogeneous population. The approach is essentially nonparametric. It has several special features:
(1) The grouping idea can facilitate and answer some inference questions that are otherwise difficult or impossible to address such as estimating variance/quantile when each individual has only one observation. 
(2) It reduces the standard error of the estimator by pooling together individuals with similar characteristics. 
(3) The grouping can take a non-standard exogenous variable $\bm z$ into consideration, as long as a similarity/distance measure is defined. 
(4) Noisy exogenous variable $\bm z$ can contribute to grouping as well.
(5) A useful weight function measuring similarity between $\hat\theta$'s is designed with statistical interpretation. 
(6) The method can be extended to a wide range of estimating methods, which optimizes an objective function, such as regularized least squares estimation, generalized moment estimation, etc.
(7) The bandwidth can be tuned by leave-one-out cross validation either globally or locally.
\par
In addition, we showed the asymptotic performance and theoretical properties of the method, which assess the accuracy and efficiency of iGroup and provide practical guidance in implementation. More specifically, when the loss function is given and weight function is properly constructed by our approach, the iGroup estimator converges to the Bayes estimator that minimizes the overall risk without knowing the prior. 
Computationally, as the group construction and inference procedure are identical for all individuals, the iGroup method can be easily parallelized for large datasets.

\par
In Theorems \ref{thm:mest}, \ref{thm:m-theta-only} and \ref{thm:m-complete}, we assumed a quite strong sufficient condition on the objective functions $M_k(\theta)$ or $M(\theta, \hat\theta)$ such that the minimum point of the aggregated objective function will converge to the true value. Instead of assuming second-order differentiability and convexity, other sufficient conditions can also guarantee the convergence of the minimum point \citep{van2000asymptotic}. But most of them depends on the explicit formula of kernel $\mathcal K$ and the objective function $M_k(\theta)$.

\par 
The iGroup approach has its connection to the empirical Bayes approach \citep{robbins1985empirical}, where the prior is unknown, but a Bayes estimator is constructed. Although an unknown population distribution for $\theta$ is assumed to be $\pi(\theta)$ viewed as the prior, it does not appear explicitly in either $\hat\theta_0^{(c)}$ or $\tilde\theta_0^{(c)}$ in our approach. And we showed in Section \ref{sec:theoretical-results} that under mild conditions, the iGroup estimators converge to certain Bayes estimators under the unknown prior. In empirical Bayes, the prior is usually estimated by either discretization or deconvolution. But the iGroup approach is different. The unknown $\pi(\theta)$ is not directly estimated and it is not needed. The prior information is taken into consideration by taking a (weighted) average of sample estimators or sample objective functions. And the weight function $w_2(\cdot)$, which is related to $\pi(\theta)$ in close form, is approximated using the bootstrap method in Section \ref{sec:evaluate-weight}.

\newpage
\renewcommand\refname{Bibliography}
\bibliographystyle{apalike}
\bibliography{reference}

\newpage
\appendix
\section*{Proof of Theorem \ref{thm:mest}}\label{pf:mest}
We prove the consistency first. 
Define
\begin{align*}
    \psi_k(\theta) &= \dfrac{\partial }{\partial \theta}M_k(\theta),\\
    \Psi_K(\theta) &= \dfrac{\sum_{k=0}^K\mathcal K\left(\dfrac{\|\bm z_k-\bm z_0\|}{b}\right)\psi_k(\theta)}{\sum_{k=0}^K\mathcal K\left(\dfrac{\|\bm z_k-\bm z_0\|}{b}\right)},\\
    \Psi(\theta) &= \mathbb E_{\bm x|\bm z_0} \psi_{\bm x}(\theta).
\end{align*}
For any given $\theta$, $\Psi_K(\theta)$ is a kernel smoothing estimator for $\mathbb E_{\bm x|\bm z_0}[\psi_{\bm x}(\theta)]=\Psi(\theta)$. Hence $\Psi_K(\theta)\rightarrow \Psi(\theta)$ in probability for any given $\theta$, provided $\mathbb E_{\bm x|\bm z}[\psi_{\bm x}(\theta)]$ continuous at $\bm z_0$ \citep{wassermann2006all}. 
Due to the assumption that $M_k(\theta)$ is convex and second-order differentiable, $\psi_k(\theta)$ is a non-decreasing function for any $\bm x_k$. Therefore, both $\Psi_K$ and $\Psi$ are non-decreasing and continuous. By assumption, $\Theta_0$ is the unique root of $\Psi(\theta)$. Let $\theta^*_K$ be such a point that $\Psi_K(\theta^*_K)=0$. $\theta^*_K$ may not be unique and may not even exist for small $K$. For any $\epsilon>0$, it is immediate that $\Psi(\Theta_0-\epsilon)<0<\Psi(\Theta_0+\epsilon)$ and by the pointwise convergence in probability of $\Psi_K$, we have
\begin{align*}
    P\left[\left|\Psi_K(\Theta_0-\epsilon) - \Psi(\Theta_0-\epsilon)\right|\leqslant \dfrac{1}{2}\left|\Psi(\Theta_0-\epsilon)\right|\right]&\longrightarrow 1,\\
    P\left[\left|\Psi_K(\Theta_0+\epsilon) - \Psi(\Theta_0+\epsilon)\right|\leqslant \dfrac{1}{2}\left|\Psi(\Theta_0+\epsilon)\right|\right]&\longrightarrow 1.
\end{align*}
Therefore, 
$$P\left[\left|\Psi_K(\Theta_0-\epsilon) - \Psi(\Theta_0-\epsilon)\right|\leqslant \dfrac{1}{2}\left|\Psi(\Theta_0-\epsilon)\right|, \left|\Psi_K(\Theta_0+\epsilon) - \Psi(\Theta_0+\epsilon)\right|\leqslant \dfrac{1}{2}\left|\Psi(\Theta_0+\epsilon)\right|\right]\longrightarrow 1.$$
The event in the probability implies that $\Psi_K(\Theta_0-\epsilon)<0<\Psi_K(\Theta_0+\epsilon)$, which further implies the existence of $\theta_K^*$ in $(\Theta_0 - \epsilon, \Theta_0 + \epsilon)$ by continuity of $\Psi_K$. Hence
\begin{align*}
    &P\left[\left|\Psi_K(\Theta_0-\epsilon) - \Psi(\Theta_0-\epsilon)\right|\leqslant \dfrac{1}{2}\left|\Psi(\Theta_0-\epsilon)\right|, \left|\Psi_K(\Theta_0+\epsilon) - \Psi(\Theta_0+\epsilon)\right|\leqslant \dfrac{1}{2}\left|\Psi(\Theta_0+\epsilon)\right|\right]\\
    \leqslant & P\left[\Psi_K(\Theta_0-\epsilon)<0<\Psi_K(\Theta_0+\epsilon)\right]\\
    \leqslant & P\left[\Theta_0 - \epsilon < \theta_K^* <\Theta_0 + \epsilon\right].
\end{align*}
Since the first term converges to 1, the last term converges to 1 as well. Note that when $\tilde\theta_0^{(c)}$ exists, it equals $\theta_K^*$. The consistency of $\tilde\theta_0^{(c)}$ is proved.
\par
With $\tilde\theta_0^{(c)}\longrightarrow \theta_0$ in probability, it is reasonable to expand $\Psi_K(\tilde\theta_0^{(c)})$ at $\Theta_0$.
\begin{align*}
\sum_{k=0}^K\mathcal K\left(\dfrac{\|\bm z_k-\bm z_0\|}{b}\right)\psi_k(\Theta_0)+(\tilde\theta^{(c)}-\Theta_0)\sum_{k=0}^K\mathcal K\left(\dfrac{\|\bm z_k-\bm z_0\|}{b}\right)\psi'_k(\Theta_0)\\
+\sum_{k=0}^K\mathcal K\left(\dfrac{\|\bm z_k-\bm z_0\|}{b}\right)O((\tilde\theta^{(c)}_0-\Theta_0)^2)=0.
\end{align*}
Now we have
\begin{equation}\label{eq: aggtheta_kernel}
    \tilde\theta^{(c)}_0-\Theta_0=-\dfrac{\dfrac{1}{K+1}\sum_{k=0}^K\mathcal K\left(\dfrac{\|\bm z_k-\bm z_0\|}{b}\right)\psi_k(\Theta_0)}{\dfrac{1}{K+1}\sum_{k=0}^K\mathcal K\left(\dfrac{\|\bm z_k-\bm z_0\|}{b}\right)\psi'_k(\Theta_0)+O(\hat\theta^{(c)}_0-\Theta_0)}.
\end{equation}
Consider $K\rightarrow \infty$. On one hand, the numerator is a kernel smoothing estimator for $\mathbb E_{\bm x|\bm z_0}[\psi_{\bm x}(\Theta_0)]=0$ up to a normalizing constant. On the other hand, the denominator is a similar kernel smoothing estimator for $\mathbb E_{\bm x|\bm z_0} \psi'_{\bm x}(\Theta_0)$. By Slutsky's theorem, their ratio has a similar asymptotic distribution to the numerator kernel smoothing estimator up to a constant factor of $\mathbb E_{\bm x|\bm z_0} \psi'_{\bm x}(\Theta_0)$. Therefore, $\tilde\theta_0^{(c)}$ has an asymptotic bias $O_p(b^2)$ and an asymptotic variance $O_p(1/Kb^d)$  \citep{wassermann2006all}. Hence, the optimal choice of bandwidth in a bias-variance optimization scheme is $\hat b\asymp K^{-1/(d+4)}$ and the optimal MSE is of order $K^{-4/(d+4)}$.

\section*{Proof of Theorem \ref{thm:bias-variance}}\label{pf:bias-variance}
In this case, $\theta_0$ is assumed to be fixed, and $\hat\theta_0^{(c)}$ is a standard kernel smoothing estimator for $\mathbb E_\pi[\theta_0|\bm z_0] = \Theta_0$. By following the asymptotic property of a standard kernel smoothing estimator, we have
\begin{align*}
\mathbb E[\hat\theta_0^{(c)}|\bm z_0]=\Theta_0+O_p(b^2)\text{ and }
Var[\hat\theta_0^{(c)}|\bm z_0]=O_p\left(\dfrac{1}{Kd^b}\right).
\end{align*}
Therefore, we have
\begin{align*}
&\mathbb E_{\theta_0}[\hat\theta_0^{(c)}]=\mathbb E_{\theta_0}[\Theta_0]+O_p(b^2),\\
&Var_{\theta_0}[\hat\theta_0^{(c)}]=Var_{\theta_0}[\mathbb E[\hat\theta_0^{(c)}|\bm z_0]]+\mathbb E_{\theta_0}[Var[\hat\theta_0^{(c)}|\bm z_0]]=Var_{\theta_0}[\Theta_0]+O_p\left(\dfrac{1}{Kd^b}\right).
\end{align*}

\section*{Proof of Theorem \ref{thm: intrinsic_asymp}}\label{pf:intrinsic_asymp}
We first prove the following lemma, which would be used in the proof of Theorem \ref{thm: intrinsic_asymp}.
\begin{lem}
Suppose the random vector $\xi$ has a pdf $p_\xi$ and has zero mean, finite variance and finite higher moments such that
$$\mathbb E\xi =0,\quad \text{Var}(\xi)=\sigma^2\bm\Sigma,\quad \|\bm\Sigma\|=1.$$
Then for any second-order partially differentiable function $f$, we have
$$\int f(\bm x+t)p_\xi(t)dt = f(\bm x)+\dfrac{1}{2}\sigma^2\tr[\nabla^2 f(\bm x)\bm\Sigma]+o(\sigma^2),$$
when $\sigma^2\rightarrow 0$.
\end{lem}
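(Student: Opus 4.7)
\section*{Proof Proposal for the Lemma}

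The plan is to recognize the integral as $\mathbb{E}[f(\bm{x}+\xi)]$ and perform a second-order Taylor expansion of $f$ about $\bm{x}$. Writing
\begin{equation*}
f(\bm{x}+t) = f(\bm{x}) + \nabla f(\bm{x})^\top t + \tfrac{1}{2}\, t^\top \nabla^2 f(\bm{x})\, t + r(t),
\end{equation*}
where $r(t)/\|t\|^2 \to 0$ as $\|t\|\to 0$ (Peano form of the remainder), I would take expectations term-by-term. The constant term contributes $f(\bm{x})$; the linear term vanishes since $\mathbb{E}\xi = 0$; and for the quadratic term I would use the standard identity
\begin{equation*}
\mathbb{E}\bigl[\xi^\top \nabla^2 f(\bm{x})\, \xi\bigr] = \operatorname{tr}\bigl(\nabla^2 f(\bm{x})\, \mathbb{E}[\xi\xi^\top]\bigr) = \sigma^2 \operatorname{tr}\bigl[\nabla^2 f(\bm{x})\, \bm{\Sigma}\bigr],
\end{equation*}
which yields the leading-order correction in the claim.

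The main obstacle is controlling the remainder $\mathbb{E}[r(\xi)]$ and showing it is $o(\sigma^2)$ as $\sigma \to 0$. Rescale by writing $\xi = \sigma \eta$ with $\eta$ having unit-norm covariance, so the target becomes $\sigma^{-2}\mathbb{E}[r(\sigma \eta)] \to 0$. Pointwise in $\eta$, the Peano property gives
\begin{equation*}
\frac{r(\sigma \eta)}{\sigma^2} = \frac{r(\sigma \eta)}{\|\sigma \eta\|^2}\cdot \|\eta\|^2 \longrightarrow 0 \quad \text{as } \sigma \to 0,
\end{equation*}
so the issue reduces to justifying the exchange of limit and expectation. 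This is where the hypothesis that $\xi$ has finite higher moments is used: it permits a dominated-convergence argument by bounding $|r(\sigma\eta)|/\sigma^2$ by a constant multiple of $\|\eta\|^2$ (locally) plus a term of higher order in $\|\eta\|$ that remains integrable under the moment hypothesis. If one instead invokes the integral form of the remainder,
\begin{equation*}
r(t) = \int_0^1 (1-s)\bigl[t^\top \nabla^2 f(\bm{x}+st)\, t - t^\top \nabla^2 f(\bm{x})\, t\bigr]\, ds,
\end{equation*}
continuity of the second partials lets us bound the integrand uniformly by $\omega(\sigma\|\eta\|)\,\|\eta\|^2$ for some modulus of continuity $\omega$ with $\omega(u)\to 0$, and again finite moments of $\eta$ close out the $o(\sigma^2)$ bound.

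Combining the three terms of the Taylor expansion and the remainder bound yields exactly
\begin{equation*}
\int f(\bm{x}+t)\, p_\xi(t)\, dt = f(\bm{x}) + \tfrac{1}{2}\sigma^2 \operatorname{tr}\bigl[\nabla^2 f(\bm{x})\, \bm{\Sigma}\bigr] + o(\sigma^2),
\end{equation*}
as claimed. I expect the only nontrivial step to be the uniform integrability bound on $r(\sigma\eta)/\sigma^2$, which is why the lemma explicitly assumes finite higher moments of $\xi$; the rest is a mechanical moment computation.
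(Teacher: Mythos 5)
Your proposal follows essentially the same route as the paper's proof: rescale $\xi=\sigma\eta$, Taylor-expand $f$ to second order about $\bm x$, kill the linear term with $\mathbb E\xi=0$, and convert the quadratic term to $\sigma^2\tr[\nabla^2 f(\bm x)\bm\Sigma]$ via $\mathbb E[\eta^\top A\eta]=\tr(A\bm\Sigma)$. The only difference is that you explicitly justify that the Taylor remainder integrates to $o(\sigma^2)$ (via dominated convergence and the finite-higher-moments hypothesis), a step the paper's proof passes over silently by writing $o(\sigma^2)$ inside the integral; your version is, if anything, the more complete argument.
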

\begin{proof}
Let $\xi_1=\xi/\sigma$, then $\mathbb E(\xi_1)=0$ and $\text{Var}(\xi_1)=\bm\Sigma$. Hence
\begin{align*}
\int f(\bm x+t)p_\xi(t)dt&=\int f(\bm x+\sigma s)p_{\xi_1}(s)ds\\
&=\int \left[f(\bm x)+\sigma s^T[\nabla f(\bm x)]+\dfrac{1}{2}\sigma^2s^T[\nabla^2 f(x)]s+o(\sigma^2)\right]p_{\xi_1}(s)ds\\
&=f(\bm x)+\dfrac{1}{2}\sigma^2\int s^T[\nabla^2f(\bm x)]sp_{\xi_1}(s)ds+o(\sigma^2)\\
&=f(\bm x)+\dfrac{1}{2}\sigma^2\tr[\nabla^2f(\bm x)\bm\Sigma]+o(\sigma^2).
\end{align*}
\end{proof}
Now we prove Theorem \ref{thm: intrinsic_asymp}. Let $\bar\pi()$ be the population distribution for $\bm \eta$. Since $\theta = g(\bm\eta)$, we have
\begin{align*}
\mathbb E_\pi[\theta_0|\bm z_0]&=\dfrac{\int g(\bm \eta)p(\bm z_0|\bm \eta)\bar\pi(\bm \eta)d\bm \eta}{\int p(\bm z_0|\bm \eta)\bar\pi(\bm\eta)d\bm \eta}\\
&=\dfrac{(g\bar\pi)(\bm z_0) +\dfrac{1}{2}\sigma_z^2 \tr[\nabla^2(g\bar\pi)(\bm z_0)\bm\Sigma_z]+o(\sigma_z^2)}{\bar\pi(\bm z_0)+\dfrac{1}{2}\sigma_z^2\tr[\nabla^2\bar\pi(\bm z_0)\bm\Sigma_z]+o(\sigma_z^2)}\\
&=\dfrac{(g\bar\pi)(\bm z_0) +\dfrac{1}{2}\sigma_z^2 \tr[\nabla^2(g\bar\pi)(\bm z_0)\bm\Sigma_z]+o(\sigma_z^2)}{\bar\pi(\bm z_0)}\left[1-\dfrac{1}{2}\sigma_z^2 \dfrac{\tr[\nabla^2\bar\pi(\bm z_0)\bm\Sigma_z]}{\bar\pi(\bm z_0)}+o(\sigma_z^2)\right]\\
&=g(\bm z_0)+\dfrac{\sigma_z^2}{2\bar\pi(\bm z_0)}\left(\tr[\nabla^2(g\bar\pi)(\bm z_0)\bm\Sigma_z]-g(\bm z_0)\tr[\nabla^2\bar\pi(\bm z_0)\bm \Sigma_z]\right)+o(\sigma_z^2)\\
&=g(\bm z_0)+\sigma_z^2\left(\dfrac{\tr[\nabla^2g(\bm z_0)\bm\Sigma_z]}{2}+\dfrac{\tr[\nabla \bar\pi(\bm z_0)^T\bm\Sigma_z\nabla g(\bm z_0)]}{\bar\pi(\bm z_0)}\right)+o(\sigma_z^2).
\end{align*}
Thus, the bias is 
\begin{align*}
\mathbb E_{\theta_0}[\mathbb E_\pi[g(\bm \eta)|\bm z_0]]-g(\bm \eta_0)&=\int \mathbb E_\pi[g(\bm \eta)|\bm z_0] p(\bm z_0|\bm \eta_0)d\bm z_0-g(\bm \eta_0)\\
&=\int \left(g(\bm z_0)+\sigma_z^2\left(\dfrac{\tr[\nabla^2g(\bm z_0)\bm\Sigma_z]}{2}+\dfrac{\tr[\nabla \bar\pi(\bm z_0)^T\bm\Sigma_z\nabla g(\bm z_0)]}{\bar\pi(\bm z_0)}\right)+o(\sigma_z^2)\right)p(\bm z_0|\bm \eta_0)d\bm z_0-g(\bm \eta_0)\\
&=g(\bm \eta_0)+\sigma_z^2\left(\dfrac{\tr[\nabla^2g(\bm \eta_0)\bm\Sigma_z]}{2}+\dfrac{\tr[\nabla \bar\pi(\bm \eta_0)^T\bm\Sigma_z\nabla g(\bm \eta_0)]}{\bar\pi(\bm \eta_0)}\right)\\
&\ \ +\dfrac{1}{2}\sigma_z^2\tr[\nabla^2g(\bm \eta_0)\bm\Sigma_z]+o(\sigma_z^2)-g(\bm \eta_0)\\
&=\sigma_z^2\left(\tr[\nabla^2g(\bm \eta_0)\bm\Sigma_z]+\dfrac{\tr[\nabla \bar\pi(\bm \eta_0)^T\bm\Sigma_z\nabla g(\bm \eta_0)]}{\bar\pi(\bm \eta_0)}\right)+o(\sigma_z^2)\\
&\asymp \sigma_z^2.
\end{align*}
On the other hand,
$$\left(\mathbb E_\pi[g(\bm \eta)|\bm z_0]\right)^2=g^2(\bm z_0)+\sigma_z^2\left[g\tr[\nabla^2g(\bm z_0)\bm\Sigma_z]+\dfrac{2g\tr[\nabla \bar\pi(\bm z_0)^T\bm\Sigma_z\nabla g(\bm z_0)]}{\bar\pi(\bm z_0)}\right]+o(\sigma_z^2),$$
hence
\begin{align*}
\mathbb E_{\theta_0}\left[\left(\mathbb E_\pi[g(\bm \eta)|\bm z_0]\right)^2\right]&=\int \left(\mathbb E_\pi[g(\bm \eta)|\bm z_0]\right)^2p(\bm z_0|\bm \eta_0)d\bm z_0\\
&=\int \left(g^2(\bm z_0)+\sigma_z^2\left[g\tr[\nabla^2g(\bm z_0)\bm\Sigma_z]+\dfrac{2g\tr[\nabla \bar\pi(\bm z_0)^T\bm\Sigma_z\nabla g(\bm z_0)]}{\bar\pi(\bm z_0)}\right]+o(\sigma_z^2)\right)p(\bm z_0|\bm \eta_0)d\bm z_0\\
&=g^2(\bm \eta_0)+\sigma_z^2\left[g\tr[\nabla^2g(\bm \eta_0)\bm\Sigma_z]+\dfrac{2g\tr[\nabla \bar\pi(\bm \eta_0)^T\bm\Sigma_z\nabla g(\bm \eta_0)]}{\bar\pi(\bm \eta_0)}\right]+\dfrac{1}{2}\sigma_z^2\tr[\nabla^2(g^2)(\bm \eta_0)\bm\Sigma_z]+o(\sigma_z^2)\\
&=g^2(\bm \eta_0)+\sigma_z^2\left[2g\tr[\nabla^2g(\bm \eta_0)\bm\Sigma_z]+\dfrac{2g\tr[\nabla \bar\pi(\bm \eta_0)^T\bm\Sigma_z\nabla g(\bm \eta_0)]}{\bar\pi(\bm \eta_0)}+\tr[\nabla g(\bm \eta_0)^T\bm\Sigma_z\nabla g(\bm \eta_0)]\right]+o(\sigma_z^2).
\end{align*}
Therefore, the variance is
\begin{align*}
\text{Var}_{\theta_0}[\mathbb E_\pi[g(\bm \eta)|\bm z_0]]&=\mathbb E_{\theta_0}\left[\left(\mathbb E_\pi[g(\bm \eta)|\bm z_0]\right)^2\right]-\left[\mathbb E_{\theta_0}[\mathbb E_\pi[g(\bm \eta)|\bm z_0]|\bm \eta_0]\right]^2\\
&=\sigma_z^2\nabla g(\bm \eta_0)^T\bm\Sigma_z\nabla g(\bm \eta_0)+o(\sigma_z^2)\\
&\asymp \sigma_z^2.
\end{align*}

\section*{Proof of Theorem \ref{thm: overall-risk}}\label{pf:overall-risk}
From Theorem \ref{thm:bias-variance}, we have
$$\mathbb E_{\theta_0}[(\hat\theta_0^{(c)}-\theta_0)^2] = B_0^2+2B_0O_p(b^2)+O_p(b^4)+V_0+O_p\left(\dfrac{1}{Kd^b}\right).$$
On the other hand, 
\begin{align*}
\mathbb E[B_0] &= \mathbb E[\mathbb E_{\theta_0}[\mathbb E_\pi[g(\bm \eta)|\bm z_0]]-g(\bm \eta_0)]=\mathbb E[g(\bm \eta)]-\mathbb E[g(\bm \eta_0)]=0,\\
\mathbb E[B_0^2+V_0] &=Var[B_0]+\mathbb E[V_0]=Var[\mathbb E_{\theta_0}[\mathbb E_\pi[g(\bm \eta)|\bm z_0]]-\theta_0]+\mathbb E[Var_{\theta_0}[E_\pi[g(\bm \eta)|\bm z_0]]]\\
&=Var[\mathbb E_{\theta_0}[\mathbb E_\pi[g(\bm \eta)|\bm z_0]-\theta_0]]+\mathbb E[Var_{\theta_0}[E_\pi[g(\bm \eta)|\bm z_0]-\theta_0]]\\
&=Var[\mathbb E_\pi[\theta_0|\bm z_0]-\theta_0].
\end{align*}
Therefore,
$$\mathbb E[(\hat\theta_0^{(c)}-\theta_0)^2]=\mathbb E[\mathbb E_{\theta_0}[(\hat\theta_0^{(c)}-\theta_0)^2]]=Var[\mathbb E_\pi[\theta_0|\bm z_0]-\theta_0]+O_p(b^4)+O_p\left(\dfrac{1}{Kd^b}\right).$$

\section*{Proof of Theorem \ref{thm: theta-theta-only}}\label{pf:theta-theta-only}
The combined estimator can be written as
$$\hat\theta_0^{(c)}=\dfrac{\sum_{k=0}^Kw(\hat\theta_k,\hat\theta_0)\hat\theta_k}{\sum_{k=0}^Kw(\hat\theta_k,\hat\theta_0)}=\dfrac{\dfrac{1}{K+1}\sum_{k=0}^Kw(\hat\theta_k,\hat\theta_0)\hat\theta_k}{\dfrac{1}{K+1}\sum_{k=0}^Kw(\hat\theta_k,\hat\theta_0)}.$$
Let
$$q(\hat\theta)=\int p(\hat\theta)\pi(\theta)d\theta.$$
By law of large number, when $K\rightarrow \infty$, the numerator is
\begin{align*}
\dfrac{1}{K+1}\sum_{k=0}^Kw(\hat\theta_k,\hat\theta_0)\hat\theta_k&\xrightarrow{\enskip P\enskip} \mathbb E[w(\hat\theta,\hat\theta_0)\hat\theta]\\
&=\int\left(\dfrac{1}{q(\hat\theta)q(\hat\theta_0)}\int p(\hat\theta|\theta')p(\hat\theta_0|\theta')\pi(\theta')d\theta'\right)\hat\theta q(\hat\theta)d\hat\theta\\
&=\dfrac{1}{q(\hat\theta_0)}\int\left(\int g(\hat\theta)p(\hat\theta|\theta')d\hat\theta\right)p(\hat\theta_0|\theta')\pi(\theta')d\theta'\\
&=\dfrac{1}{q(\hat\theta_0)}\int\theta'p(\hat\theta_0|\theta')\pi(\theta')d\theta'\\
&=\int \theta'\pi(\theta'|\hat\theta_0)d\theta'.
\end{align*}
Similarly, for the denominator, we have
\begin{align*}
\dfrac{1}{K+1}\sum_{k=0}^Kw(\hat\theta_k,\hat\theta_0)&\xrightarrow{\enskip P\enskip} \mathbb E[w(\hat\theta,\hat\theta_0)]\\
&=\int\left(\dfrac{1}{q(\hat\theta)q(\hat\theta_0)}\int p(\hat\theta|\theta')p(\hat\theta_0|\theta')\pi(\theta')d\theta'\right)q(\hat\theta)d\hat\theta\\
&=\dfrac{1}{q(\hat\theta_0)}\int\left(\int p(\hat\theta|\theta')d\hat\theta\right)p(\hat\theta_0|\theta')\pi(\theta')d\theta'\\
&=\dfrac{1}{q(\hat\theta_0)}\int p(\hat\theta_0|\theta')\pi(\theta')d\theta'\\
&=1.
\end{align*}
Hence, the combined estimator would converge in probability to the Bayes estimator with squared loss. 
On one hand, by central limit theorem, the numerator has asymptotic normality, provided finite second moment. On the other hand, the denominator converges to 1 in probability. By Slutsky's theorem, the ratio is also asymptotically normal with the same rate as central limit theorem. Therefore,
$$\sqrt{K}(\hat\theta_0^{(c)} - \mathbb E[\theta|\hat\theta_0])=O_p(1).$$

\section*{Proof of Theorem \ref{thm:m-theta-only}}\label{pf:m-theta-only}
When $K\rightarrow \infty$, the target function in optimization is now
\begin{align*}
\dfrac{1}{K+1}\sum_{k=0}^Kw(\hat\theta_k,\hat\theta_0)f(\theta,\hat\theta_k)&\xrightarrow{\enskip P\enskip}\int \dfrac{1}{q(\hat\theta)q(\hat\theta_0)}\left(\int p(\hat\theta|\theta')p(\hat\theta_0|\theta')\pi(\theta')d\theta'\right)f(\theta,\hat\theta)q(\hat\theta)d\hat\theta\\
&=\dfrac{1}{q(\hat\theta_0)}\int\left(\int f(\theta,\hat\theta)p(\hat\theta|\theta')d\hat\theta\right)p(\hat\theta_0|\theta')\pi(\theta')d\theta'\\
&=\dfrac{1}{q(\hat\theta_0)}\int L(\theta,\theta')p(\hat\theta_0|\theta')\pi(\theta')d\theta' + \dfrac{1}{q(\hat\theta_0)}\int C(\theta')p(\hat\theta_0|\theta')\pi(\theta')d\theta'.
\end{align*}
The second component here is a constant with respect to $\theta$. Given the assumptions on $M(\theta, \hat\theta)$ and following the proof in Appendix \ref{pf:mest}, we have 
$$\argmin_\theta \sum_{k=0}^Kw(\hat\theta_k,\hat\theta_0)M(\theta,\hat\theta_k)\xrightarrow{\enskip P\enskip}\argmin_\theta \int L(\theta,\theta')p(\hat\theta_0|\theta')\pi(\theta')d\theta' =\Theta_0.$$
Here, we simply denote the target estimator $\Theta_0(\bm x_0; L)$ as $\Theta_0$.
Let $M'_\theta(\theta, \hat\theta) = \dfrac{\partial M(\theta, \hat\theta)}{\partial \theta}$, $M''_\theta(\theta, \hat\theta) = \dfrac{\partial^2 M(\theta, \hat\theta)}{\partial \theta^2}$ and $\theta_K^* = \argmin_\theta \sum_{k=0}^Kw(\hat\theta_k,\hat\theta_0)M(\theta,\hat\theta_k)$. Then we have
$$\sum_{k=0}^Kw(\hat\theta_k,\hat\theta_0)M'_\theta(\theta_K^*,\hat\theta_k)=0.$$
Since $\theta_K^*$ converges to $\Theta_0$, it's reasonable to expand the equation at $\Theta_0$.
$$\sum_{k=0}^Kw(\hat\theta_k,\hat\theta_0)M'_\theta(\Theta_0,\hat\theta_k)+ (\theta_K^* - \Theta_0)\sum_{k=0}^Kw(\hat\theta_k,\hat\theta_0)M''_\theta(\Theta_0,\hat\theta_k) + O_p((\theta_K^*-\Theta_0)^2)=0.$$
Then
$$\theta_K^* - \Theta_0 = -\dfrac{\sum_{k=0}^Kw(\hat\theta_k,\hat\theta_0)f'_\theta(\Theta_0,\hat\theta_k)}{\sum_{k=0}^Kw(\hat\theta_k,\hat\theta_0)f''_\theta(\Theta_0,\hat\theta_k)+O_p(\theta_K^*-\Theta_0)}.$$
Given the numerator has a finite variance, by central limit theorem and Slutsky's theorem, it is immediate that
$$\sqrt{K}(\theta_K^* - \Theta_0) = O_p(1).$$

\section*{Proof of Theorem \ref{thm: intrinsic_asymp_theta_only}}
Similar to the proof of Theorem \ref{thm: intrinsic_asymp}, when $\sigma_\theta^2\rightarrow 0$, we have
\begin{align*}
    \Theta_0[\bm x_0;\ell_2]&=\mathbb E_\pi[\theta_0\mid\hat\theta_0]\\
    &=\dfrac{\int \theta_0 p(\hat\theta_0\mid\theta_0)\pi(\theta_0)d\theta_0}{\int \theta_0 p(\hat\theta_0\mid\theta_0)\pi(\theta_0)d\theta_0}\\
    &=\dfrac{\hat\theta_0\pi(\hat\theta_0) + \dfrac{1}{2}\sigma_\theta^2(\hat\theta_0\pi(\hat\theta_0))''+o_p(\sigma_\theta^2)}{\pi(\hat\theta_0) + \dfrac{1}{2}\sigma_\theta^2(\pi(\hat\theta_0))''+o_p(\sigma_\theta^2)}\\
    &=\hat\theta_0 + \dfrac{1}{2}\sigma_\theta^2\left(\dfrac{(\hat\theta_0\pi(\hat\theta_0))''}{\pi(\hat\theta_0)}-\dfrac{\hat\theta_0(\pi(\hat\theta_0))''}{\pi(\hat\theta_0)}\right) + o_p(\sigma_\theta^2)\\
    &=\hat\theta_0 + \sigma_\theta^2\dfrac{\pi'(\hat\theta_0)}{\pi(\hat\theta_0)}+ o_p(\sigma_\theta^2).
\end{align*}
Therefore, for any fixed $\theta_0$
\begin{align*}
    \mathbb E_{\theta_0}[\Theta_0[\bm x_0;\ell_2]] &= \int \left(\hat\theta_0 + \sigma_\theta^2\dfrac{\pi'(\hat\theta_0)}{\pi(\hat\theta_0)}+ o_p(\sigma_\theta^2)\right)p(\hat\theta_0\mid\theta_0)d\hat\theta_0\\
    &=\theta_0 + \sigma_\theta^2\dfrac{\pi'(\theta_0)}{\pi(\theta_0)} + \dfrac{1}{2}\sigma_\theta^2\left(\theta_0 + \sigma_\theta^2\dfrac{\pi'(\theta_0)}{\pi(\theta_0)}\right)'' + o_p(\sigma_\theta^2)\\
    &=\theta_0 + \sigma_\theta^2\dfrac{\pi'(\theta_0)}{\pi(\theta_0)} + o_p(\sigma_\theta^2),
\end{align*}
and similarly,
$$\mathbb E_{\theta_0}[\Theta_0[\bm x_0;\ell_2]^2]=\theta_0^2 + 2\sigma_\theta^2\dfrac{\theta_0\pi'(\theta_0)}{\pi(\theta_0)} + \sigma_\theta^2 + o_p(\sigma_\theta^2).$$
Hence, the bias is
$$B_0(\theta_0) = \mathbb E_{\theta_0}[\Theta_0[\bm x_0;\ell_2]]-\theta_0 = \sigma_\theta^2\dfrac{\pi'(\theta_0)}{\pi(\theta_0)} + o_p(\sigma_\theta^2)\asymp \sigma_\theta^2,$$
and the variance is
$$V_0(\theta_0) = \mathbb E_{\theta_0}[\Theta_0[\bm x_0;\ell_2]^2]- \mathbb E_{\theta_0}[\Theta_0[\bm x_0;\ell_2]]^2 = \sigma_\theta^2 + o_p(\sigma_\theta^2)\asymp \sigma_\theta^2.$$

\section*{Proof of Theorem \ref{thm:theta-complete}}\label{pf:theta-complete}
The iGroup estimator is
$$\hat\theta^{(c)}_0 = \dfrac{\dfrac{1}{K+1}\sum_{k=0}^Kw(\hat\theta_k, \bm z_k;\hat\theta_0, 
\bm z_0)\hat\theta_k}{\dfrac{1}{K+1}\sum_{k=0}^Kw(\hat\theta_k, \bm z_k;\hat\theta_0, 
\bm z_0)}.$$
When $K\rightarrow \infty$, the numerator converges to
\begin{align*}
\dfrac{1}{K+1}\sum_{k=0}^Kw(\hat\theta_k, \bm z_k;\hat\theta_0, 
\bm z_0)\hat\theta_k &\xrightarrow{\enskip P \enskip}\mathbb E[w(\hat\theta, \bm z;\hat\theta_0, 
\bm z_0)\hat\theta]\\
&=\iint \mathcal K\left(\dfrac{\|\bm z-\bm z_0\|}{b}\right)\dfrac{\int p(\hat\theta|\theta)p(\hat\theta_0|\theta)p(\theta|\bm z_0)d\theta}{p(\hat\theta|\bm z)p(\hat\theta_0|\bm z_0)}\hat\theta p(\hat\theta, \bm z)d\hat\theta d\bm z\\
&=\dfrac{1}{p(\hat\theta_0|\bm z_0)}\left(\iint p(\hat\theta|\theta)p(\hat\theta_0|\theta)p(\theta|\bm z_0)\hat\theta d\theta d\hat\theta\right)\left(\int \mathcal K\left(\dfrac{\|\bm z-\bm z_0\|}{b}\right) p(\bm z)d\bm z\right)\\
&\xrightarrow{\enskip P\enskip}\dfrac{p(\bm z_0)}{p(\hat\theta_0|\bm z_0)}\iint p(\hat\theta|\theta)p(\hat\theta_0|\theta)p(\theta|\bm z_0)g(\hat\theta)d\theta d\hat\theta\\
&= \dfrac{p(\bm z_0)}{p(\hat\theta_0|\bm z_0)} \int\left(\int p(\hat\theta|\theta)g(\hat\theta)d\hat\theta\right)p(\hat\theta_0|\theta)p(\theta|\bm z_0)d\theta\\
&=\dfrac{p(\bm z_0)}{p(\hat\theta_0|\bm z_0)} \int\theta p(\hat\theta_0|\theta)p(\theta|\bm z_0)d\theta\\
&=p(\bm z_0)\int \theta p(\theta|\hat\theta_0, \bm z_0)d\theta.
\end{align*}
Similarly for the denominator, we have
\begin{align*}
\dfrac{1}{K+1}\sum_{k=0}^Kw(\hat\theta_k, \bm z_k;\hat\theta_0, 
\bm z_0)\xrightarrow{\enskip P \enskip}\mathbb E[w(\hat\theta, \bm z;\hat\theta_0, 
\bm z_0)]
\xrightarrow{\enskip P\enskip} p(\bm z_0).
\end{align*}
Therefore, the ratio converges to the target estimator $\Theta_0(\bm x_0, \bm z_0;\ell_2) = \mathbb E_\pi[\theta_0|\hat\theta_0, \bm z_0]$. Moreover, by central limit theorem, given bandwidth $b$, the numerator has an error of order $1/\sqrt{K}$:
$$\dfrac{1}{K+1}\sum_{k=0}^Kw(\hat\theta_k, \bm z_k;\hat\theta_0, 
\bm z_0)\hat\theta_k - \mathbb E[w(\hat\theta, \bm z;\hat\theta_0, 
\bm z_0)\hat\theta]=O_p\left(K^{-1/2}\right).$$
It brings a zero bias bias and a $O_p(1/K)$ variance. 
Now consider the kernel smoothing part, which yields a bias of order $b^2$ and a variance of order $1/(Kb^d)$. Therefore, the overall bias is of order $b^2$ and the overall variance is of order $O_p(K^{-1})+O_p(1/(Kb^d))=O_p(1/(Kb^d))$. Both the bias and variance is of the same order as in a $d$-dimensional kernel smoothing estimator. Hence, the optimal choice of the bandwidth is $\hat b \asymp K^{1/(d+4)}$, under which the optimal mean squared error is $O_p(K^{-4/d+4})$.

% which gives an optimal choice of bandwidth as $\hat b \asymp K^{1/(d+4)}$. Therefore, the overall mean squared error is of order $O_p(K^{-1}) + O_p(K^{-4/d+4}) = O_p(K^{-4/d+4})$.

\section*{Proof of Theorem \ref{thm:m-complete}}\label{pf:m-complete}
When $K\rightarrow\infty$, the combined objective function is
\begin{align*}
\dfrac{1}{K+1}\sum_{k=0}^Kw(\hat\theta_k,\bm z_k;\hat\theta_0,\bm z_0)M(\theta,\hat\theta_k)&\xrightarrow{\enskip P\enskip}\iint \mathcal K\left(\dfrac{\|\bm z-\bm z_0\|}{b}\right)\dfrac{\int p(\hat\theta|\theta')p(\hat\theta_0|\theta')p(\theta'|\bm z_0)d\theta'}{p(\hat\theta|\bm z)p(\hat\theta_0|\bm z_0)}M(\theta, \hat\theta)p(\hat\theta, \bm z)d\hat\theta d\bm z\\
&=\dfrac{1}{p(\hat\theta_0|\bm z_0)}\left(\iint p(\hat\theta|\theta')p(\hat\theta_0|\theta')p(\theta'|\bm z_0)M(\theta, \hat\theta)d\theta' d\hat\theta\right)\left(\int \mathcal K\left(\dfrac{\|\bm z-\bm z_0\|}{b}\right) p(\bm z)d\bm z\right)\\
&\xrightarrow{\enskip P\enskip}\dfrac{p(\bm z_0)}{p(\hat\theta_0|\bm z_0)}\iint p(\hat\theta|\theta')p(\hat\theta_0|\theta')p(\theta'|\bm z_0)M(\theta, \hat\theta)d\theta' d\hat\theta\\
&=\dfrac{p(\bm z_0)}{p(\hat\theta_0|\bm z_0)} \int\left(\int p(\hat\theta|\theta')M(\theta, \hat\theta)d\hat\theta\right)p(\hat\theta_0|\theta')p(\theta'|\bm z_0)d\theta'\\
&=\dfrac{p(\bm z_0)}{p(\hat\theta_0|\bm z_0)} \int(L(\theta,\theta')+C(\theta'))p(\hat\theta_0|\theta')p(\theta'|\bm z_0)d\theta'\\
&=p(\bm z_0)\int L(\theta,\theta')p(\theta'|\hat\theta_0,\bm z_0)d\theta' +  p(\bm z_0)\int C(\theta')p(\theta'|\hat\theta_0,\bm z_0)d\theta'.
\end{align*}
The second term here is a constant with respect to $\theta$. Given the convex and second-order differentiable condition of $M(\theta, \hat\theta)$, following the proof in Appendix \ref{pf:mest}, the iGroup estimator converges to the target estimator $\Theta_0(\bm x_0, \bm z_0; L)$ in probability. Given the consistency, one can expand the term at $\Theta_0$ as in Appendix \ref{pf:m-theta-only} (proof of Theorem \ref{thm:m-theta-only}) except that the weight is replaced by the full weight $w(\hat\theta, \bm z;\hat\theta_0, \bm z_0)$. By following the same argument in Appendix \ref{pf:theta-complete}, the numerator has an asymptotic mean squared error of order $K^{-4/(d+4)}$ when the bandwidth is chosen to be optimal $\hat b\asymp K^{-1/(d+4)}$. Provided the denominator converges in probability to its expectation by law of large number, we have $\tilde \theta_0^{(c)}$ has a mean squared error of order $K^{-4/(d+4)}$.

\section*{Proof of Theorem \ref{thm: intrinsic_asymp_complete_case}}
For fixed $\sigma_z^2$, the result follows immediately from the proof of Theorem \ref{thm: intrinsic_asymp_theta_only} except that $\Theta_0[\hat\theta_0;\ell2] = \int \theta_0 p(\hat\theta_0\mid\theta_0)\pi(\theta_0)d\theta_0 / \int p(\hat\theta_0\mid\theta_0)\pi(\theta_0)d\theta_0$ is replaced by $\Theta_0[\hat\theta_0, \bm z_0;\ell2] = \int \theta_0 p(\hat\theta_0\mid\theta_0)p(\bm z_0\mid \theta_0)\pi(\theta_0)d\theta_0 / \int p(\hat\theta_0\mid\theta_0)p(\bm z_0\mid \theta_0)\pi(\theta_0)d\theta_0$. For fixed $\sigma_\theta^2$, the result follows from the same proof in Theorem \ref{thm: intrinsic_asymp}.

\section*{Proof of Proposition \ref{thm: risk-decomposition-simple-l2}}
Consider the problem based on both information sets $\mathcal D_x$ and $\mathcal D_z$ and use the notation of the target estimator $\Theta_0 = \mathbb E_\pi[\theta_0\mid \bm x_0, \bm z_0]$. Notice that 
$$\delta_0(\mathcal D_x, \mathcal D_z) -\theta_0 = (\delta_0(\mathcal D_x, \mathcal D_z)-\Theta_0) + (\Theta_0 - \theta_0).$$
Given any fixed $(\bm x_0, \bm z_0)$, the first term $\delta_0(\mathcal D_x, \mathcal D_z)-\Theta_0$ depends on other individuals' observations $(\bm x_1, \dots, \bm x_K,\allowbreak \bm z_1,\dots, \bm z_K)$, while the second term $\Theta_0 - \theta_0$ depends on the true parameter $\theta_0$, which is treated as random. Therefore, these two terms are independent conditioned on $(\bm x_0, \bm z_0)$, and we have
\begin{align*}
    \mathbb E[(\delta_0(\mathcal D_x, \mathcal D_z)-\theta_0)^2\mid \bm x_0, \bm z_0] =&  \mathbb E[(\delta_0(\mathcal D_x, \mathcal D_z)-\Theta_0)^2\mid \bm x_0, \bm z_0] + \mathbb E[(\Theta_0-\theta_0)^2\mid \bm x_0, \bm z_0] \\
    & + 2\mathbb E[\delta_0(\mathcal D_x, \mathcal D_z)-\Theta_0\mid \bm x_0, \bm z_0]
    \mathbb E[\Theta_0-\theta_0\mid \bm x_0, \bm z_0].
\end{align*}
The last term $\mathbb E[\Theta_0-\theta_0\mid \bm x_0, \bm z_0]$ is zero. By taking expectation over $\bm x_0$ and $\bm z_0$, the decomposition is proved. Similar procedure for information set $\mathcal D_x$ or $\mathcal D_z$.

\section*{Proof of Proposition \ref{thm: risk-decomposition-simple-L}}
Consider the problem based on both information sets $\mathcal D_x$ and $\mathcal D_z$. We expand the loss function at $\hat\theta = \Theta_0$ such that
\begin{equation}
L(\delta_0, \theta_0) = L(\Theta_0, \theta_0) + (\delta_0 - \Theta_0)L'(\Theta_0, \theta_0) + \dfrac{1}{2}(\delta_0 - \Theta_0)^2L''(\Theta_0, \theta_0) + o((\delta_0 - \Theta_0)^2). \label{eq: loss-function-decompose}
\end{equation}
Notice that
$$\mathbb E_\pi[(\delta_0 - \Theta_0)L'(\Theta_0, \theta_0)\mid \bm x_0, \bm z_0] = \mathbb E_\pi[(\delta_0 - \Theta_0)\mid \bm x_0, \bm z_0]\mathbb E_\pi[L'(\Theta_0, \theta_0)\mid \bm x_0, \bm z_0]=0.$$
The first equality is because for fixed $\bm x_0$ and $\bm z_0$, $\delta_0 - \Theta_0$ depends on other individuals' observations and $L'(\Theta_0, \theta_0)$ depends on the value of $\theta_0$. The second equality is because $\Theta_0$ is the minimizer of $\mathbb E_\pi[L(\Theta_0, \theta_0)\mid \bm x_0, \bm z_0]$.
Hence, by taking expectation of Equation (\ref{eq: loss-function-decompose}), we have the desired decomposition. Similar procedure for information set $\mathcal D_x$ or $\mathcal D_z$.

\section*{Proof of Proposition \ref{thm:decomposition}}{\label{pf:decomposition}}
We first calculate the mean squared error of $\hat\theta_0^{(c)}$ conditioned on $(\hat\theta_0, \bm z_0)$. Notice that 
$$\hat\theta_0^{(c)}-\theta_0 = (\hat\theta_0^{(c)}-\Theta_0) + (\Theta_0 - \theta_0).$$
Given any fixed $(\hat\theta_0, \bm z_0)$, the first term $\hat\theta_0^{(c)}-\Theta_0$ is a function of other individuals' observations $(\hat\theta_1,\dots, \hat\theta_K,\bm z_1,\dots, \bm z_K)$ , while the second term $\Theta_0 - \theta_0$ is a function of the true parameter $\theta_0$, which is treated as random. Therefore, these two terms are independent conditioned on $(\hat\theta_0, \bm z_0)$, and we have
\begin{align*}
    \mathbb E[(\hat\theta_0^{(c)}-\theta_0)^2\mid \hat\theta_0, \bm z_0] &=  \mathbb E[(\hat\theta_0^{(c)}-\Theta_0)^2\mid \hat\theta_0, \bm z_0] + \mathbb E[(\Theta_0-\theta_0)^2\mid \hat\theta_0, \bm z_0] + 
    2\mathbb E[\hat\theta_0^{(c)}-\Theta_0\mid \hat\theta_0, \bm z_0]
    \mathbb E[\Theta_0-\theta_0\mid \hat\theta_0, \bm z_0].
\end{align*}
Furthermore, since $\Theta_0$ is a function of $\hat\theta_0$ and $\bm z_0$, we have
\begin{align*}
\mathbb E[(\Theta_0-\theta_0)^2\mid \hat\theta_0, \bm z_0]&=\Theta_0^2 - 2\Theta_0\mathbb E[\theta_0\mid \hat\theta_0, \bm z_0] + \mathbb E[\theta_0^2\mid \hat\theta_0, \bm z_0]\\
&=(\Theta_0 - \mathbb E[\theta_0\mid \hat\theta_0, \bm z_0])^2+\mathbb E[\theta_0^2\mid \hat\theta_0, \bm z_0] - \left(\mathbb E[\theta_0\mid \hat\theta_0, \bm z_0]\right)^2\\
&=(\Theta_0 - \mathbb E[\theta_0\mid \hat\theta_0, \bm z_0])^2 + \mathbb E[(\mathbb E[\theta\mid\hat\theta_0, \bm z_0]-\theta_0)^2\mid \hat\theta_0, \bm z_0].
\end{align*}
Hence, the conditional mean squared error of $\hat\theta_0^{(c)}$ becomes
\begin{align*}
    \mathbb E[(\hat\theta_0^{(c)}-\theta_0)^2\mid \hat\theta_0, \bm z_0]
    =&\mathbb E[(\hat\theta_0^{(c)}-\Theta_0)^2\mid \hat\theta_0, \bm z_0] + (\Theta_0 - \mathbb E[\theta_0\mid \hat\theta_0, \bm z_0])^2 + \mathbb E[(\mathbb E[\theta_0\mid\hat\theta_0, \bm z_0]-\theta_0)^2\mid \hat\theta_0, \bm z_0] \\
    &+ 2\mathbb E[\hat\theta_0^{(c)}-\Theta_0\mid \hat\theta_0, \bm z_0]
    \mathbb E[\Theta_0-\theta_0\mid \hat\theta_0, \bm z_0].
\end{align*}
By taking the expectation for $\hat\theta_0$ and $\bm z_0$ on both sides, we have
\begin{align}
    \mathbb E[(\hat\theta_0^{(c)}-\theta_0)^2]
    =&\mathbb E[(\hat\theta_0^{(c)}-\Theta_0)^2] + \mathbb E(\Theta_0 - \mathbb E[\theta_0\mid \hat\theta_0, \bm z_0])^2 + \mathbb E[(\mathbb E[\theta_0\mid\hat\theta_0, \bm z_0]-\theta_0)^2]\nonumber \\
    &+ 2\mathbb E\left\{\mathbb E[\hat\theta_0^{(c)}-\Theta_0\mid \hat\theta_0, \bm z_0]
    \mathbb E_{\hat\theta_0, \bm z_0}[\Theta_0-\theta_0\mid \hat\theta_0, \bm z_0]\right\}\nonumber\\
    =&R_{np}(\hat\theta_0^{(c)}) + R_{inf}(\hat\theta_0^{(c)}) + R_0
    + 2\mathbb E\left\{\mathbb E[\hat\theta_0^{(c)}-\Theta_0\mid \hat\theta_0, \bm z_0]
    \mathbb E[\Theta_0-\theta_0\mid \hat\theta_0, \bm z_0]\right\}\label{eq:decompose-last-term}.
\end{align}
The only thing left is to show the last term is 0. When $\Theta_0 =\Theta_0(\bm x_0, \bm z_0;\ell_2)= \mathbb E_\pi[\theta_0\mid\hat\theta_0, \bm z_0]$ as in Case 3 in Section \ref{sec:complete-case}, it is straightforward that 
$$\mathbb E[\Theta_0-\theta_0\mid \hat\theta_0, \bm z_0]=\Theta_0-\mathbb E_\pi[\theta_0\mid \hat\theta_0, \bm z_0]=0.$$
When $\Theta_0 = \Theta_0(\bm x_0;\ell_2) = \mathbb E_\pi[\theta_0\mid\hat\theta_0]$ as in Case 2 in Section \ref{sec:theta-only}, neither $\hat\theta_0^{(c)}$ nor $\Theta_0$ depends on $\bm z$, and we can prove it by taking expectation over $\bm z_0$ first as follows
\begin{align*}
\mathbb E\left\{\mathbb E[\hat\theta_0^{(c)}-\Theta_0\mid \hat\theta_0, \bm z_0]
    \mathbb E[\Theta_0-\theta_0\mid \hat\theta_0, \bm z_0]\right\}
    &=\mathbb E\left\{\mathbb E[\hat\theta_0^{(c)}-\Theta_0\mid \hat\theta_0]
    \mathbb E[\Theta_0-\theta_0\mid \hat\theta_0, \bm z_0]\right\}\\
    &=\mathbb E\left\{\mathbb E[\hat\theta_0^{(c)}-\Theta_0\mid \hat\theta_0]
    \mathbb E\left(\mathbb E[\Theta_0-\theta_0\mid \hat\theta_0, \bm z_0]\mid \hat\theta_0\right)\right\}\\
    &=\mathbb E\left\{\mathbb E[\hat\theta_0^{(c)}-\Theta_0\mid \hat\theta_0]
    \mathbb E\left(\Theta_0-\mathbb E[\theta_0\mid \hat\theta_0, \bm z_0]\mid \hat\theta_0\right)\right\}\\
    &=\mathbb E\left\{\mathbb E[\hat\theta_0^{(c)}-\Theta_0\mid \hat\theta_0]
    \left(\Theta_0-\mathbb E_\pi[\theta_0\mid \hat\theta_0]\right)\right\}\\
    &=0.
\end{align*}
Similarly, when $\Theta_0 =\Theta_0(\bm z_0;\ell_2)= \mathbb E_\pi[\theta_0\mid\bm z_0]$ as in Case 1 in Section \ref{sec:exogenous-variable}, it can be shown by taking expectation over $\hat\theta_0$ first. Therefore, for all cases we considered, the last term in (\ref{eq:decompose-last-term}) equals 0, and we have 
$$R(\hat\theta_0^{(c)}) = R_{np}(\hat\theta_0^{(c)}) + R_{inf}(\hat\theta_0^{(c)}) + R_0.$$

\section*{Proof of Proposition \ref{thm:m-decomposition}}\label{pf:m-decomposition}
Since $\tilde\theta_0^{(c)} \rightarrow \Theta_0(\hat\theta_0, \bm z_0)$ for all $\hat\theta_0$ and $\bm z_0$, the loss function can be expanded at $\Theta_0$ as follows
$$
L(\tilde\theta_0^{(c)}, \theta_0)=L(\Theta_0, \theta_0) + L'(\Theta_0, \theta_0)(\tilde\theta_0^{(c)}-\Theta_0) +\dfrac{1}{2}L''(\Theta_0, \theta_0)(\tilde\theta_0^{(c)}-\Theta_0)^2 + o_p((\tilde\theta_0^{(c)}-\Theta_0)^2).
$$
By taking expectation on both sides, we have
\begin{align}
    \mathbb E[L(\tilde\theta_0^{(c)}, \theta_0)] &= \mathbb E[L(\Theta_0, \theta_0)] + \dfrac{1}{2}\mathbb E[L''(\Theta_0, \theta_0)(\tilde\theta_0^{(c)}-\Theta_0)^2]+o(\mathbb E[(\tilde\theta_0^{(c)}-\Theta_0)^2]) + \mathbb E[L'(\Theta_0, \theta_0)(\tilde\theta_0^{(c)}-\Theta_0)]\nonumber\\
    &=(\tilde R_0 + \tilde R_{inf}(\tilde\theta_0^{(c)})) + \tilde R_{np}(\tilde\theta_0^{(c)}) + o(\mathbb E[(\tilde\theta_0^{(c)}-\Theta_0)^2]) + \mathbb E[L'(\Theta_0, \theta_0)(\tilde\theta_0^{(c)}-\Theta_0)].\label{eq:m-decompose-last-term}
\end{align}
It only needs to show the last term is 0. When in Case 3, $\Theta_0 = \Theta_0(\bm x_0, \bm z_0; L)$, and $L'(\Theta_0, \theta_0)$ and $(\tilde\theta_0^{(c)}-\Theta_0)$ are independent conditioned on $(\hat\theta_0, \bm z_0)$. Therefore,
$$\mathbb E[L'(\Theta_0, \theta_0)(\tilde\theta_0^{(c)}-\Theta_0)\mid \hat\theta_0, \bm z_0] = \mathbb E[L'(\Theta_0, \theta_0)\mid \hat\theta_0, \bm z_0]\mathbb E[\tilde\theta_0^{(c)}-\Theta_0\mid \hat\theta_0, \bm z_0].$$
The first term $\mathbb E[L'(\Theta_0, \theta_0)\mid \hat\theta_0, \bm z_0]$ equals 0 because $\Theta_0=\argmin_\theta \mathbb E_\pi[L(\theta, \theta_0)\mid \hat\theta_0, \bm z_0]$.
Similarly, in Case 1 and Case 2, the conditional expectations of $L'(\Theta_0, \theta_0)$ conditioned on $\bm z_0$ and $\hat\theta_0$ respectively are 0. Hence, the last term in (\ref{eq:m-decompose-last-term}) is always 0.

\section*{Proof of Proposition \ref{thm:lemma}}\label{pf:lemma}
Noticing that $\hat\theta_{(-k)}^{(c)}-\theta_k$ and $\hat\theta_{k}^{(c)}-\theta_k$ are independent with each other, we have
\begin{align*}
\mathbb E(\hat\theta_{(-k)}^{(c)}-\hat\theta_k)^2&=\mathbb E(\hat\theta_{(-k)}^{(c)}-\theta_k)^2 + \mathbb E(\hat\theta_k-\theta_k)^2 + 0\\
&=\mathbb E\left(\hat\theta^{(c)}_k-\theta_k + \dfrac{w(k;k)}{\sum_{l\neq k}w(l;k)}(\hat\theta_k^{(c)}-\hat\theta_k)\right)^2+\mathbb E(\hat\theta_k-\theta_k)^2\\
&=\mathbb E\left(\hat\theta_k^{(c)}-\theta_k\right)^2+\mathbb E(\hat\theta_k-\theta_k)^2 + O\left(\dfrac{1}{K}\right).
\end{align*}
Therefore, the expectation of cross validation error is
\begin{align*}
\mathbb E(CV_{\Omega_0}(b)) & = \dfrac{1}{|\Omega_0|}\sum_{k\in\Omega_0}\mathbb E(\hat\theta_{(-k)}^{(c)}-\hat\theta_k)^2 \\
&= \dfrac{1}{|\Omega_0|}\sum_{k\in\Omega_0}\left[\mathbb E\left(\hat\theta_k^{(c)}-\theta_k\right)^2+\mathbb E(\hat\theta_k-\theta_k)^2 + O\left(\dfrac{1}{K}\right)\right]\\
&=R_K(b) + \mathbb E_{\Omega_0}(\hat\theta-\theta)^2 + O\left(\dfrac{1}{K}\right),
\end{align*}
where the second term is averaging over all individuals in $\Omega_0$ and hence a constant term with respect to bandwidth.

\end{document}